\documentclass{article}
\usepackage[utf8]{inputenc}
\usepackage[english]{babel}
\usepackage[T1]{fontenc}
\usepackage{algorithm}
\usepackage{amsmath,amssymb,amsthm}
\usepackage{anyfontsize}
\usepackage{array}
\usepackage{authblk}
\usepackage{bbm}
\usepackage{bm}
\usepackage{blindtext}
\usepackage{booktabs}
\usepackage{diagbox}
\usepackage{dsfont}
\usepackage{enumitem}
\usepackage{fancyhdr} 
\usepackage{float}
\usepackage{gensymb}
\usepackage{glossaries}
\usepackage{graphicx}
\usepackage{ifthen}
\usepackage{lipsum}
\usepackage{listings}
\usepackage{mathrsfs}
\usepackage{mathtools}
\usepackage{multirow}
\usepackage{relsize}
\usepackage{stmaryrd}
\usepackage{subcaption}
\usepackage{textcomp}
\usepackage{ulem}
\usepackage{url}
\usepackage{appendix}
\usepackage{verbatim}
\usepackage[dvipsnames]{xcolor}
\definecolor{blue}{HTML}{1F77B4}
\definecolor{orange}{HTML}{FF7F0E}
\definecolor{green}{HTML}{2CA02C}
\definecolor{red}{HTML}{D62728}
\definecolor{purple}{HTML}{9467BD}
\definecolor{brown}{HTML}{8C564B}
\definecolor{pink}{HTML}{E377C2}
\definecolor{grey}{HTML}{7F7F7F}
\definecolor{yellow}{HTML}{BCBD22}
\definecolor{cyan}{HTML}{17BECF}
\definecolor{turquoise}{HTML}{3FE0D0}

\usepackage[section]{placeins}
\usepackage[authoryear]{natbib}
\usepackage[unicode=true,bookmarks=true,bookmarksnumbered=true,bookmarksopen=true,bookmarksopenlevel=1,breaklinks=true,pdfborder={0 0 1},backref=false,colorlinks=true,linkcolor=black, citecolor=blue, urlcolor=blue, filecolor=blue,pdfpagelayout=OneColumn, pdfnewwindow=true,pdfstartview=XYZ, plainpages=false,, hyperindex=true]{hyperref}
\usepackage[top=27mm, bottom=27mm, left=25mm, right=25mm]{geometry}
\usepackage{tikz}
\usetikzlibrary{shapes}
\usetikzlibrary{trees}
\usetikzlibrary{calc}

\newcolumntype{R}[1]{>{\raggedleft\arraybackslash }b{#1}}
\newcolumntype{L}[1]{>{\raggedright\arraybackslash }b{#1}}
\newcolumntype{C}[1]{>{\centering\arraybackslash }b{#1}}

\newcommand{\R}{\mathbb{R}}

\newcommand{\N}{\mathbb{N}}
\newcommand{\E}{\mathbb{E}}

\newcommand{\C}{\mathbb{C}}
\newcommand{\Prob}{\mathbb{P}}

\newcommand{\Tau}{\mathcal{T}}

\DeclarePairedDelimiterX{\expectarg}[1]{(}{)}{%
  \ifnum\currentgrouptype=16 \else\begingroup\fi
  \activatebar#1
  \ifnum\currentgrouptype=16 \else\endgroup\fi
}

\newcommand{\innermid}{\nonscript\;\delimsize\vert\nonscript\;}
\newcommand{\activatebar}{%
  \begingroup\lccode`\~=`\|
  \lowercase{\endgroup\let~}\innermid 
  \mathcode`|=\string"8000
}

\newcommand\tab[1][0.5cm]{\hspace*{#1}}

\theoremstyle{plain}

\theoremstyle{definition}
\newtheorem{defn}{\protect\definitionname}
\theoremstyle{plain}
\newtheorem{lem}{\protect\lemmaname}
\theoremstyle{plain}
\newtheorem{thm}{\protect\theoremname}
\theoremstyle{remark}

\theoremstyle{plain}
\newtheorem{prop}{\protect\propositionname}
\theoremstyle{plain}
\newtheorem{cor}{\protect\corollaryname}

\usepackage[algo2e,ruled,vlined]{algorithm2e}
\definecolor{algoColorKeyword}{named}{blue}
\definecolor{algoColorComment}{named}{olive}

\SetKwComment{Comment}{$\triangleright$\ }{}

\usepackage{xparse} \DeclarePairedDelimiterX{\Iintv}[1]{\llbracket}{\rrbracket}{\iintvargs{#1}}
\NewDocumentCommand{\iintvargs}{>{\SplitArgument{1}{,}}m}
{\iintvargsaux#1} %
\NewDocumentCommand{\iintvargsaux}{mm} {#1\mkern1.5mu..\mkern1.5mu#2}

\DeclareMathOperator*{\argmin}{arg\,min} % Jan Hlavacek
\newcommand{\Var}{\mathbb{V}\hspace{-0.08334em}\mathrm{ar}}
\newcommand{\Cov}{\mathbb{C}\mathrm{ov}}
\newcommand{\cqfd}{\hspace*{\fill}$\blacksquare$}

\makeatother
\providecommand{\assumptionname}{Assumption}
\providecommand{\corollaryname}{Corollary}
\providecommand{\definitionname}{Definition}
\providecommand{\lemmaname}{Lemma}
\providecommand{\propositionname}{Proposition}
\providecommand{\remarkname}{Remark}
\providecommand{\theoremname}{Theorem}

\usepackage{authblk}
\usepackage{orcidlink}
\definecolor{orcidlogocol}{HTML}{2E7E9F}

\title{\huge Fast simulation of Volterra processes using\\ random Fourier features with application to\\ the log-stationary fractional Brownian motion}

\author[1]{\orcidlinki{Othmane Zarhali}{0009-0007-5413-3342}}
\author[2]{\orcidlinki{Nicolas Langren\'e}{0000-0001-7601-4618}\thanks{Corresponding author, nicolaslangrene@bnbu.edu.cn}}

\affil[1]{\normalsize Ceremade, CNRS-UMR 7534, Université Paris-Dauphine PSL, Place du Maréchal de Lattre de Tassigny, 75016 Paris, France}
\affil[2]{\normalsize Guangdong Provincial/Zhuhai Key Laboratory of Interdisciplinary Research and Application for Data Science, Beijing Normal-Hong Kong Baptist University, Zhuhai, China} 

\date{\today}

\begin{document}

\maketitle

\begin{abstract}

A fast simulation framework for stochastic Volterra processes based on Random Fourier Features (RFF) approximation of the kernel is developed. After recalling the main properties of Volterra processes and reviewing existing numerical simulation methods, an accelerated scheme is introduced that relies on a spectral representation of the kernel. A particular attention is devoted to sampling from the kernel spectral density using Hamiltonian Monte Carlo, whose efficiency and stability bring more convenience than alternative sampling procedures. Quantitative guarantees for the proposed method are established, including moment estimates and strong error bounds. The approach is further compared with the kernel approximation by sum of exponentials commonly used in the literature, emphasizing the broader generality of the present framework. As a primary application, Volterra processes associated with the Stationary fractional Brownian Motion (S-fBM) kernel are investigated. A spectral density representation is derived in closed form using hypergeometric functions, a condition for positive definiteness is established and explicit truncation as well as Monte Carlo error bounds are provided for the RFF approximation in this setting. Numerical experiments in dimensions one and two illustrate the accuracy of the kernel approximation, the reliable recovery of model parameters and the competitiveness of the accelerated simulation scheme in terms of computational efficiency and both weak and strong error performance.

\end{abstract}

\noindent\textbf{Keywords:} Volterra processes, random Fourier features, Log S-fBM model, Fourier transform, hypergeometric functions, spectral density, spectral sampling.

%\tableofcontents

\section{Introduction}

Stochastic Volterra equations (SVEs) have emerged as a central framework 
for modelling non-Markovian dynamics with memory effects, with important developments in both theory and applications. Memory effects and non-Markovian dynamics play an important role across several applied domains. In finance, Volterra-type processes and rough models have been widely used to capture persistence and microstructural effects \citep{JaissonRosenbaum2016, ElEuchFukasawaRosenbaum2018, abijaber2019affine}. In biology, stochastic models with memory naturally arise in population dynamics \citep{Allen2010, bressloff2021stochastic1, bressloff2021stochastic2, anderson2015stochastic}. These works illustrate the broad relevance of Volterra-type frameworks for modelling complex systems with memory.
In contrast to classical Itô stochastic differential equations, 
the future evolution of an SVE depends on the entire past trajectory 
through a deterministic kernel, typically of convolution type. 
This non-local structure allows one to capture rough behavior, long-range dependence and hereditary phenomena arising in physics,  biology and more recently in mathematical finance.

A general stochastic Volterra equation of convolution type can be written as:
\begin{equation}\label{eq:intro_SVE}
X_t = X_0 + \int_0^t K(t-s)b(X_s)\,ds 
      + \int_0^t K(t-s)\sigma(X_s)\,dW_s,
\end{equation}
where $K$ is a deterministic kernel and $W$ is a Brownian motion. When the kernel exhibits a fractional singularity of the form $K(t) = t^{H-\frac12}$ with $H\in(0,1)$, 
the solution typically fails to be a semimartingale for $H<\frac12$, placing the model outside the scope of classical Itô calculus. A prominent example is the rough Heston model introduced in \citet{ElEuchRosenbaum2019}, which combines a fractional Volterra-type variance process with affine transform techniques. The model preserves tractability through a fractional Riccati equation while reproducing key stylized facts of implied volatility surfaces, thus establishing a deep connection between stochastic Volterra equations and modern mathematical finance.

A systematic weak solution theory for stochastic Volterra equations of convolution type has been developed in \citet*{AbiJaberCuchieroLarssonPulido2019}.  Their work establishes general existence, stability and approximation results and provides a robust framework for analyzing non-Markovian dynamics under minimal structural assumptions on the kernel.
In particular, they introduce techniques based on resolvent measures and affine transform methods that have become fundamental in the area.

A major breakthrough in connecting SVEs with tractable probabilistic structures was achieved in the theory of affine Volterra processes \citep{abijaber2019affine}. This work extends the classical affine process framework to the Volterra setting, establishing exponential–affine transform formulas and Riccati–Volterra equations that characterize the Fourier–Laplace transform of the solution. These results provide the theoretical backbone for rough volatility models such as the rough Heston model.

From a numerical and approximation perspective, 
\citet{bayer2023markovian} proposed and analyzed multi-factor Markovian approximations of stochastic Volterra equations with fractional kernels. They prove strong convergence rates and show how singular Volterra dynamics can be approximated by finite-dimensional Markovian systems, 
thus bridging the gap between non-Markovian theory and computational practice. More recently, significant analytical progress has been made on  pathwise uniqueness and strong solvability for singular kernels. \citet{Promel2025} establish pathwise uniqueness  for a broad class of singular stochastic Volterra equations, extending classical Yamada–Watanabe arguments to the non-Markovian setting. 
Their results clarify the interplay between kernel singularity and coefficient regularity in determining strong well-posedness.\\

Besides this singular power kernel, it can be of interest to study SVEs for other general classes of kernels.  The focus of this paper is the class of \textit{positive definite} kernels.  By definition,  a shift-invariant, isotropic kernel $K:[0,\infty)\rightarrow\mathbb{R}$ is said to be \textit{positive definite} in $\mathbb{R}^{d}$
if for any $N\geq1$, $(\mathbf{x}_{1},\ldots,\mathbf{x}_{N})\in\mathbb{R}^{d\times N}$
and $(z_{1},\ldots,z_{N})\in\mathbb{R}^{N}$,
\begin{equation}
\sum_{i=1}^{N}\sum_{j=1}^{N}z_{i}z_{j}K(\left\Vert \mathbf{x}_{i}-\mathbf{x}_{j}\right\Vert )\geq0.\label{eq:positive_definite}
\end{equation}

Let $\Phi_{d}$ be the set of kernels that are positive definite in $\mathbb{R}^{d}$ and let $\Phi_{\infty}$ be the set of kernels that are in $\Phi_{d}$ for all $d\geq1$.  The construction and characterization of positive definite kernels have been extensively studied in the literature, since they coincide with covariance functions \citep{yaglom1987correlation1,yaglom1987correlation2}.  According to Bochner's theorem \citep[Theorem~1.7.3]{sasvari2013characteristic},
$K\in\Phi_{d}$ if and only if it is the inverse Fourier transform of a finite nonnegative measure on $[0,\infty)$.
In particular, $K\in\Phi_{d}$ admits the probabilistic representation:
\begin{equation}
\forall\mathbf{u}\in\mathbb{R}^{d},\tab K(\Vert\mathbf{u}\Vert)=K(0)\mathbb{E}\left[\cos(\bm{\eta}.\mathbf{u})\right]\  \label{eq:Ecos}
\end{equation}
where the density of the random vector $\boldsymbol{\eta}$ is the
inverse Fourier transform of $K(\left\Vert \mathbf{.}\right\Vert )/K(0)$.
Such a kernel \eqref{eq:Ecos} can be well approximated by a sum of cosine terms; this is the \textit{random Fourier features} methodology \citep{rahimi2007random}, also known as  \textit{spectral sampling}.  It provides the following explicit kernel approximation: 
 \begin{align}
 & K(\Vert\mathbf{x}_{i}-\mathbf{x}_{j}\Vert)\approx\varphi(\mathbf{x}_{i})^{\top}\varphi(\mathbf{x}_{j}),\ \mathrm{where,\ for\ any\ }\mathbf{x}\in\mathbb{R}^{d},\label{eq:feature_mapping_1}\\
 & \varphi(\mathbf{x}):=\frac{\sqrt{K(0)}}{\sqrt{M}}\left[\begin{array}{cccccc}
\!\cos(\bm{\eta}_{1}^{\top}\mathbf{x})\! & \!\ldots\! & \!\cos(\bm{\eta}_{M}^{\top}\mathbf{x})\! & \!\sin(\bm{\eta}_{1}^{\top}\mathbf{x})\! & \!\ldots\! & \!\sin(\bm{\eta}_{M}^{\top}\mathbf{x})\!\end{array}\right]^{\top}\!\in\mathbb{R}^{2M}\label{eq:feature_mapping_2}
\end{align}
The main objective of our paper is to leverage the kernel approximation~\eqref{eq:feature_mapping_1} in the context of stochastic Volterra equations~\eqref{eq:intro_SVE}, in order to accelerate the numerical simulation of the solution of such SVEs.

This proposed methodology can be applied whenever the kernel function $K$ is positive definite. A specific example of interest is the so-called stationary fractional Brownian motion (S-fBM) kernel, defined by:
\begin{equation}
K(\tau)=\dfrac{\nu^2}{2} \left( 1 - \left( \dfrac{|\tau|}{T} \right)^{2H} \right)\mathbbm{1}_{\{|\tau|\leq T\}},\tab  \tau \in \R\label{eq:sfbm_kernel}
\end{equation}
The S-fBM covariance kernel appears in the context of the Log S-fBM model introduced in \citet{wu2022rough} to bridge the rough volatility regime for non vanishing Hurst exponent $H$ and the multifractal regime for vanishing Hurst exponent, characterized by the multifractal random walk model (see \citealt{muzy2000modelling,muzy2002multifractal,bacry2003log}). The Log S-fBM model has been used as a building block for the Nested Stationary fractional factor model (NSfFM, \citealt{zarhali2025rougher}) which accounts for the empirical findings in \citet{wu2022rough} that the Hurst exponent of single stocks are of order $H\simeq 0.01$ while the one of indices such as the S\&P500 are of order $H\simeq 0.1$. A multivariate extension of Log S-fBM was developed recently in \citet{zarhali2026multidimlogsfbm} by considering a multidimensional analogue of the Hurst exponent (the co-Hurst matrix) and of the so-called intermittency parameter (the co-intermittency matrix) to define a multidimensional stationary fractional Brownian motion (mS-fBM). The model is able to interpolate between multidimensional rough volatility and multifractal regimes.

Volterra processes with such kernel are worth studying for multiple reasons. This kernel encodes rough, scale-dependent dependence and reproduces the kind of rough behavior observed in many empirical time series (notably volatility in finance), while the truncation at scale $T$ introduces a finite memory horizon. This combination is realistic: it captures short-time roughness together with long-time decorrelation, which standard kernels  do not represent simultaneously. Furthermore, the compact support $|\tau| \leq T$ makes the model computationally tractable yet non-Markovian. Unlike classical Markov processes, Volterra processes retain memory, which is crucial for modelling persistence and clustering effects. As a further contribution to this literature, our paper is going to define, analyze and simulate numerically a Volterra process with the S-fBM kernel~\eqref{eq:sfbm_kernel}.\\

Specifically, our paper makes the following contributions:
\begin{itemize}
\item We introduce a fast simulation scheme for Volterra processes based on random Fourier features.  We propose to perform the spectral sampling using the Hamiltonian Monte Carlo method.  Moment estimates and strong error bounds of the proposed numerical scheme are established.
\item We compare our kernel decomposition by random Fourier features to the traditional multifactor approximation of Volterra kernels by a sum of exponential functions.  In particular, we prove mathematically that our proposed decomposition can be successfully applied to a much larger class of kernel functions, including kernels with compact support such as the S-fBM kernel.
\item We apply the proposed numerical scheme to Volterra processes defined by an S-fBM convolution kernel.  In doing so,  we establish a new formula for the spectral distribution of isotropic S-fBM kernels in terms of hypergeometric functions,  along with an explicit condition to ensure its positive definiteness.  Moreover,  we establish an explicit error bound of the RFF simulation scheme tailored to the case of the S-fBM kernel. 
\item We illustrate numerically the competitiveness of fast Volterra simulation by RFF in terms of both computational runtime and error (weak and strong).
\end{itemize}

The paper is organized as follows.  Section~\ref{sec:volterra} recalls the main properties of Volterra processes and the existing methods to simulate them.  Section~\ref{sec:simulation} introduces the fast Volterra simulation scheme built upon the random Fourier features approximation of kernel functions and Section~\ref{sec:multifactor} compares it to multifactor approximation of kernel functions by sums of exponentials. Section~\ref{sec:logsfbm} defines the log-stationary fractional Brownian motion model and Section~\ref{sec:spectralsfbm} establishes spectral properties of the S-fBM kernel. Section~\ref{sec:numerical} provides detailed numerical results about the RFF Volterra scheme. Finally,  Section~\ref{sec:conclusion} concludes the paper.

\section{Fast simulation of Volterra processes using Random Fourier Features\label{sec:simulation}}

\subsection{Volterra processes\label{sec:volterra}}

Let $(\Omega,\mathcal{F},(\mathcal{F}_t)_{t\ge0},\mathbb{P})$ be a filtered probability space and $(W_t)_{t\ge0}$ be a standard $(\mathcal{F}_t)$-Brownian motion. Stochastic Volterra processes form a broad class of non-Markovian models defined through integral representations of the form:
\begin{eqnarray}
\label{eq:volterradynamic}
X_t =X_0+\int_0^t K(t,s) \sigma(s,X_s)\,\mathrm{d}W_s
\footnotemark
\end{eqnarray}
\footnotetext{$^\ast$ The initial condition of the dynamics of Eq.~\eqref{eq:volterradynamic} as well as all the dynamics introduced in the paper are assumed to be random variables in  $\mathcal{L}^{\infty}$.}
where \(\sigma \colon \mathbb{R} \to \mathbb{R}\) is a globally Lipschitz continuous function and \(K \colon \mathbb{R}_{+}^2 \to \mathbb{R}_+\) is a nonnegative, integrable kernel function. These processes naturally capture memory effects and long-range dependence that arise in turbulence, finance, statistical physics, and other application fields.

A prominent example is the Riemann-Liouville fractional Brownian motion (fBM) with Hurst exponent $H \in (0,1)$, which admits the following Volterra representation, also known as the Mandelbrot-Van-Ness representation:
\begin{eqnarray}
B_t^H = \int_0^t K_H(t,s)\,\mathrm{d}W_s,
\end{eqnarray}
where $K_H$ is a singular kernel behaving like $(t-s)^{H-\frac12}$, see \citet{mandelbrot1968fractional, decreusefond1998fractional}. This representation has motivated the study of rough paths and stochastic calculus for Volterra-type processes \citep{coutin2000rough}.

In mathematical finance, stochastic Volterra processes gained significant attention with the advent of rough volatility models, which generalize classical Markovian stochastic volatility models using Volterra-driven processes. For example, the rough Bergomi model \citep*{bayer2016pricing} defines the log-volatility process $\xi$ as:
\begin{eqnarray}
\log \left(\xi_t\right) = \log \left(\xi_0\right) + \eta \int_0^t (t-s)^{H-\frac12}\,\mathrm{d}W_s,
\end{eqnarray}
with $H<\frac12$, capturing the empirically observed roughness of volatility \citep{bayer2016pricing, gatheral2018volatility}. More generally, affine and non-affine stochastic Volterra equations (SVEs) of the form:
\begin{eqnarray}\label{eq:sve}
X_t = X_0 + \int_0^t K(t,s) b(X_s)\,\mathrm{d}s + \int_0^t K(t,s) \sigma(X_s)\,\mathrm{d}W_s
\end{eqnarray}
where \(b \colon  \mathbb{R} \to \mathbb{R}\), have been studied extensively. Existence, uniqueness and regularity results are provided in \citet{abijaber2019affine}, where applications to rough volatility and term-structure modelling are developed.\\

The motivation for studying stochastic Volterra processes arises from their ability to capture memory, roughness and long-range dependence (see \citealt{mandelbrot1968fractional, coutin2000rough}), which are ubiquitous features in natural and financial systems. Traditional Markovian stochastic models, such as the classical Black-Scholes or Heston models, fail to reproduce empirical stylized facts such as volatility clustering, roughness of realized volatility and persistence of correlations (see \citealt{cont2001empirical, gatheral2018volatility}). Stochastic Volterra processes, through their kernel-based integral representation, naturally incorporate these phenomena by allowing the current state to depend on the entire history of the driving noise \citep{coutin2000rough,biagini2008stochastic}. In particular, the representation of fractional Brownian motion as a Volterra integral highlights how singular kernels can generate rough paths with Hurst exponent $H < 1/2$, providing a mathematically rigorous foundation for rough volatility models \citep{mandelbrot1968fractional, bayer2016pricing}. Moreover, the flexibility of Volterra processes in multidimensional and non-linear settings enables modelling of cross-asset dependence, multifractality and other features observed in financial markets. This combination of theoretical richness and empirical relevance has driven extensive research into stochastic Volterra processes in recent years.

\subsection{Simulation of Volterra processes: standard approaches}

From a numerical perspective, Volterra kernels pose challenges for simulation. Classical schemes based on discretization incur high computational costs, motivating fast methods such as hybrid schemes \citep{bennedsen2017hybrid}, Markovian approximations \citep{abijaber2019multifactor,zhu2021rough,bayer2023markovian} and spectral or Fourier-based approaches \citep{harms2019strong}. These methods aim to preserve the covariance structure while achieving tractability.\\
%In the sequel, we are interested in Gaussian Volterra process and expand the multiple methods to simulate their sample paths advocating their benefits and limitations. We start from the standard Cholesky decomposition of the covariance matrix, then we evoke the Euler discretization of the dynamics of Eq.~\eqref{eq:volterradynamic}. Finally, we dedicate the last section to develop the simulation method based on Random Fourier Features emphasizing their advantages with respect to the previous two methods.

\subsubsection{The Gaussian Volterra case: standard covariance matrix factorization}

Consider first the simpler case of Gaussian Volterra processes \citep{mishura2022gaussian,dinunno2022volterra}, defined by:
\begin{eqnarray}
X_t = X_0 + \int_0^t K(t,s) b(s)\,\mathrm{d}s + \int_0^t K(t,s) \sigma(s)\,\mathrm{d}W_s
\end{eqnarray}
where \(\sigma \colon \mathbb{R}_+  \to \mathbb{R}\).
We are interested in the case where $K:\mathcal{D} \to \mathbb{R}$ is a measurable kernel function and $\mathcal{D}\subset \R_+^2$  as for instance in \citet{biagini2008stochastic}. Since $b$ and $\sigma$ do not depend on $X_s$ here, unlike the more general SVE~\eqref{eq:sve}, the resulting process $X$ has a Gaussian distribution. Being Gaussian, $X$ is fully characterized by its mean and covariance:
\begin{eqnarray}
    \begin{cases}\label{eq:gaussian_volterra}
    \mathbb{E}[X_t] =  \mathbb{E}[X_0]+ \int_0^t K(t,u) b(u)\,\mathrm{d}u\\
    \Cov(X_t,X_s) = \int_0^{t \wedge s} K(t,u)K(s,u)\sigma(u)^2\, \mathrm{d}u.
\end{cases}
\end{eqnarray}
Gaussian Volterra processes have become central in quantitative finance, particularly in rough volatility modelling because they lead to tractable analytical computations of covariance and correlation structures as exploited in \citet{bayer2016pricing, gatheral2018volatility}. Their Gaussianity ensures that linear functionals, such as option pricing kernels, remain analytically or numerically manageable, while the kernel choice provides flexibility to calibrate to market data or introduce desired memory properties.\\
Many methodologies exist to simulate Gaussian processes \citep{liu2019advances}, provided the mean and covariance~\eqref{eq:gaussian_volterra} can be computed analytically. 
The most standard simulation approach relies on discretizing the time interval, constructing the associated covariance matrix, factorizing it using the Cholesky decomposition and multiplying it by a vector of i.i.d. Gaussian simulations. This method is both straightforward and exact, making it attractive for small to moderate grid sizes and for benchmarking purposes. However, the Cholesky-based simulation suffers from significant computational and practical disadvantages. First, its computational complexity scales cubically in the number of time steps, $\mathcal{O}(N^3)$, while memory requirements scale as $\mathcal{O}(N^2)$, rendering it infeasible for fine discretizations or long time horizons commonly required in applications such as Monte Carlo pricing or calibration. Second, the method lacks flexibility in adaptive or non-uniform grids, since any change in the time discretization requires recomputation of the full covariance matrix and its factorization. Third, numerical stability issues may arise when the covariance matrix becomes ill-conditioned, as is often the case for kernels with strong singularities near the diagonal, such as those appearing in fractional or rough Volterra models. These limitations have motivated the development of alternative simulation techniques, including hybrid schemes, convolution-based methods and Markovian approximations, which trade exactness for substantial gains in efficiency while maintaining controlled accuracy. 

\subsubsection{Euler discretization scheme }
\label{sec:eulerschemeSVE}
Another common numerical approach for simulating stochastic Volterra  differential equations of the form of Eq.~\eqref{eq:volterradynamic}
is based on an Euler-type discretization of the stochastic integral, where the process is evaluated on a time grid and the integral is approximated by a weighted sum of Brownian increments. In this scheme, the kernel $K(t,s)$ is discretized explicitly and the diffusion coefficient $\sigma(X_s)$ is frozen at previous grid points, leading to a natural extension of the Euler--Maruyama method to non-Markovian settings. This can be formulated as follows:\\
\RestyleAlgo{ruled}

\SetKwComment{Comment}{/* }{ */}
\SetKwFor{For}{\textbf{\color{black}for}}{\textbf{\color{black}do}}{\textbf{\color{black}end}}

\begin{algorithm}[hbt!]
\caption{Euler scheme of Volterra process of Eq.~\eqref{eq:volterradynamic}}
\label{alg:slow_euler}

\textbf{\color{black}Input:} Time grid $t_{0}=0<t_{1}<t_{2}<\cdots<t_{N}=T$;\newline
\textbf{\color{black}Input:} $N$ i.i.d. standard Gaussian random variables
$G_{1}, G_{2}, \ldots, G_{N}$\;

\textbf{\color{black}Output:} One simulated path $(\tilde{X}_{t_n})_{n=0}^{N}$\;

$\tilde{X}_{t_{0}} \gets x_{0}$\;

\For{$t_{n}=t_{0}, t_{1}, \ldots, t_{N-1}$}{
    $\tilde{X}_{t_{0},t_{n}} \gets x_{0}$\;
    
    \For{$t_{i}=t_{0}, t_{1}, \ldots, t_{n-1}$}{
        $\tilde{X}_{t_{i+1},t_{n}}
        \gets
        \tilde{X}_{t_{i},t_{n}}
        +
        K(t_{n},t_{i})\,\sigma(t_{i},\tilde{X}_{t_{i}})
        \sqrt{t_{i+1}-t_{i}}\,G_{i+1}$\;
    }
    
    $\tilde{X}_{t_{n}} \gets \tilde{X}_{t_{n},t_{n}}$\;
}

\end{algorithm}

The main advantage of this approach lies in its simplicity and broad applicability, as it accommodates general kernels and nonlinear diffusion coefficients without requiring Gaussianity or explicit covariance structures. Nevertheless, the Euler scheme for Volterra equations exhibits several important disadvantages. First, the method typically suffers from low strong convergence rates, which are further deteriorated by the presence of kernel singularities near the diagonal, as encountered in fractional or rough volatility models. In such cases, convergence rates may be significantly slower than the classical $1/2$ rate of standard SDEs. Second, the scheme is computationally expensive, as each time step requires summation over all past increments, leading to $\mathcal{O}(N^2)$ complexity for a single path simulation, which becomes prohibitive for large-scale Monte Carlo applications. Third, error propagation is more severe than in Markovian systems, since inaccuracies introduced at early time steps persist and influence the entire future trajectory through the memory structure of the Volterra kernel. Moreover, stability issues may arise when $\sigma$ is highly nonlinear or only locally Lipschitz, potentially leading to bias or divergence unless very fine time discretizations are employed. These limitations have spurred the development of improved schemes, such as higher-order discretizations, truncated memory methods and hybrid or multilevel approaches, which aim to balance computational efficiency with accuracy while preserving the essential non-Markovian features of the Volterra dynamics.

\subsection{Random Fourier Features representation}
\label{subsec:rff_representation_sfbm}

In the context of SVEs, RFFs allow the kernel-driven memory term to be expressed as a finite sum of stochastic integrals with random weights, providing a tractable and accurate Monte Carlo approximation. Here, we incorporate the RFF in the discretization scheme of stochastic Volterra equations. After exposing the classical Euler scheme in Section \ref{sec:eulerschemeSVE} and evoking its limitation particularly when it comes to generate long sample paths, the focus is placed in the acceleration scheme with RFF as well as the associated estimates. For that sake, we start by presenting the Random Fourier Features representation of the kernel formulated as a proper Monte Carlo estimate sampled from the associated spectral density, then we present the sampling procedure used in the upcoming theoretical error bounds for an arbitrary kernel as well as numerical experiments.
\\\\
Random Fourier features (RFF), introduced in \citet{rahimi2007random}, provide a scalable approach to approximate shift-invariant kernels by representing them as expectations over randomized feature maps. By sampling a finite number of random frequencies, the kernel can be approximated efficiently, reducing the computational complexity of kernel methods from quadratic to linear in the number of data points. Subsequent works have extended this idea, including the Fastfood algorithm for faster computation \citep{le2013fastfood} and rigorous error bounds for Monte Carlo approximations \citep{sutherland2015error}. RFF have been applied for Gaussian processes \citep{lazaro2010sparse} and signal processing \citep{vedaldi2012efficient}, enabling low-rank approximations and efficient simulation of high-dimensional stochastic systems.\\\\
In RFF simulations, the spectral density (for which we have derived the main properties in Section~\ref{subsec:fourier_transform} for the S-fBM kernel) plays a central role by determining how the random frequencies used to approximate a shift-invariant kernel are sampled. By Bochner’s theorem, any continuous, positive-definite, shift-invariant kernel can be expressed as the Fourier transform of a nonnegative measure, whose density is the kernel’s spectral density. Sampling frequencies from this spectral density ensures that the expected inner product of the resulting random features matches the target kernel function. Consequently, the shape of the spectral density directly controls which frequency components are emphasized in the approximation, influencing smoothness, length-scale behavior and approximation accuracy of the simulated kernel. A correct choice and sampling of the spectral density are therefore essential for the Random Fourier Features map to provide a consistent kernel approximation.\\\\
Quantitatively speaking, consider any positive definite kernel \(K\) extended to the entire real line symmetrically: for all \(u \ge 0\), we set \(K(-u) = K(u)\). This ensures that \(K\) remains well-defined and symmetric. Under these assumptions, it is known that the kernel admits the following RFF representation (see \citealt{rahimi2007random}):
\begin{eqnarray}
K(u) = K(0) \, \mathbb{E}[\cos(\eta u)],
\label{eq:random_fourier_features_expression}
\end{eqnarray}
where the density of the random variable $\eta$ is precisely the spectral density of $\frac{K(.)}{K(0)} $. The latter quantity can be approximated
by Monte Carlo. The classical approach would be to consider $\eta_{1}$, $\eta_{2}$, $\ldots$, $\eta_{M}$
be $M$ independent and identically distributed simulations of $\eta$,
independent of $W$. From equation~\eqref{eq:random_fourier_features_expression},
a Monte Carlo approximation of the kernel $K$ is given by:
\begin{eqnarray}
\label{eq:MCrandom_fourier_features_expression}
    \Hat{K}_M(u)=\frac{K(0)}{M}\sum_{m=1}^{M}\cos(\eta_{m}u)
\end{eqnarray}
The aim of the upcoming section is to detail the sampling algorithm from the spectral density, exposing theoretical sampling guarantees of the quantity of Eq.~\eqref{eq:MCrandom_fourier_features_expression}. In particular, in Section \ref{subsec:error_bounds_rff_sfbm}, we explicit Monte Carlo error bounds supposing that the sampled random features are not independent but having a Markovian structure (HMC method). This is favorable in cases where the density from which the sampling is performed is known. Furthermore, implementation details are given in the numerical section (see Section \ref{sec:numerical}).

\subsection{Sampling from the spectral density}
\label{subsec:sampling_algo_hmc}

As long as the spectral density of the underlying kernel is available in closed form up to an unknown multiplicative constant, sampling can be performed using either acceptance–rejection (AR) methods or Markov chain Monte Carlo (MCMC) techniques. Although AR sampling is exact and non-iterative, its efficiency relies critically on the construction of a proposal density that uniformly dominates the target density and on the choice of a tight upper-bounding constant. In moderate to high dimensions, these requirements can result in vanishing acceptance probabilities, making the AR approach impractical. MCMC methods, by contrast, are explicitly designed to operate with unnormalized target densities. Within this class, Hamiltonian Monte Carlo (HMC) offers significant advantages over random-walk Metropolis–Hastings (MH). While MH proposals are typically local and suffer from diffusive random-walk behavior—resulting in slow mixing and high autocorrelation particularly in high dimension, HMC exploits gradient information of the log-target density to construct coherent, long-range proposals via approximate Hamiltonian dynamics. This mechanism preserves high acceptance rates while enabling efficient exploration along level sets of the target distribution. As a result, HMC achieves substantially higher effective sample sizes per unit computational cost than both AR and standard MH methods, making it particularly well suited for scalable Monte Carlo when only the unnormalized target density is available.
\subsubsection{Hamiltonian Monte Carlo}
\label{sec:HMC}
HMC was originally introduced in \citet{duane1987hybrid} as a sampling method inspired by Hamiltonian dynamics for lattice field theory. The key idea is to augment the target distribution with auxiliary momentum variables and to simulate a deterministic Hamiltonian flow that preserves the invariant measure, thereby producing distant proposals with high acceptance probability. The practical implementation relies on a numerical integrator, typically the leapfrog scheme, which ensures approximate energy conservation and detailed balance. This foundational work laid the theoretical and algorithmic basis for modern variants of HMC widely used in Bayesian inference.\\

We start by recalling the HMC algorithm. First, we fix some notations. Let \( \pi(x) \) be the target spectral probability density supported in  \(\mathbb{R}^d\). We define the potential energy as:
\begin{eqnarray}
    \mathcal{U}(x) = -\log \pi(x)
\end{eqnarray}
and introduce auxiliary momentum \(p\in\mathbb{R}^d\) with kinetic energy:
\begin{eqnarray}
    \mathcal{E}(p) = \tfrac{1}{2} p^\top M^{-1} p
\end{eqnarray}
where \(M\) is a symmetric positive-definite mass matrix (often \(M = I\)). The Hamiltonian is:
\begin{eqnarray}
    H(x,p) = \mathcal{U}(x) + \mathcal{E}(p)
\end{eqnarray}
Its dynamics follow Hamilton's equations:
\begin{eqnarray}
    \begin{cases}
        \frac{dx}{dt} = \nabla_p H(x,p) \\
        \frac{dp}{dt} = -\nabla_x H(x,p) 
    \end{cases}
\end{eqnarray}
By definition, one has that:
\begin{eqnarray}
    \begin{cases}
         \nabla_p H(x,p) = M^{-1} p\\
        \nabla_x H(x,p) = \nabla \mathcal{U}(x)
    \end{cases},\nonumber
\end{eqnarray}
which means that the Hamiltonian dynamics can be reformulated as:
\begin{eqnarray}
    \begin{cases}
        \frac{dx}{dt} =  M^{-1} p\\
        \frac{dp}{dt} = -\nabla \mathcal{U}(x)
    \end{cases}
\end{eqnarray}
In practice, the Hamiltonian dynamics is discretized using the leapfrog integrator with step size \(\varepsilon>0\) and \(L\) steps formulated as:
\begin{eqnarray}
    \begin{cases}
        p\!\left(t+\tfrac{\varepsilon}{2}\right) = p(t) - \tfrac{\varepsilon}{2}\,\nabla U\big(x(t)\big),\\
x(t+\varepsilon) = x(t) + \varepsilon\, M^{-1} p\!\left(t+\tfrac{\varepsilon}{2}\right)\\
p(t+\varepsilon) = p\!\left(t+\tfrac{\varepsilon}{2}\right) - \tfrac{\varepsilon}{2}\,\nabla \mathcal{U}\big(x(t+\varepsilon)\big)
    \end{cases}
\end{eqnarray}
This is framed in Algorithm~\ref{alg:hmc}.

\RestyleAlgo{ruled}
\SetKwComment{Comment}{/* }{ */}
\SetKwFor{For}{\textbf{\color{black}for}}{\textbf{\color{black}do}}{\textbf{\color{black}end}}
\SetKwFor{If}{\textbf{\color{black}if}}{\textbf{\color{black}then}}
{\textbf{\color{black}end}}

\begin{algorithm}[hbt!]
\caption{Hamiltonian Monte Carlo (HMC)}
\label{alg:hmc}

\textbf{\color{black}Input:} Initial position $x^{(0)}$;\newline
\textbf{\color{black}Input:} Step size $\varepsilon$, number of leapfrog steps $L$;\newline
\textbf{\color{black}Input:} Mass matrix $M$, number of iterations $N$\;

\textbf{\color{black}Output:} Samples $\{x^{(n)}\}_{n=1}^{N}$\;

\For{$n = 0,1,\ldots,N-1$}{
    Sample momentum $p^{\ast} \sim \mathcal{N}(0,M)$\;
    
    $x \gets x^{(n)}$\;
    $p \gets p^{\ast}$\;
    
    \Comment{\textcolor{black}{Leapfrog integration of Hamiltonian dynamics}}
    $p \gets p - \tfrac{\varepsilon}{2}\,\nabla \mathcal{U}(x)$\;
    
    \For{$\ell = 1,2,\ldots,L$}{
        $x \gets x + \varepsilon\,M^{-1}p$\;
        \If{$\ell < L$}{
            $p \gets p - \varepsilon\,\nabla \mathcal{U}(x)$\;
        }
    }
    
    $p \gets p - \tfrac{\varepsilon}{2}\,\nabla \mathcal{U}(x)$\;
    $p \gets -p$ \Comment{\textcolor{black}{Momentum flip for reversibility}}\;
    
    $\alpha \gets \min\!\left(1,\;
    \exp\!\big(-H(x,p) + H(x^{(n)},p^{\ast})\big)\right)$\;
    
    With probability $\alpha$, set $x^{(n+1)} \gets x$;\newline
    otherwise set $x^{(n+1)} \gets x^{(n)}$\;
}
\end{algorithm}

Appendix~\ref{sec:hmc}  analyzes the HMC dynamics by viewing it as a continuous-time Markov process and studying its long-term behavior. First, it derives the infinitesimal generator of the process. The generator naturally splits into two components: a deterministic part corresponding to the Hamiltonian flow, which moves the state smoothly along trajectories that conserve energy and a stochastic jump part corresponding to momentum updates governed by an acceptance rule. Next, the appendix shows that the Hamiltonian itself can be used to construct a Lyapunov function, meaning that the process has a natural tendency to return toward regions of moderate energy. This establishes a drift condition for the generator, which implies stability of the dynamics. Finally, these stability properties are used to prove geometric ergodicity of the Markov process, leading to exponential decay of correlations between samples.

\subsubsection{Discretization scheme with RFF approximation}
Herein, we are interested in the discretization scheme of Eq.~\eqref{eq:volterradynamic} using Random Fourier Features. For any positive definite, symmetric kernel, Eq.~\eqref{eq:random_fourier_features_expression} holds . For many kernels of practical interest, the distribution of the random projection \(\eta\) can be computed explicitly, allowing for straightforward simulation, as is the case of the S-fBM kernel developed in Section~\ref{subsec:rff_representation_sfbm}.\\

Substituting \eqref{eq:random_fourier_features_expression} into \eqref{eq:volterradynamic}, we can express \(X\) as:
\begin{eqnarray}
    X_t = X_0 + K(0) \int_0^t \mathbb{E}_\eta[\cos(\eta(t-s))] \, \sigma(s, X_s) \, dW_s,
\end{eqnarray}
where \(\mathbb{E}_\eta[\cdot] = \mathbb{E}[\,\cdot \mid W]\) denotes the expectation with respect to the distribution of \(\eta\), conditionally on the Brownian motion \(W\). Applying the standard cosine decomposition formula \(\cos(a-b) = \cos(a)\cos(b) + \sin(a)\sin(b)\) allows the integrand to be separated as:
\begin{eqnarray}
    X_t = X_0 + \mathbb{E}_\eta \Bigg[ K(0) \cos(\eta t) \int_0^t \cos(\eta s) \, \sigma(s, X_s) \, dW_s + K(0) \sin(\eta t) \int_0^t \sin(\eta s) \, \sigma(s, X_s) \, dW_s \Bigg].
\end{eqnarray}
Using Eq.~\eqref{eq:MCrandom_fourier_features_expression}, the SVE \eqref{eq:volterradynamic} can be approximated by the process \(X^M\) with the following dynamics:
\begin{eqnarray}
X^M_t = X_0 + \int_0^t \frac{K(0)}{M} \sum_{m=1}^{M} \cos\big(\eta_m (t-s)\big) \, \sigma(s, X^M_s) \, dW_s, \quad t \ge 0.
\label{eq:sve_rff}
\end{eqnarray}
Applying the cosine decomposition formula again, this representation can be rewritten in a more computationally convenient form:
\begin{align}
X^M_t &= X_0 + \int_0^t \frac{K(0)}{M} \sum_{m=1}^{M} \big( \cos(\eta_m t)\cos(\eta_m s) + \sin(\eta_m t)\sin(\eta_m s) \big) \, \sigma(s, X^M_s) \, dW_s \nonumber\\
&= X_0 + \frac{K(0)}{M} \sum_{m=1}^{M} \Bigg[ \cos(\eta_m t) \int_0^t \cos(\eta_m s) \, \sigma(s, X^M_s) \, dW_s + \sin(\eta_m t) \int_0^t \sin(\eta_m s) \, \sigma(s, X^M_s) \, dW_s \Bigg].
\label{eq:sve_rff_mc}
\end{align}
The process $X^M$ can be simulated efficiently, as described in Algorithm~\ref{algo:fast_euler}.
\RestyleAlgo{ruled}
\SetKwComment{Comment}{/* }{ */}
\SetKwFor{For}{\textbf{\color{black}for}}{\textbf{\color{black}do}}{\textbf{\color{black}end}}
\begin{algorithm}[hbt!]
\caption{RFF scheme for the SVE of \eqref{eq:sve_rff_mc}}
\label{algo:fast_euler}

\textbf{\color{black}Input:} Time grid $t_{0}=0<t_{1}<\cdots<t_{N}=T$\;

\textbf{\color{black}Input:} $N$ i.i.d. standard Gaussian random variables $G_{1}, \dots, G_{N}$\;

\textbf{\color{black}Input:} $M$ i.i.d. random variables $\eta_{1}, \dots, \eta_{M}$\;

\textbf{\color{black}Output:} One simulated path $(\tilde{X}_{t_n}^{M})_{n=0}^{N}$\;

Initialize $\tilde{X}_{t_0}^{M} \gets 0$, $\tilde{C}_{t_0}^{m} \gets 0$ and $\tilde{S}_{t_0}^{m} \gets 0$ for $m=1,\dots,M$\;

\For{$n=0,1,\dots,N-1$}{
    \For{$m=1,2,\dots,M$}{
        $\tilde{C}_{t_{n+1}}^{m} \gets \tilde{C}_{t_n}^{m} + \cos(\eta_m t_n) \, \sigma(t_n, \tilde{X}_{t_n}^{M}) \sqrt{t_{n+1}-t_n} \, G_{n+1}$\;\\
        $\tilde{S}_{t_{n+1}}^{m} \gets \tilde{S}_{t_n}^{m} + \sin(\eta_m t_n) \, \sigma(t_n, \tilde{X}_{t_n}^{M}) \sqrt{t_{n+1}-t_n} \, G_{n+1}$\;
    }
    
    $\tilde{X}_{t_{n+1}}^{M} \gets X_0 + \frac{K(0)}{M} \sum_{m=1}^{M} \left( \cos(\eta_m t_{n+1}) \tilde{C}_{t_{n+1}}^{m} + \sin(\eta_m t_{n+1}) \tilde{S}_{t_{n+1}}^{m} \right)$\;
}

\end{algorithm}

The formulation~\eqref{eq:sve_rff_mc} allows for a tractable and efficient Monte Carlo approximation of the SVE, where the high-dimensional memory effect induced by \(K\) is captured via a finite sum of random Fourier features. Furthermore, it is competitive against the two previous methods (covariance factorization and Euler scheme) with a computational complexity of $\mathcal{O}(N\times M)$, where $M$ is the number of spectral points, typically set to a few dozen points in practice \citep{lazaro2010sparse,rudi2017generalization,delbridge2020randomly}. In the next sections, we investigate the theoretical guarantees and error bounds of the RFF-based numerical scheme.
\subsubsection{Quantitative estimates}
This part is devoted to quantitative estimates for the time-discretized Volterra scheme with its RFF approximation for an arbitrary kernel. We first introduce an Euler-type discretization associated with a generic kernel \(K\) on an arbitrary time grid and then define its random-features counterpart obtained by replacing the kernel with a finite-dimensional spectral approximation. Under standard growth, Lipschitz continuity and temporal regularity assumptions on the volatility coefficient, we derive moment bounds, H\"older-type regularity estimates and sharp \(\mathcal{L}^p\) error bounds that quantify both the time-discretization error and the additional approximation error induced by the random features expansion. For that sake, let us consider for each $N \ge 1$ a discrete grid:
\[
\pi_N = \{ 0 = t_0^N < t_1^N < \cdots < t_N^N = T_f \}
\]
and denote
\[
\delta_N := \max_{0 \le k \le N-1} (t_{k+1}^N - t_k^N), 
\qquad 
\zeta_N(s) := t_k^N, \quad \text{for } s \in [t_k^N, t_{k+1}^N), \; k \ge 0.
\]
We consider the underlying scheme process $X_N$ defined as:
\begin{equation}\label{eq:XN}
X_N(t) = X_0
+ \int_0^t K\big(t-\zeta_N(s)\big)\,\sigma\big(\zeta_N(s),X_N(\zeta_N(s))\big)\,dW_s .
\end{equation}
In practice, we only simulate the values of $X_N$ pointwisely on the discrete-time grid: 
\[
\pi_N = \{ t_k, \; k = 0, 1, \ldots, N \}.
\]
This can be achieved by generating the Brownian motion increments: 
\[
\Delta W_{k+1} := W_{t_{k+1}} - W_{t_k}, \qquad k = 0, \ldots, N-1.
\]
Let $\Delta t_{k+1} := t_{k+1} - t_k$ and initialize $X_N(t_0) := X_0$. 
Then, the recursive scheme is given by:
\[
X_N(t_{k+1}) = X_0
+  \sum_{i=0}^{k} K(t_{k+1}- t_i)\, \sigma(t_i, X_N(t_i))\, \Delta W_{i+1}.
\]
We introduce the scheme with random Fourier features scheme denoted $X^M_N$ and defined as:
\begin{equation}\label{eq:XMN}
X^M_N(t) = X^M_0
+ \int_0^t \Hat{K}_M\big(t-\zeta_N(s)\big)\,\sigma\big(\zeta_N(s),X^M_N(\zeta_N(s))\big)\,dW_s .
\end{equation}
In the sequel, we consider the following assumptions on $\sigma$:
\begin{eqnarray}
\label{eq:hypothesis1}
\bm{\bm{\mathcal{H}_{\mathrm{1}}}}:\quad 
 \forall (s,x)\in\mathbb{R}_+\times \mathbb{R},\tab |\sigma(s, x)| \le C_1(1 + |x|)
\end{eqnarray}
\begin{eqnarray}
\label{eq:hypothesis2}
\bm{\bm{\mathcal{H}_{\mathrm{2}}}}:\quad 
 \forall (t,x,y)\in \R_+\times \R^2,\tab \left|\sigma(t, x) - \sigma(t, y)\right|
\le C_2\,|x - y|,\tab
\end{eqnarray}
\begin{eqnarray}
\label{eq:hypothesis3}
\bm{\bm{\mathcal{H}_{\mathrm{3}}}}:\quad 
 \forall (t,s,x)\in \R_+^2\times \R,\tab \left|\sigma(t, x) - \sigma(s, x)\right|
\le C_3\,|t - s|^{\alpha \wedge 1} \,(1 + |x|),\tab \alpha>0
\end{eqnarray}
which correspond respectively to the linear growth and the linear growth in space and H\"older in time conditions.\\\\
We establish moment estimates as well as the H\"older regularity of the schemes~\eqref{eq:XN} and~\eqref{eq:XMN} in the following theorem, proved in Appendix~\ref{subsec:proof_rff_estimate_moment}.
\begin{thm}
\label{thm:quantRFFestimatemoment}
If $\sigma$ satisfies Assumption~\hyperref[eq:hypothesis1]{\(\bm{\mathcal{H}_{\mathrm{1}}}\)} and $K$ satisfies the integrability conditions of Assumptions I and II in \cite{decreusefond2002regularity}, the following claims hold:
\begin{enumerate}
    \item For any compact $\mathcal{K}\subset \R_+$, there exist positive constants $C_1(\mathcal{K})$ and $C_2(\mathcal{K})$ such that:
\begin{eqnarray}
\forall\, p > 0,\ \forall\, M \in \mathbb{N}, \qquad
\begin{cases}
\underset{t \in \mathcal{K}}{\sup}\, \mathbb{E}\!\left[\,|X_N(t)|^p\,\right]
\le C_1(\mathcal{K})\,\Big(\mathbb{E}\!\left[\,|X_0|^p\,\right] + |K(0)|^p\Big), \\[0.6em]
\underset{t \in \mathcal{K}}{\sup}\, \mathbb{E}\!\left[\,|X^M_N(t)|^p\,\right]
\le C_2(\mathcal{K})\,\Big(\mathbb{E}\!\left[\,|X^M_0|^p\,\right] + |K(0)|^p\Big).
\end{cases}
\end{eqnarray}

 \item  There exist positive constants $C_p$ such that:
\begin{eqnarray}
  \forall (t,s)\in \mathcal{K}^2,\tab \forall\, p > 0, \forall\, M\in \N,  \qquad 
\mathbb{E}[|X^M_N(t) - X^M_N(s)|^p] 
\le C_p \Big( |t-s|^{p/2} +1 \Big)
\end{eqnarray}
\end{enumerate}
\end{thm}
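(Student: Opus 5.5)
The moment bounds will come from the Burkholder--Davis--Gundy (BDG) inequality combined with a discrete Gronwall/Volterra-type argument on the frozen grid, and the increment bound will then follow from the moment bound. Two kernel facts carry the argument. First, for a positive definite kernel the Bochner representation~\eqref{eq:random_fourier_features_expression} gives $|K(u)|\le K(0)$. Second, the RFF approximation~\eqref{eq:MCrandom_fourier_features_expression} satisfies $|\hat K_M(u)|\le |K(0)|$ deterministically, for every $M$ and every realisation of the $\eta_m$. Hence both schemes~\eqref{eq:XN} and~\eqref{eq:XMN} are driven by kernels uniformly bounded by $|K(0)|$. This is what makes all the constants independent of $M$. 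The integrability Assumptions I and II of \cite{decreusefond2002regularity} are only needed to make the stochastic integrals well defined. It suffices to treat $p\ge 2$; the case $p\in(0,2)$ follows by Jensen/Lyapunov.

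\emph{Step 1 (moments).} Fix a compact $\mathcal K\subset[0,T_f]$. Since the $\eta_m$ are independent of $W$, I work conditionally on $(\eta_m)_m$, so that $\hat K_M$ is a deterministic function bounded by $|K(0)|$. Because the integrand at time $t$ depends on $t$ only through the deterministic kernel, $X_N(t)-X_0$ is, for fixed $t$, a stochastic integral with adapted integrand. BDG followed by Hölder then gives
\begin{equation*}
\mathbb E|X^M_N(t)|^p\le 2^{p-1}\mathbb E|X^M_0|^p + C_p\,|K(0)|^p\, t^{p/2-1}\int_0^t \mathbb E\big|\sigma(\zeta_N(s),X^M_N(\zeta_N(s)))\big|^p ds .
\end{equation*}
Using $\mathcal H_1$, set $\phi(t)=\sup_{u\le t}\mathbb E|X^M_N(u)|^p$. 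Since $\zeta_N(s)\le s\le t$, this yields $\phi(t)\le a + b\int_0^t \phi(s)\,ds$, with $a=C(\mathbb E|X^M_0|^p+|K(0)|^p T_f^{p/2})$ and $b=C|K(0)|^pT_f^{p/2-1}$. Gronwall's lemma then gives the stated bound. Finiteness of $\phi$ holds a priori, because the scheme is a finite recursion on the grid with $X_0\in\mathcal L^\infty$. The same computation with $K$ in place of $\hat K_M$ gives the bound for $X_N$.

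\emph{Step 2 (increments).} For $s<t$ in $\mathcal K$, I split
\begin{equation*}
X^M_N(t)-X^M_N(s)=\int_s^t \hat K_M(t-\zeta_N(u))\sigma_u\,dW_u+\int_0^s\big(\hat K_M(t-\zeta_N(u))-\hat K_M(s-\zeta_N(u))\big)\sigma_u\,dW_u .
\end{equation*}
For the first term, BDG, the bound $|\hat K_M|\le|K(0)|$ and Step 1 give $C|t-s|^{p/2}$. For the second term I use only the crude bound $|\hat K_M(a)-\hat K_M(b)|\le 2|K(0)|$, together with BDG and Step 1. This yields a constant, which is exactly the ``$+1$'' in the statement. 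No continuity of $K$ is needed; this is presumably why the statement is not a genuine Hölder estimate.

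The main obstacle is ensuring uniformity in $M$ and in the random frequencies. I will handle it through the deterministic bound $|\hat K_M|\le K(0)$ and the conditioning on $(\eta_m)$, then integrate out. One point to check carefully is that $t\mapsto X_N(t)$ is not a semimartingale, so BDG must be applied at each fixed $t$ and not to a running supremum. The Gronwall argument therefore has to be run on $\sup_{u\le t}\mathbb E|\cdot|^p$ rather than on $\mathbb E\sup$.
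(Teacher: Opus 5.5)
Your argument follows the paper's route. For the moments you use BDG on the stochastic integral at fixed $t$, H\"older in time, the linear growth $\mathcal{H}_1$, the bound of the kernel by $K(0)$, and Gronwall. For the increments you split into an integral over $[s,t]$ and one over $[0,s]$, using the crude bound $2|K(0)|$ on the kernel difference, which produces the ``$+1$''. You are more explicit than the paper on points it leaves implicit:
\begin{itemize}
\item the deterministic bound $|\hat K_M|\le |K(0)|$, which makes the constants uniform in $M$ and in the frequencies;
\item applying BDG only at fixed $t$, since $X_N$ is not a semimartingale;
\item running Gronwall on $\sup_{u\le t}\mathbb{E}|\cdot|^p$, with a priori finiteness checked.
\end{itemize}
Because of the ``$+1$'', Part 2 in fact follows from Part 1 directly via $|a-b|^p\le C_p(|a|^p+|b|^p)$, so the split is not even needed.

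One step needs repair: the reduction from $p\in(0,2)$ to $p\ge 2$. Applying Jensen to the final $p=2$ estimate gives $\mathbb{E}|X^M_N(t)|^p\le \big(C(\mathbb{E}|X^M_0|^2+K(0)^2)\big)^{p/2}$. But $(\mathbb{E}|X_0|^2)^{p/2}$ is not bounded by a law-independent multiple of $\mathbb{E}|X_0|^p$: take $X_0=A\mathbbm{1}_B$ with $\mathbb{P}(B)\to 0$ and $A$ large. So you do not obtain the stated form $C(\mathbb{E}|X_0|^p+|K(0)|^p)$.

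The fix is to run the $p=2$ computation conditionally on $\mathcal{G}_0:=\mathcal{F}_0\vee\sigma(\eta_1,\dots,\eta_M)$. This is legitimate because $W$ is independent of $X_0$ and of the frequencies, so it stays a Brownian motion in the enlarged filtration. You get $\mathbb{E}[|X^M_N(t)|^2\mid\mathcal{G}_0]\le C(|X^M_0|^2+K(0)^2)$. Conditional Jensen and $(a+b)^{p/2}\le a^{p/2}+b^{p/2}$ then give $\mathbb{E}[|X^M_N(t)|^p\mid\mathcal{G}_0]\le C^{p/2}(|X^M_0|^p+|K(0)|^p)$, and you integrate. The same applies to $X_N$. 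The paper's own proof does not cover this range at all, since its time-averaging step requires $p\ge 2$.
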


Theorem~\ref{thm:quantRFFestimatemoment} provides fundamental a priori estimates for both the original truncated Volterra process \(X_N\) and its random Fourier features approximation \(X_N^M\). 
The first part establishes uniform moment bounds on compact time intervals, showing that the moments of the processes remain controlled independently of the approximation parameter \(M\). 
In particular, these estimates guarantee that neither the truncation procedure nor the random features approximation introduces any explosion of moments over finite horizons. The second part gives quantitative moment estimates for the increments of the approximated process \(X_N^M\). 
The obtained bound exhibits the characteristic Brownian scaling which is consistent with the diffusive nature of the stochastic integral term whereas the additional term ($+1$) can be interpreted as a moment bound.
Altogether, these estimates constitute the probabilistic foundation required for the convergence analysis of the random Fourier features approximation scheme.\\

The following results quantify, for an arbitrary kernel of the form of Section \ref{subsec:rff_representation_sfbm}, the propagation of errors using the scheme of Eq.~\eqref{eq:XMN}. Theorem~\ref{thm:eulerXNXNMRFFerrorbound}, proved in Appendix~\ref{subsec:proof_euler_xnxnm_rff_error_bound}, provides an $\mathcal{L}^p$ bound on the discrepancy between the Euler scheme $X_N$ and its random features approximation $X_N^M$.

\begin{thm}
\label{thm:eulerXNXNMRFFerrorbound}
If $\sigma$ satisfies \hyperref[eq:hypothesis1]{\(\bm{\mathcal{H}_{\mathrm{1}}}\)} and    \hyperref[eq:hypothesis2]{\(\bm{\mathcal{H}_{\mathrm{2}}}\)} and $K$ the integrability conditions of Assumptions I and II in \cite{decreusefond2002regularity}, then for any arbitrary compact $\mathcal{K}\subset \R_+$, there exists a positive constant $C_\mathcal{K}^p>0$ such that:
\begin{eqnarray}
\forall\, M\in \N, \forall\, t\in \mathcal{K},  \qquad \sup_{s \in [0, t]} \mathbb{E}\left[|X_N(s) - X_N^M(s)|^p\right]
\le 
\begin{cases}
   \frac{C_\mathcal{K}^p}{M^{p-1}} e^{C_\mathcal{K}^p t},\tab \text{if $1 \leq p<2$}\\
C_\mathcal{K}^p \Bigg( \frac{2}{M^{p-1}} + \Big(\frac{4^p}{M}\Big)^{\frac{1}{2}} \Bigg) e^{C_\mathcal{K}^p t},\tab\text{if $p \geq 2$}
\end{cases}
\end{eqnarray}
\end{thm}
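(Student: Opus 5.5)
We compare the two schemes directly. Writing $e(t):=X_N(t)-X_N^M(t)$ and assuming for simplicity $X_0^M=X_0$, the definitions \eqref{eq:XN} and \eqref{eq:XMN} give the decomposition
\begin{align*}
e(t) &= \int_0^t \bigl(K-\Hat{K}_M\bigr)\big(t-\zeta_N(s)\big)\,\sigma\big(\zeta_N(s),X_N(\zeta_N(s))\big)\,dW_s \\
&\quad + \int_0^t \Hat{K}_M\big(t-\zeta_N(s)\big)\Bigl[\sigma\big(\zeta_N(s),X_N(\zeta_N(s))\big)-\sigma\big(\zeta_N(s),X^M_N(\zeta_N(s))\big)\Bigr]dW_s =: A_t + B_t .
\end{align*}
We then use $|a+b|^p\le 2^{p-1}(|a|^p+|b|^p)$ and treat the two terms separately. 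The frequencies $\eta_m$ are independent of $W$. Hence, conditionally on $(\eta_m)_m$, both integrands are predictable and we may apply the Burkholder--Davis--Gundy inequality under the conditional law, followed by an outer expectation over $\eta$.

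\textbf{The feedback term $B$.} Since $|\Hat{K}_M|\le |K(0)|$ deterministically, BDG and \hyperref[eq:hypothesis2]{$\bm{\mathcal{H}_{\mathrm{2}}}$} give
\[
\mathbb{E}|B_t|^p\le C_p |K(0)|^p C_2^p\, \mathbb{E}\Bigl(\int_0^t |e(\zeta_N(s))|^2ds\Bigr)^{p/2}.
\]
For $p\ge2$, Hölder in time bounds this by $C\,t^{p/2-1}\int_0^t \sup_{r\le s}\mathbb{E}|e(r)|^p ds$. For $1\le p<2$, we first pass to the $\mathcal{L}^2$ norm via Jensen, or we bound $\bigl(\int |e|^2\bigr)^{p/2}$ through $\sup|e|^{p-\cdot}$. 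The uniform-in-$M$ bound of Theorem~\ref{thm:quantRFFestimatemoment} lets us keep all constants depending only on $\mathcal{K}$. These linear-in-$e$ estimates then set up a Gronwall inequality for $\phi(t):=\sup_{s\le t}\mathbb{E}|e(s)|^p$ of the form $\phi(t)\le a_M + C\int_0^t\phi(s)\,ds$. Gronwall yields $\phi(t)\le a_M e^{C t}$, which produces the factor $e^{C^p_{\mathcal{K}} t}$ in the statement.

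\textbf{The kernel approximation term $A$ (main obstacle).} Conditionally on $\eta$, BDG gives
\[
\mathbb{E}|A_t|^p \le C_p\,\mathbb{E}\Bigl(\int_0^t |K-\Hat K_M|^2(t-\zeta_N(s))\,|\sigma(\cdot,X_N)|^2\,ds\Bigr)^{p/2}.
\]
Because $X_N$ does not depend on $\eta$, we can separate the two factors by Hölder, using \hyperref[eq:hypothesis1]{$\bm{\mathcal{H}_{\mathrm{1}}}$} together with the moment bound of Theorem~\ref{thm:quantRFFestimatemoment}. What remains is a pointwise Monte Carlo moment $\mathbb{E}_\eta|K(u)-\Hat K_M(u)|^q$, uniformly in $u$. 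The summands $K(0)\cos(\eta_m u)-K(u)$ are i.i.d., centered and bounded by $2|K(0)|$.
\begin{itemize}
\item For $q=2$ the variance gives $4K(0)^2/M$.
\item For $1\le p<2$, interpolating with the crude bound $|K-\Hat K_M|\le 2|K(0)|$ gives an $M^{-(p-1)}$-type rate, matching the first case of the statement; an alternative route is a von Bahr--Esseen inequality.
\item For $p\ge 2$, we split the error into a part controlled by the variance, giving $(4^p/M)^{1/2}$ after Cauchy--Schwarz on $\mathbb{E}|\cdot|^{2p}$ with the bound $2^{2p}$, plus a residual $2/M^{p-1}$. This should reproduce the stated form.
\end{itemize}

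Getting these exact powers, uniformly in $u$ and without the extra $t$-dependence creeping in, is the delicate bookkeeping step. Once it is done, we insert $a_M$ into the Gronwall bound and conclude.
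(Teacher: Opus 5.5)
Your decomposition mirrors the paper's, and the difference matters. The paper writes $X_N-X_N^M=I^{(1)}_t+I^{(2)}_t$. There the Lipschitz difference is multiplied by the exact kernel $K$, and the kernel error $K-\widehat{K}_M$ is multiplied by $\sigma(\zeta_N(s),X_N^M(\zeta_N(s)))$. It then splits the expectation of $\bigl(\int_0^t|K-\widehat{K}_M|^{2\beta}(t-\zeta_N(s))\,ds\bigr)^{p/(2\beta)}\int_0^t(1+|X_N^M(\zeta_N(s))|^p)\,ds$ into a product ``because the features are independent of $W$''. That step is not justified, since $X_N^M$ is itself a function of $(\eta_m)$. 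You instead pair $K-\widehat{K}_M$ with $\sigma(\zeta_N(s),X_N(\zeta_N(s)))$, which is a functional of $(X_0,W)$ only, and put $\widehat{K}_M$ (deterministically bounded by $|K(0)|$) into the feedback term. This makes the independence step legitimate. Conditioning on $\eta$ and using the ordinary BDG inequality at fixed $t$ is then enough, so the Volterra BDG lemma is not needed.

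For the Monte Carlo moments, the paper invokes von Bahr--Esseen ($p<2$) and Rosenthal ($p\ge2$). Your elementary route works. With $Y=K(u)-\widehat{K}_M(u)$ you have $\mathbb{E}Y^2\le K(0)^2/M$ and $|Y|\le 2|K(0)|$, uniformly in $u$. Hence $\mathbb{E}|Y|^p\le(\mathbb{E}Y^2)^{p/2}\le CM^{-p/2}\le CM^{-(p-1)}$ for $1\le p<2$, and $\mathbb{E}|Y|^p\le(2|K(0)|)^{p-2}\,\mathbb{E}Y^2\le CM^{-1}$ for $p\ge2$. Both already lie below the stated right-hand sides, so no ``variance part plus residual $2/M^{p-1}$'' splitting is needed. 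The factors $t^{p/2-1}$ and $t$ are bounded on $\mathcal{K}$ and go into $C^p_{\mathcal{K}}$, so the bookkeeping you flag is not an obstacle.

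The step that does not close as written is the feedback term $B$ for $1\le p<2$. Neither of your two options yields $\phi(t)\le a_M+C\int_0^t\phi(s)\,ds$ for $\phi(t)=\sup_{s\le t}\mathbb{E}|e(s)|^p$:
\begin{itemize}
\item Jensen inside that term produces $\bigl(\int_0^t\mathbb{E}|e(\zeta_N(s))|^2\,ds\bigr)^{p/2}$, which is a second-moment quantity, not a bound in terms of $\phi$.
\item Interpolating via $\int_0^t|e|^2\le\sup|e|^{2-p}\int_0^t|e|^p$ and then Young's inequality requires absorbing $\mathbb{E}\sup_{s\le t}|e(s)|^p$ into the left-hand side. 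You do not control that quantity: $t\mapsto e(t)$ is a Volterra integral whose integrand depends on $t$, not a martingale, and your estimates only bound $\sup_s\mathbb{E}|e(s)|^p$.
\end{itemize}
The repair is to run the Gr\"onwall argument only for $p=2$, which gives $\sup_{s\le t}\mathbb{E}|e(s)|^2\le CM^{-1}e^{Ct}$. Then deduce $1\le p<2$ by Lyapunov's inequality: $\sup_{s\le t}\mathbb{E}|e(s)|^p\le C^{p/2}M^{-p/2}e^{Cpt/2}\le C'M^{-(p-1)}e^{C't}$, since $p/2\ge p-1$ on $[1,2]$ and $M\ge1$. The paper's proof has the same weakness: its BDG and H\"older steps assume $p\ge2$, through the exponent $r=p(\beta-1)/(2\beta)\ge1$, and are then reused for $p<2$.
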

The above strong error bound guarantees uniform-in-time control (on compact subsets of $\R_+$) of the RFF scheme with respect to its Euler counterpart. The bound is explicit in the discretization parameter $M$ and highlights two different convergence regimes depending on whether $p<2$ or $p\geq 2$. In particular, for $1\leq p<2$, the error behaves essentially like $M^{-(p-1)}$, while for $p\geq 2$ an additional term of order $M^{-1/2}$ appears. The exponential factor $e^{C_\mathcal{K}^p t}$ arises from a Gr\"onwall-type argument and captures the accumulation of local discretization errors over time. \\
One can propagate one step further to explicit the error bound between the true SVE solution and the RFF scheme.

\begin{prop}
\label{prop:eulerXXNMRFFerrorbound}
If $\sigma$ satisfies \hyperref[eq:hypothesis1]{\(\bm{\mathcal{H}_{\mathrm{1}}}\)}, \hyperref[eq:hypothesis2]{\(\bm{\mathcal{H}_{\mathrm{2}}}\)} and \hyperref[eq:hypothesis3]{\(\bm{\mathcal{H}_{\mathrm{3}}}\)} and $K$ the integrability conditions of Assumptions I and II in \cite{decreusefond2002regularity},
then for any $p>0$  and $\varepsilon \in \left(\frac{1}{p}, \alpha \wedge 1\right)$ there exist positive constants $K^1_p$ and $K^2_p$ such that:
\begin{eqnarray*}
\sup_{s\in[0,t]} \mathbb{E}\left[|X_s-X_N^M(s)|^p\right]
\le 
    \begin{cases}
        K^1_p\Bigg(
 \,\Big(1 + \mathbb{E}\left[|X_0|^p\right] \Big)\, \delta_N^{\,p(\alpha \wedge 1) - \varepsilon}
+
\frac{e^{C_K^p t}}{M^{p-1}} 
\Bigg),\tab \text{if $1 \leq p<2$}\\

K^2_p \Bigg(
\,\Big(1 + \mathbb{E}\left[|X_0|^p\right] \Big)\, \delta_N^{\,p(\alpha \wedge 1) - \varepsilon}+ \Bigg( \frac{2}{M^{p-1}} + \Big(\frac{4^p}{M}\Big)^{\frac{1}{2}} \Bigg) e^{C_K^p t},\tab\text{if $p \geq 2$}
    \end{cases}
\end{eqnarray*}
\end{prop}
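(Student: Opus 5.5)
The plan is to split the error with the triangle inequality through the Euler scheme $X_N$ defined in \eqref{eq:XN}:
\[
|X_s - X_N^M(s)|^p \le 2^{(p-1)\vee 0}\Big(|X_s - X_N(s)|^p + |X_N(s) - X_N^M(s)|^p\Big).
\]
Taking expectations and the supremum over $s\in[0,t]$, the second term is controlled directly by Theorem~\ref{thm:eulerXNXNMRFFerrorbound}. That theorem gives $C^p_{\mathcal{K}} M^{-(p-1)} e^{C^p_{\mathcal{K}} t}$ for $1\le p<2$, and $C^p_{\mathcal{K}}\big(2M^{-(p-1)} + (4^p/M)^{1/2}\big)e^{C^p_{\mathcal{K}} t}$ for $p\ge 2$. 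These are exactly the $M$-dependent pieces in the statement. It therefore remains to prove the time-discretization estimate
\[
\sup_{s\in[0,t]}\mathbb{E}\big[|X_s - X_N(s)|^p\big] \le C\big(1+\mathbb{E}[|X_0|^p]\big)\,\delta_N^{\,p(\alpha\wedge 1)-\varepsilon},
\]
and then absorb all constants into $K^1_p, K^2_p$.

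For this Euler estimate, I would first treat $p\ge 2$; smaller $p$ then follows by Jensen/Hölder, with the exponent loss absorbed in $\varepsilon$. Write $X_s - X_N(s) = A_s + B_s + D_s$ with
\[
A_s = \int_0^s \big(K(s-u)-K(s-\zeta_N(u))\big)\sigma(u,X_u)\,dW_u,
\]
\[
B_s = \int_0^s K(s-\zeta_N(u))\big(\sigma(u,X_u)-\sigma(\zeta_N(u),X_{\zeta_N(u)})\big)\,dW_u,
\]
\[
D_s = \int_0^s K(s-\zeta_N(u))\big(\sigma(\zeta_N(u),X_{\zeta_N(u)})-\sigma(\zeta_N(u),X_N(\zeta_N(u)))\big)\,dW_u .
\]
Each term is handled with BDG and Hölder:
\begin{itemize}
\item For $A$, combine the kernel increment regularity from Assumptions I--II of \cite{decreusefond2002regularity} with the linear growth $\mathcal{H}_1$ and moment bounds on $X$ (the continuous analogue of Theorem~\ref{thm:quantRFFestimatemoment}). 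This gives a bound of order $(1+\mathbb{E}|X_0|^p)\,\delta_N^{p\gamma}$ for some kernel regularity exponent $\gamma$.
\item For $B$, split the integrand using $\mathcal{H}_3$ (the time increment is at most $\delta_N^{\alpha\wedge1}$ times $(1+|X_u|)$) and $\mathcal{H}_2$ together with a Hölder-in-time moment estimate for $X$, $\mathbb{E}|X_u-X_{\zeta_N(u)}|^p\lesssim \delta_N^{p\beta}$. This estimate is obtained exactly as in part 2 of Theorem~\ref{thm:quantRFFestimatemoment}, but for the true solution.
\item For $D$, $\mathcal{H}_2$ gives a bound by $C\int_0^s |K(s-\zeta_N(u))|^2\,\mathbb{E}|X-X_N|^p(\zeta_N(u))\,du$, up to the Hölder exponent manipulation.
\end{itemize}
A Volterra-type Gronwall lemma, using integrability of $K^2$, then closes the estimate for $\sup_{s\le t}\mathbb{E}|X_s-X_N(s)|^p$.

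The main obstacle is the exponent bookkeeping in the $A$ and $B$ terms. One has to show that the kernel increments and the path regularity jointly yield the rate $\delta_N^{p(\alpha\wedge1)-\varepsilon}$ rather than something worse. The $\varepsilon$-loss with $\varepsilon>1/p$ indicates how I would do it: apply a Kolmogorov/Garsia–Rodemich–Rumsey type argument to pass from pointwise Hölder moment bounds to a uniform-in-time control, at the cost of $\delta_N^{-\varepsilon}$. If the kernel regularity from \cite{decreusefond2002regularity} is weaker than $\alpha\wedge1$, I would simply state the bound with the minimum of the two exponents; that minimum is implicitly assumed to equal $\alpha\wedge 1$ in the statement.
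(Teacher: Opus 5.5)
Your reduction is the paper's: split through $X_N$ with the $2^{p-1}$ inequality, bound $X_N-X_N^M$ by Theorem~\ref{thm:eulerXNXNMRFFerrorbound}, and bound $X-X_N$ by a pure time-discretization estimate. The paper does not prove that last estimate. It imports it from Theorem~2.2(ii) of \citet{richard2021discrete}. Your attempt to derive it yourself is where the proposal fails, and the obstruction is larger than the kernel-regularity caveat you raise.

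The problem is the term $B$. After $\mathcal{H}_3$ you need $\sup_u\mathbb{E}|X_u-X_{\zeta_N(u)}|^p\lesssim\delta_N^{p\beta}$. This does not follow ``exactly as in part~2 of Theorem~\ref{thm:quantRFFestimatemoment}''. That bound is $C_p(|t-s|^{p/2}+1)$, because its kernel-increment term is only bounded by a constant, so it gives no rate at all; you would need an $L^2$ modulus of continuity of $K$.

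More seriously, even with such a modulus, $X_u-X_{\zeta_N(u)}$ contains the piece $\int_{\zeta_N(u)}^u K(u-r)\sigma(r,X_r)\,dW_r$. It is orthogonal to the rest and has second moment of order $K(0)^2\delta_N$. A positive definite kernel has $K(0)>0$, so $\beta\le\frac12$ for non-degenerate $\sigma$. Your route therefore yields at best $\delta_N^{p\min(\alpha\wedge1,\beta,\gamma)}$ with $\beta\le\frac12$. The assumption that this minimum ``equals $\alpha\wedge1$'' is impossible once $\alpha\wedge1>\frac12$.

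This limitation is intrinsic, not a weakness of your estimates. Take a constant kernel and $\sigma(t,x)=x$, which satisfies $\mathcal{H}_3$ for every $\alpha$. The scheme is then Euler--Maruyama, whose strong order is exactly $\frac12$. So the estimate you set out to prove is false for large $p$ when $\alpha\wedge1>\frac12$. Letting $M\to\infty$ in the $p\ge2$ case, so is the stated bound. A correct proof must carry the exponent actually delivered by the Volterra--Euler result it invokes, which under $\mathcal{H}_1$--$\mathcal{H}_3$ alone cannot exceed $\frac12$. The stated exponent is consistent with this only when $\alpha\wedge1\le\frac12$ and $K$ is regular enough.

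Finally, you misplace the source of the $\varepsilon$-loss. For $\sup_s\mathbb{E}$ no Garsia--Rodemich--Rumsey step is needed. The condition $\varepsilon>1/p$ belongs to an $\mathbb{E}\sup_s$ bound, which dominates the quantity required here.
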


Indeed, the bound decomposes naturally into two contributions: the first term, of order $\delta_N^{\,p(\alpha \wedge 1)-\varepsilon}$, corresponds to the approximation error induced by the time discretization, while the second term reflects the error associated with the truncation parameter $M$, as already identified in Theorem~\ref{thm:eulerXNXNMRFFerrorbound}. The estimate holds uniformly in time as in the previous result, and two regimes appear depending on whether $1 \leq p<2$ or $p\geq 2$, with an additional $M^{-1/2}$ contribution for higher moments. This proposition therefore quantifies the combined impact of the RFF sampling and the Euler time discretization, yielding explicit convergence rates for both procedures.\\

In Section \ref{subsec:error_bounds_rff_sfbm}, we will leverage the Markovian structure of the HMC sampling to derive explicit Monte Carlo bounds of the RFF kernel approximation and incorporate it in the strong error bounds of Theorem~\ref{thm:eulerXNXNMRFFerrorbound} and Proposition~\ref{prop:eulerXXNMRFFerrorbound} in the case of the S-fBM kernel to derive sharper and explicit bounds.

\subsection{Benchmarking with the multifactor approximation\label{sec:multifactor}}

\subsubsection{Completely monotonic kernels}

\begin{defn}
Let $K:[0,+\infty)\rightarrow\mathbb{R}$ be a continuous, bounded, shift-invariant, infinitely differentiable kernel function. It is
said to be \textit{completely monotonic} if:
\begin{equation}
(-1)^{n}K^{(n)}(x)\geq0\label{eq:completely_monotone}
\end{equation}
for all $n\in\mathbb{N}$ and $x>0$.
\end{defn}
According to the Hausdorff-Bernstein-Widder
theorem \citep[Theorem~3.9.6]{sasvari2013characteristic}, $K$~is completely monotonic if and only if it is the inverse Laplace transform
of a finite nonnegative measure on $[0,\infty)$. In particular, a
completely monotonic $K$ admits the probabilistic representation:
\begin{eqnarray}
\label{eq:LaplaceFeaturesrepresentation}
K(u)=K(0)\mathbb{E}\left[e^{-\eta u}\right]\ ,\ \forall u\geq0\label{eq:Eexp}
\end{eqnarray}
where the density of the random variable $\eta$ is the inverse Laplace
transform of $K(.)/K(0)$. Such a kernel \eqref{eq:Eexp}
can be well approximated by a sum of exponential terms; this is the
\textit{random Laplace features} methodology \citep{yang2014random}.

\subsubsection{The multifactor method}

The multifactor method for simulating stochastic Volterra processes in the sense of \citet{abijaber2019multifactor} is based on approximating a non-Markovian Volterra dynamics by an explicit finite-dimensional Markovian system that is numerically tractable and converges to the original process as the number of factors increases. Considering a stochastic Volterra equation with kernel $K$ on a filtered probability space $(\Omega,\mathcal{F},\mathbb{F},\mathbb{P})$ typically written as:
\[
X_t = u_0(t) + \int_0^t K(t-s)\, b(X_s)\, ds + \int_0^t K(t-s)\, \sigma(X_s)\, dW_s,
\]
where $b$ and $\sigma$ are drift and diffusion coefficients and $W$ is an $\mathbb{F}$-Brownian motion. For rough volatility models (e.g.\ rough Heston), one considers a singular one-dimensional kernel, also known as power-law kernel, of the form:
\[
K(t) = \frac{t^{\alpha-1}}{\Gamma(\alpha)}, 
\qquad \alpha \in \left(0,\frac{1}{2}\right),
\]
which induces non-Markovian and non-semimartingale behavior that is difficult to simulate directly. The multifactor method approximates the kernel $K$ by a finite sum of exponentials seen as a completely monotone approximation justified by the Hausdorff-Bernstein-Widder
theorem \citep[Theorem~3.9.6]{sasvari2013characteristic}, which is nothing else but the empirical counterpart of Eq.~\eqref{eq:LaplaceFeaturesrepresentation}. The approximation takes the form:
\[
K_n(t) = \sum_{i=1}^n c_i e^{-x_i t}, 
\qquad \text{with weights } c_i>0 \text{ and rates } x_i>0,
\]
chosen so that $K_n \to K$ in $L^2$. Under such an approximation, one defines factor processes $(Y^{n,i}_t)_{i=1,\ldots,n}$ satisfying:
\[
dY^{n,i}_t = -x_i\, Y^{n,i}_t\, dt 
+ b(X^n_t)\, dt 
+ \sigma(X^n_t)\, dW_t,
\qquad Y^{n,i}_0 = 0,
\]
and the multifactor approximation of $X$ as:
\[
X^n_t = u_0^n(t) + \sum_{i=1}^n c_i Y^{n,i}_t,
\]
which in integrated form yields:
\[
X^n_t = u_0^n(t) 
+ \int_0^t K_n(t-s)\, b(X^n_s)\, ds
+ \int_0^t K_n(t-s)\, \sigma(X^n_s)\, dW_s.
\]
Each factor $Y^{n,i}$ is Markovian with mean-reversion rate $x_i$, so that $(X^n, Y^{n,1}, \ldots, Y^{n,n})$ is a finite-dimensional Markov process amenable to standard simulation techniques such as the classical Euler scheme (see Section~\ref{sec:eulerschemeSVE}) . Under mild assumptions on $b$, $\sigma$ and the convergence $K_n \to K$, one proves that $X^n$ converges to $X$ in probability or in $\mathcal{L}^p$ uniformly on compacts as $n \to \infty$, providing an efficient simulation framework for rough volatility and more general non-negative Volterra processes \citep{abijaber2019multifactor,AbiJaberBayerBreneis2024}. Extensions to general completely monotone kernels and further approximation results for stochastic Volterra equations are also developed in \citet{AlfonsiKebaier2024}.

\subsubsection{Positive definiteness VS complete monotonicity}

The key connection between these two properties is given by Schoenberg's
theorem for kernels in $\Phi_{\infty}$ \citep[Theorem~3.8.5]{sasvari2013characteristic}:
a kernel $K$ belongs to $\Phi_{\infty}$ if and only if $K(\sqrt{.})$
is completely monotonic. In other words, $K(\left\Vert \mathbf{.}\right\Vert )$
is positive definite in $\mathbb{R}^{d}$ for all $d\geq1$, if and
only if $K(\sqrt{.})$ is the inverse Laplace transform of a
finite nonnegative measure on $[0,\infty)$ \citep[Theorem~3.9.8]{sasvari2013characteristic}.

In addition, \citet{langrene2025mixture} proved that if $K\in\Phi_{\infty}$,
then $K(\left\Vert \mathbf{.}\right\Vert ^{\alpha})$ is also
positive definite in all $\mathbb{R}^{d}$ for all $\alpha\in(0,1]$.
Similarly, if $K$ is completely monotonic, then $K(.^{\alpha})$
is also completely monotonic for all $\alpha\in(0,1]$. Combining
this with Schoenberg's theorem yields the following equivalence: \textit{$K(\left\Vert \mathbf{.}\right\Vert ^{\alpha})$ is positive
definite in all $\mathbb{R}^{d}$ for all $\alpha\in(0,1]$ if and
only if $K(.^{\frac{\alpha}{2}})$ is completely monotonic on
$[0,\infty)$ for all $\alpha\in(0,1]$.}

This result clearly shows that working with positive definite kernels
is more general than working with completely monotonic kernels: if
a kernel $K$ is completely monotonic, then it is also positive
definite, but the converse is not necessarily true. Intuitively, ``half
of the powers are lost'' when going from positive definite kernels
$K(\left\Vert \mathbf{.}\right\Vert ^{\alpha})$ to completely
monotonic kernels $K(.^{\frac{\alpha}{2}})$. For example, the
univariate kernels $K(u)=e^{-u^{\alpha}}$ (exponential power),
$K(u)=(1+u^{\alpha})^{-\beta}$, $\beta>0$ (generalized Cauchy),
or $K(u)=u^{\frac{\alpha\beta}{2}}\mathcal{K}_{\beta}(\sqrt{2\beta}u^{\frac{\alpha}{2}}$),
$\beta>0$ (generalized Mat\'ern, where $\mathcal{K}_{\beta}$ is
the modified Bessel function), are not completely monotonic when $\alpha\in(1,2]$,
yet are still positive definite for such values of $\alpha$. As a
result, such kernels can be well approximated by a sum of cosine functions
(random Fourier features), but cannot be well approximated by a sum
of exponential functions (random Laplace features).

Finally, if one considers kernel functions in $\Phi_{d}\backslash\Phi_{\infty}$, such functions, with either compact support or taking negative values, can still be well approximated by spectral sampling \citep{emery2006tbsim} but not by a sum of exponential functions, which would necessarily be positive with unbounded support.

In particular, the isotropic S-fBM kernel $K(\mathbf{u})=(1-\left\Vert \mathbf{u}\right\Vert ^{\alpha})\mathbbm{1}_{\{\left\Vert \mathbf{u}\right\Vert \leq1\}}$
belongs to $\Phi_{d}$ only for $d=1$ (when $\alpha\in(0,1]$) and
$d=2$ (when $\alpha\in\left(0,\frac{1}{2}\right)$). As such, like any kernel with compact
support, the S-fBM kernel is not completely monotonic and therefore
cannot be well approximated by a sum of exponentials. Nevertheless,
it can still be well approximated by a sum of cosine functions whenever
the condition for positive definiteness is satisfied ($0\leq\alpha\leq \frac{3-d}{2}$, see Theorem~\ref{thm:positive_definiteness}).

\section{The Log stationary fractional Brownian motion model}

\subsection{Overview}
\label{sec:logsfbm}

Here, we recall the definition of the Log S-fBM model. To that end, we start by briefly recalling its main building block, namely the stationary fractional Brownian motion (S-fBM) model introduced in~\citet{wu2022rough}, highlighting its main construction and properties.

The S-fBM process is built using the so-called time-scale domain:
\begin{equation}
C_{l,T}(t^{*}) := \left\{ (t,h) \,:\, h>l,\; |t-t^{*}| < \tfrac{1}{2}\min(h,T) \right\},
\end{equation}
and defined as:
\begin{equation}
\omega_{H,T}(t) = \mu_H + \int_{C_T(t)} \mathrm{d}G_H,
\end{equation}
where $C_T(t) := C_{0,T}(t)$ and $\mathrm{d}G_H$ is a centered non-homogeneous Gaussian white noise with variance measure
\begin{equation}
\mathbb{E}\!\left( \mathrm{d}G_H(t,h)^2 \right)
= \lambda^2 \frac{h^{2H-2}}{T^{2H}} \, \mathrm{d}t \, \mathrm{d}h .
\end{equation}
The constant $\mu_H$ is chosen such that $\mathbb{E}(e^{\omega_{H,T}(t)})=1$, yielding:
\begin{equation}
\mu_H = -\frac{\nu^2}{4}.
\end{equation}
The resulting process $(\omega_{H,T}(t))_t$ is a stationary Gaussian process with autocovariance:
\begin{eqnarray}
\label{eq:sfbmcovkernel}
C_{\omega}(\tau) =
\begin{cases}
\dfrac{\nu^2}{2} \left( 1 - \left( \dfrac{|\tau|}{T} \right)^{2H} \right), & |\tau| < T, \\[1ex]
0, & \text{otherwise},
\end{cases}
\end{eqnarray}
so that $\Var(\omega_{H,T}(t))=\nu^2/2$. Here, $T$
 plays the role of a correlation limit. The so-called intermittency parameter $\lambda$ is related to $\nu$ and $H$ as follows:
\begin{eqnarray}
\label{eq:nu2intermformula}
\nu^2 = \frac{\lambda^2}{H(1-2H)}, \qquad H \in \left(0,\frac{1}{2}\right).
\end{eqnarray}
Following \citet{bacry2001modelling}, \citet{wu2022rough} define the associated multifractal random measure:
\begin{equation}
M_{H,T}(\mathrm{d}t) = \exp(\omega_{H,T}(t)) \, \mathrm{d}t .
\end{equation}
A key feature of the framework is its continuity at $H=0$. As $H \to 0$, the measure $M_{H,T}$ converges weakly to the multifractal random measure of a Multifractal Random Walk (MRW) denoted $\tilde{M}_T(.)$ as introduced in \citet{bacry2001multifractal, bacry2001modelling, muzy2000modelling, muzy2002multifractal}. In fact, \citet{wu2022rough} showed that:
\begin{equation}
M_{H,T}(\mathrm{d}t) \xrightarrow[H \to 0]{\mathrm{w}} \tilde{M}_T(\mathrm{d}t),
\end{equation}
so that the S-fBM framework naturally unifies fractional ($H>0$) and log-correlated ($H=0$) regimes.\\
Finally, the Log S-fBM model is defined by subordinating a Brownian motion to the multifractal measure:
\begin{equation}
\mathrm{d}X_t = \exp(\omega_{H,T}(t)) \, \mathrm{d}B_t
= \frac{M_{H,T}(\mathrm{d}t)}{\mathrm{d}t} \, \mathrm{d}B_t,
\end{equation}
where $(B_t)_t$ is a standard Brownian motion. This construction provides a flexible stochastic volatility model encompassing both rough and multifractal behaviors observed in empirical data.\\\\
In the upcoming sections, we derive the main ingredients for the RFF representation, namely the spectral density of the S-fBM covariance kernel, the sampling algorithm of the Random Features, and the associated convergence guarantees. Before that, we choose to dedicate the following section to fix the notations for the sake of clarity.

\subsection{Spectral analysis of the S-fBM kernel\label{sec:spectralsfbm}}
This section is devoted to a spectral analysis of the S-fBM kernel, which plays a central role both in characterizing its admissibility as a covariance function and in tailoring the Euler scheme with RFF representation. Since the kernel is radial, compactly supported and shift invariant, its spectral density is obtained as the Fourier transform of the covariance function in $\mathbb{R}^d$ and admits explicit representations in terms of special functions. In order to derive these representations and establish quantitative approximation results, we first introduce the necessary notations. We recall the Fourier transform convention adopted throughout the paper, together with the analytical tools required to handle radial Fourier transforms, including Bessel functions of the first kind and generalized hypergeometric functions. We also introduce suitable truncations of these series expansions, which will be instrumental in the construction of computable approximations of the spectral density and in the derivation of explicit error bounds. Working in arbitrary dimension $d \geq 1$ and focusing on the rough regime $H\in \left(0,\frac{1}{2}\right)$ relevant for applications, we express the S-fBM kernel in a normalized form that highlights its scaling properties and facilitates its spectral decomposition. The section then proceeds by establishing explicit spectral density formulas in both Bessel-series and hypergeometric forms, together with sharp uniform and $\mathcal{L}^1$ tail error estimates for their truncated counterparts, thereby quantifying the trade-off between dimensionality, truncation order and approximation accuracy. These results reveal a pronounced curse of dimensionality in the spectral domain, which is nevertheless mitigated by the exponential rate of convergence of the proposed truncation, a feature that is particularly relevant in high-dimensional and large-correlation-length regimes encountered in practice.\\

For any function $f\in \mathcal{L}^1(\R^d)$, we define its Fourier transform $\Hat{f}$ as:
\begin{equation}\label{eq:fourier}
    \forall \boldsymbol{\omega} \in \R^d,\ \ \hat{f}(\boldsymbol{\omega}):= \frac{1}{\left(2\pi\right)^{d}}\int_{\R^d}f(\mathbf{x})e^{-i\boldsymbol{\omega}^T\mathbf{x}}d\mathbf{x}.
\end{equation}
The Fourier transform $\hat{f}$ is continuous and bounded. If $\hat{f}$ also belongs to $\mathcal{L}^1(\R^d)$, then its inverse Fourier transform is given by
\begin{eqnarray}\label{eq:inverse_fourier}
    \forall \mathbf{x} \in \R^d, \ \  f(\mathbf{x}):= \int_{\R^d}\hat{f}(\boldsymbol{\omega})e^{i\boldsymbol{\omega}^T\mathbf{x}}d\boldsymbol{\omega}
\end{eqnarray}
Let $J_\nu$ denote the Bessel function of the first kind of order $\nu>0$ \citep[10.2.2]{dlmf}:
\begin{eqnarray}
\label{eq:bessel}
J_{\nu}(z)=\sum_{n=0}^{\infty}\frac{(-1)^{n}(z/2)^{\nu+2n}}{n!\Gamma(n+\nu+1)}
\end{eqnarray}
where  $\Gamma(\cdot)$ is the Gamma function \citep[5.2]{dlmf} defined for $\Re(z)>0$ by
\[
\Gamma(z)=\int_{0}^{\infty} t^{z-1} e^{-t}\,dt, \tab z\in \C
\]
For any $(p,q)\in \N^2$, $\left(a_1,\dots,a_p \right),\left(b_1,\dots,b_q\right) \in \R^p\times \R^q$, we introduce the generalized hypergeometric function ${}_pF_q$ \citep[16.2]{dlmf}, defined by:
\begin{eqnarray}
    {}_pF_q\big(a_1,\dots,a_p; b_1,\dots,b_q; z\big)
= \sum_{n=0}^{\infty}
\frac{(a_1)_n (a_2)_n \cdots (a_p)_n}
{(b_1)_n (b_2)_n \cdots (b_q)_n}
\frac{z^n}{n!},
\end{eqnarray}
% Definition of the Pochhammer symbol
where $(a)_n$ is the Pochhammer symbol (also called rising factorial):
\[
(a)_n = a(a+1)\cdots(a+n-1)
= \frac{\Gamma(a+n)}{\Gamma(a)}.
\]
For any sequence $\left(a_n\right)_{n\in \N}\in \R^{\N}$ such that the following series $h$ is absolutely convergent on $D\subset \R$,
\begin{equation}
    \forall x\in D,\ \  h(x):=\sum_{n=0}^{+\infty}a_nx^n,
\end{equation}
we define its truncated version $\hat{h}^N$ as:
\begin{eqnarray}
    \forall x\in D,\ \  \hat{h}^N(x):=\sum_{n=0}^{N}a_nx^n.
\end{eqnarray}
Similarly, the truncated version of the Bessel function $J_{\alpha}$ and the generalized hypergeometric function are respectively:
\begin{eqnarray}
\label{eq:truncatedbesselandhyper}
\forall N\in \N,\tab
    \begin{cases}
        \widehat{J}^N_{\alpha}(x):=\sum_{m=0}^{N} 
\frac{(-1)^m}{m!\,\Gamma(m+\alpha+1)}
\left(\frac{x}{2}\right)^{2m+\alpha} \\
 \widehat{{}_pF_q}^N\!\big(a_1,\dots,a_p; b_1,\dots,b_q; z\big)
:= \sum_{k=0}^{N}
\frac{(a_1)_k (a_2)_k \cdots (a_p)_k}
{(b_1)_k (b_2)_k \cdots (b_q)_k}
\frac{z^k}{k!},
    \end{cases}
\end{eqnarray}
For the sake of generality, we choose to consider the S-fBM kernel defined in $\R^d$, $d\geq 1$ as follows:
\begin{eqnarray}
\label{eq:S-fBM_kernel}
C_{\omega}(\mathbf{x}):=\frac{\nu^{2}}{2}\left(1-\left(\frac{\left\Vert \mathbf{x}\right\Vert }{T}\right)^{2H}\right)\mathbbm{1}_{\{\left\Vert \mathbf{x}\right\Vert \leq T\}}\ ,\ \mathbf{x}\in\mathbb{R}^{d}
\end{eqnarray}
For any continuous functional $f:\R^d \longrightarrow \R$ and any compact $\mathcal{K}\subset \R^d$, we denote:
\begin{eqnarray}
    \left|\left| f\right|\right|_{\infty}^\mathcal{K}:=\underset{x\in \mathcal{K}}\sup \left|f(x) \right|
\end{eqnarray}
Similarly, we denote the truncated $\mathcal{L}^1$ norm for any arbitrary function $f\in \mathcal{L}^{\infty}\left(\R^d,\R\right)$ and $r>0$:
\begin{eqnarray}
    \|f\|_{1}^{>r}
:= 
\int_{\R^d \setminus B(0, r)} |f(\mathbf{x})| \, d\mathbf{x}.
\end{eqnarray}
where $B(0,r):=\{\mathbf{x}\in \R^d,\|\mathbf{x}\|\leq r\}$ is the centred ball of radius $r$. In the sequel, we consider $H\in \left(0,\frac{1}{2}\right)$ and $T>0$.

\subsection{Fourier transform of the S-fBM kernel}
\label{subsec:fourier_transform}

In this section, we explicitly derive the Fourier transform of the S-fBM kernel. When the parameters of the kernel are such that it is positive definite (see Subsection~\ref{subsec:positive_definite}), then this Fourier transform is a probability density function, namely the spectral density of the S-fBM kernel.
The first straightforward attempt is the following result, proved in Appendix~\ref{subsec:proof_spectral_density_sfbm}.
\begin{prop}
\label{prop:spectraldensityS-fBM}
     If $d\in 2\N^{*}$, the following claims hold:
     \begin{enumerate}
         \item  The spectral density of the S-fBM kernel admits the following representation:
   \begin{eqnarray}
   \label{eq:spectraldensityS-fBM}
    \forall c\in \R^d,\tab  g_{\omega}(\mathbf{x})
=\frac{\lambda^2 T^d}{H(1-2H)}
\left[
\frac{J_{\frac{d}{2}}(T\|\mathbf{x}\|)}{(T\|\mathbf{x}\|)^{\frac{d}{2}}}
-T^{1-\frac{d}{2}}\sum_{m=0}^\infty
\frac{(-1)^m}{m!\,\Gamma(m+\tfrac d2)}
\frac{(T\|\mathbf{x}\|/2)^{2m}}{2H+d+2m}
\right]
\end{eqnarray}
\item For any $N\in \N$ and any compact $\mathcal{K}\subset \R^d$ , there exist a positive constant $C$ such that:
\begin{eqnarray}
    \left\| \Hat{g}^N_{\omega}- g_{\omega} \right\|_{\infty}^\mathcal{K} \leq \frac{(TC/2)^{2N}}{N!\,\Gamma(N+\frac{d}{2})(N+\frac{d}{2})} \left(
\left(\frac{TC}{2}\right)^{\frac{d}{2}}+1\right)
\end{eqnarray}
where:
\begin{eqnarray}
    \forall \mathbf{x}\in \mathbb{R}^d,\quad 
    \Hat{g}^N_{\omega}(\mathbf{x})=\frac{\lambda^2 T^d}{H(1-2H)}
\left[
\frac{\widehat{J}^N_{\frac{d}{2}}(T\|\mathbf{x}\|)}{(T\|\mathbf{x}\|)^{\frac{d}{2}}}
-T^{1-\frac{d}{2}}\sum_{m=0}^N
\frac{(-1)^m}{m!\,\Gamma(m+\frac{d}{2})}
\frac{(T\|\mathbf{x}\|/2)^{2m}}{2H+d+2m}
\right] \nonumber \\
\end{eqnarray}
where the truncated version of the Bessel function $\widehat{J}^N_{\frac{d}{2}}$ is defined as in Eq.~\eqref{eq:truncatedbesselandhyper}.\\

     \end{enumerate}
\end{prop}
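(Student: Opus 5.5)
The plan is to exploit radial symmetry. For a radial function $f(\mathbf{x})=f_0(\|\mathbf{x}\|)$ in $\mathcal{L}^1(\R^d)$, the convention \eqref{eq:fourier} combined with the classical Bochner/Hankel formula gives, with $\rho=\|\boldsymbol{\omega}\|$,
\[
\hat f(\boldsymbol{\omega})=\frac{1}{(2\pi)^{d/2}}\,\rho^{1-\frac d2}\int_0^\infty f_0(r)\,J_{\frac d2-1}(\rho r)\,r^{\frac d2}\,dr .
\]
I will apply this to \eqref{eq:S-fBM_kernel}, using $\nu^2=\lambda^2/(H(1-2H))$ from \eqref{eq:nu2intermformula}. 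I then split the profile as
\[
\tfrac{\nu^2}{2}\Big(\mathbbm{1}_{\{r\le T\}}-(r/T)^{2H}\mathbbm{1}_{\{r\le T\}}\Big),
\]
so that the transform becomes the difference of two radial integrals over $[0,T]$. The hypothesis $d\in2\N^*$ is convenient because the Bessel orders $\frac d2-1$ and $\frac d2$ are then nonnegative integers. This keeps the series \eqref{eq:bessel} free of any fractional-power ambiguity and makes the Gamma factors elementary. I will not rely on it beyond that point.

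For the indicator part, I will use the recurrence $\frac{d}{dz}\big[z^{\nu}J_{\nu}(z)\big]=z^{\nu}J_{\nu-1}(z)$ with $\nu=\frac d2$. It gives
\[
\int_0^T J_{\frac d2-1}(\rho r)\,r^{\frac d2}\,dr=\rho^{-1}T^{\frac d2}J_{\frac d2}(\rho T).
\]
After multiplying by $\rho^{1-\frac d2}$, this produces the term $T^d\,J_{\frac d2}(T\rho)/(T\rho)^{\frac d2}$ in \eqref{eq:spectraldensityS-fBM}.

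For the power part, I will substitute the series \eqref{eq:bessel} for $J_{\frac d2-1}(\rho r)$ and integrate term by term. This is legitimate because the series converges uniformly on the compact interval $[0,T]$ and the factor $r^{2H+d/2}$ is bounded there. Each term then reduces to
\[
T^{-2H}\int_0^T r^{2H+d-1+2m}\,dr=\frac{T^{d+2m}}{2H+d+2m},
\]
which gives exactly the series with coefficients $\frac{(-1)^m}{m!\,\Gamma(m+\frac d2)}\frac{(T\rho/2)^{2m}}{2H+d+2m}$.

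Collecting the prefactors $(2\pi)^{-d/2}$, $2^{1-d/2}$ and $\nu^2/2$ then yields claim 1, up to the bookkeeping of the normalising constants, which I will check against the stated prefactor $\lambda^2T^d/(H(1-2H))$ and the factor $T^{1-\frac d2}$.

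For claim 2, I fix a compact $\mathcal{K}\subset B(0,C)$ and note that $\hat g^N_\omega-g_\omega$ consists only of the tails $m>N$ of two entire series in $T\|\mathbf{x}\|/2\le TC/2$. For the Bessel term, $J_{\frac d2}(z)/z^{\frac d2}$ is $2^{-\frac d2}$ times the series $\sum_m (-1)^m (z/2)^{2m}/(m!\,\Gamma(m+\frac d2+1))$. Its tail will be bounded by the first neglected term times a geometric factor, using the ratio of consecutive terms $\frac{(TC/2)^2}{(m+1)(m+\frac d2+1)}$. For the second series, the extra factor $1/(2H+d+2m)$ is at most $1/(2(N+\frac d2))$. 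Both tails are therefore of order
\[
\frac{(TC/2)^{2N}}{N!\,\Gamma(N+\frac d2)(N+\frac d2)},
\]
with the Bessel contribution carrying an additional power $(TC/2)^{d/2}$ coming from $J$ itself. Adding the two tails gives the factor $\big((TC/2)^{d/2}+1\big)$.

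I expect the main obstacle to be this last step: producing the clean closed form without a geometric-sum factor, uniformly in $N$. I will first try the alternating-series estimate, which applies once the terms are decreasing, i.e. for $(N+1)(N+\frac d2)\ge (TC/2)^2$. For the finitely many smaller $N$, I will absorb the discrepancy by enlarging the constant $C$ attached to $\mathcal{K}$. Since the statement only asserts the existence of such a $C$, this is permitted. I will also track the $\lambda^2T^d/(H(1-2H))$ prefactor carefully, since it multiplies the bound.
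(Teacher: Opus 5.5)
Your argument for claim 1 is the paper's: the radial (Hankel) form of the transform, the split of the profile into the ball indicator and the power part, the ball term via $\frac{d}{dz}[z^{\nu}J_\nu(z)]=z^{\nu}J_{\nu-1}(z)$, and termwise integration of the series of $J_{\frac d2-1}$ for the power part. The gap is the step you postpone as ``bookkeeping'': it cannot be completed, because the identity as stated is wrong. Since $\rho^{1-\frac d2}(\rho/2)^{\frac d2-1}=2^{1-\frac d2}$, your own termwise computation gives
\[
\rho^{1-\frac d2}\,T^{-2H}\int_0^T r^{2H+\frac d2}J_{\frac d2-1}(\rho r)\,dr=2^{1-\frac d2}\,T^{d}\sum_{m\ge0}\frac{(-1)^m}{m!\,\Gamma(m+\frac d2)}\,\frac{(T\rho/2)^{2m}}{2H+d+2m}.
\]
With \eqref{eq:fourier} you therefore obtain
\[
g_\omega(\mathbf{x})=\frac{\lambda^2T^d}{2(2\pi)^{\frac d2}H(1-2H)}\left[\frac{J_{\frac d2}(T\|\mathbf{x}\|)}{(T\|\mathbf{x}\|)^{\frac d2}}-2^{1-\frac d2}\sum_{m\ge0}\frac{(-1)^m}{m!\,\Gamma(m+\frac d2)}\,\frac{(T\|\mathbf{x}\|/2)^{2m}}{2H+d+2m}\right].
\]
The overall constant is only a matter of normalisation. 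The factor in front of the series is not: $2^{1-\frac d2}$ and $T^{1-\frac d2}$ agree only when $d=2$ or $T=2$. For example, take $d=4$ and $T=1$. The stated bracket at $\mathbf{x}=0$ equals $\frac18-\frac{1}{4+2H}<0$. But every Fourier transform of $C_\omega$ at the origin is a positive multiple of $\int_{\R^d}C_\omega>0$. With $2^{-1}$ in place of $T^{-1}$, the bracket is $\frac{2H}{8(4+2H)}>0$, as it should be.

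The paper's proof reaches $T^{1-\frac d2}$ through a slip in exactly this step. It records the termwise integral as $T^{2H+2m+\frac d2+1}/(2H+2m+\frac d2+1)$, which produces the spurious power of $T$. The factor $2^{1-\frac d2}$ is then lost in the ``simplifying'' step. It also uses a $(2\pi)^{-d/2}$ normalisation rather than \eqref{eq:fourier}, and drops the $\frac12$ in $\nu^2/2$. So do not try to reconcile your constants with the statement. State and prove the corrected formula above, which is what your argument actually establishes.

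For claim 2 your plan works and is more careful than the paper's. The paper asserts that the Bessel coefficients decrease for every $m$. This fails when $T r(\mathcal{K})$ is large, since the ratio of consecutive terms is $\frac{(T\|\mathbf{x}\|/2)^2}{(m+1)(m+\frac d2+1)}$. The paper also forgets to divide the Bessel remainder by $(T\|\mathbf{x}\|)^{\frac d2}$, and that is where the factor $(TC/2)^{\frac d2}$ in the bound comes from. In your correctly normalised series, the tail of $J_{\frac d2}(z)/z^{\frac d2}$ carries $2^{-\frac d2}$, not an extra power ``from $J$ itself''. The ``$+1$'' in the bound covers it.

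One adjustment is needed. Fix the threshold of the alternating-series regime through the radius $r(\mathcal{K})$, not through the $C$ you later enlarge; otherwise enlarging $C$ moves the threshold. Then use that the right-hand side is increasing and unbounded in $C$ to absorb the finitely many small $N$, the prefactor, and the constant in front of the second series. Note also that with the quantifiers exactly as written ($C$ chosen after $N$), claim 2 follows from that monotonicity alone. Your argument is what gives the meaningful version, with $C$ independent of $N$.
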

A particular derivation of Eq.~\eqref{eq:spectraldensityS-fBM} where $d=1$ is developed in Appendix~\ref{subsec:proof_spectral_density_sfbmd1}.\\\\
Besides, the spectral density admits a representation denoted $f_{\omega}$ using the Hypergeometric function, more suitable for simulation purposes, which is the object of the following theorem proved in Appendix~\ref{subsec:proof_fourier_sfbm_hypergeometric}.
\begin{thm}
\label{thm:fourier_S-fBMhypergeometric}
The spectral density $f_{\omega}$ satisfies the following:
\begin{enumerate}
    \item 
    \begin{eqnarray}
    \label{eq:spectraldensityhypergeom}
f_{\omega}(\mathbf{x})=\frac{\nu^{2}T^{d}}{2^{d+1}\pi^{\frac{d}{2}}\left(\frac{d}{2H}+1\right)\Gamma\!\left(\frac{d}{2}+1\right)}\,_{1}F_{2}\left(\frac{d}{2}+H;\frac{d}{2}+H+1,\frac{d}{2}+1;-\frac{\left\Vert \mathbf{x}\right\Vert ^{2}T^{2}}{4}\right),\ \mathbf{x}\in\mathbb{R}^{d},\label{eq:fourier_S-fBM_kernel}
\end{eqnarray}

\item For any compact $\mathcal{K}\subset \R^d$, the following error bound holds:
\begin{eqnarray}
   \forall N\in \N^{*},\tab \left\|\Hat{f}^N_{\omega}-f_{\omega} \right\|_{\infty}^\mathcal{K}\leq\frac{\nu^{2}T^{d}}{2^{d+1}\pi^{\frac{d}{2}}\left(\frac{d}{2H}+1\right)\Gamma\!\left(\frac{d}{2}+1\right)}\frac{C}{e\left(2C r\left(\mathcal{K}\right) e(N-1)-1\right)\left(N-1\right)^{2C r\left(\mathcal{K}\right) e(N-1)-1}}\nonumber\\
\end{eqnarray}
where $C$ is a positive constant and $r\left(\mathcal{K}\right)=\underset{\mathbf{x}\in \mathcal{K}}\sup\| \mathbf{x}\|$ the radius of $\mathcal{K}$.
\end{enumerate}

\end{thm}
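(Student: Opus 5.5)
The plan is to compute the Fourier transform~\eqref{eq:fourier} of the radial function $C_\omega$ of Eq.~\eqref{eq:S-fBM_kernel} with the classical radial (Hankel) formula. For a radial $f(\mathbf{x})=\phi(\|\mathbf{x}\|)$ in $\mathbb{R}^d$ this formula gives
\[
\hat f(\boldsymbol{\omega})=\frac{1}{(2\pi)^{d/2}}\,\|\boldsymbol{\omega}\|^{1-\frac d2}\int_0^\infty \phi(r)\,J_{\frac d2-1}(\|\boldsymbol{\omega}\| r)\,r^{\frac d2}\,dr .
\]
I would then expand $J_{\frac d2-1}$ with the series~\eqref{eq:bessel}. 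After this expansion the prefactor $\|\boldsymbol{\omega}\|^{1-\frac d2}$ cancels against $\|\boldsymbol{\omega}\|^{\frac d2-1}$ from the leading power, and only even powers $(\|\boldsymbol{\omega}\|/2)^{2n}$ remain.

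Since $\phi$ is supported on $[0,T]$ and the series converges uniformly on compacts, summation and integration can be exchanged. For each $n$ this leaves the elementary integral
\[
\frac{\nu^2}{2}\int_0^T\Big(1-(r/T)^{2H}\Big)r^{d-1+2n}\,dr=\frac{\nu^2}{2}\,T^{d+2n}\Big(\frac{1}{d+2n}-\frac{1}{d+2n+2H}\Big)=\frac{\nu^2}{2}\,T^{d+2n}\,\frac{2H}{(d+2n)(d+2n+2H)}.
\]
The next step is to rewrite the resulting coefficients with Pochhammer symbols. I would use
\[
\frac{1}{d+2n}=\frac{1}{d}\frac{(d/2)_n}{(d/2+1)_n},\qquad \frac{2H}{d+2n+2H}=\frac{2H}{d+2H}\frac{(d/2+H)_n}{(d/2+H+1)_n},
\]
together with $\Gamma(n+d/2)=\Gamma(d/2)(d/2)_n$. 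The $(d/2)_n$ factors cancel, which leaves the series
\[
\sum_n \frac{(d/2+H)_n}{(d/2+H+1)_n(d/2+1)_n}\frac{(-\|\boldsymbol{\omega}\|^2T^2/4)^n}{n!},
\]
that is, exactly ${}_1F_2$ with the stated parameters. Collecting constants, namely $(2\pi)^{-d/2}$, $2^{1-d/2}$ from $(z/2)^{d/2-1}$, $\frac{\nu^2}{2}\frac{2H}{d(d+2H)}$ and $1/\Gamma(d/2)$, and using $d\,\Gamma(d/2)/2=\Gamma(d/2+1)$ and $\frac{2H}{d+2H}=(\frac{d}{2H}+1)^{-1}$, should reproduce the prefactor in~\eqref{eq:spectraldensityhypergeom}. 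One can check this in the case $d=1$ against Appendix~\ref{subsec:proof_spectral_density_sfbmd1}. Unlike Proposition~\ref{prop:spectraldensityS-fBM}, this derivation never splits off the $J_{d/2}$ term, so it holds for every $d\geq1$, not only for even $d$.

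For part 2, I would bound the tail $\sum_{k>N}|a_k|\,z^k$, where $z=r(\mathcal{K})^2T^2/4$ and
\[
|a_k|=\frac{(d/2+H)_k}{(d/2+H+1)_k(d/2+1)_k\,k!}\leq\frac{1}{(d/2+1)_k\,k!}\leq\frac{1}{(k!)^2}.
\]
Stirling's bound $k!\geq(k/e)^k$ then gives terms dominated by $(Cr(\mathcal{K})e/k)^{2k}$, with $C$ absorbing $T/2$. The main obstacle is to turn this tail into the specific closed form of the statement, $\frac{C}{e(a-1)(N-1)^{a-1}}$ with $a=2Cr(\mathcal{K})e(N-1)$. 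I would compare the sum with an integral $\int_{N-1}^\infty (\cdot)\,dx$, bounding the summand by a power $x^{-a}$ on $[N-1,\infty)$. This gives $\int_{N-1}^\infty x^{-a}dx=\frac{(N-1)^{1-a}}{a-1}$, and the remaining factors $C/e$ are absorbed into the constant. The delicate part is choosing $C$ (possibly large, and depending on $T$) so that $a>1$ and the pointwise domination holds for all $k\geq N$; the constant can be enlarged freely since it is unspecified.
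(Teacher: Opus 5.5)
Your part~1 is the paper's argument: radial Hankel formula, termwise integration of the Bessel series, then Pochhammer bookkeeping. The paper works with the standardized kernel ($T=1$, unit amplitude) and rescales at the end, whereas you carry $T$ and $\nu$ throughout; your constants do reproduce the stated prefactor.

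Part~2 breaks exactly at the step you flag, and your remedy goes the wrong way. The constant cannot be enlarged freely, because it also sits in the exponent. Set $\beta:=2Cr(\mathcal K)e(N-1)$ (your $a$) and $R_N(C):=\frac{C}{e(\beta-1)(N-1)^{\beta-1}}$. Then
\[
\frac{d}{dC}\log R_N(C)=-\frac{1}{C}\Big(\frac{1}{\beta-1}+\beta\log(N-1)\Big)<0\qquad\text{whenever }\beta>1,
\]
so a larger $C$ makes the claimed bound \emph{stronger}, not weaker. Likewise, the domination $\big(\frac{e\,r(\mathcal K)T}{2k}\big)^{2k}\lesssim k^{-\beta}$ at $k=N+1$ fails for large $N$ as soon as $Cr(\mathcal K)e>1$. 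The left side is $e^{-2N\log N+O(N)}$, while the right side is $e^{-2Cr(\mathcal K)e\,N\log N+O(\log N)}$.

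In fact no choice of $C$ can work: the inequality as stated, with one $C$ for all $N\in\N^*$, is false once $r(\mathcal K)T$ is moderately large.
\begin{itemize}
\item At $N=1$ the right-hand side is undefined, and positivity at $N=2$ forces $2Cr(\mathcal K)e>1$.
\item By the monotonicity above, the right-hand side at $N=3$ is then smaller than its value at $C=1/(2r(\mathcal K)e)$, namely $1/(4r(\mathcal K)e^2)$ after dividing out the common prefactor.
\item On the other hand, ${}_1F_2\big(\tfrac d2+H;\tfrac d2+H+1,\tfrac d2+1;-\tfrac{\|\mathbf x\|^2T^2}{4}\big)=f_\omega(\mathbf x)/f_\omega(\mathbf 0)$ is the normalized Fourier transform of a nonnegative kernel, so it lies in $[-1,1]$.
\item Meanwhile the terms of its series peak near $k\approx r(\mathcal K)T/2$, with size exponential in $r(\mathcal K)T$.
\end{itemize}
Concretely, take $d=1$, $H=0.1$, $\mathcal K=[-1,1]$ and $T=10$, so $z=25$ at $|x|=1$. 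Then $\widehat{{}_1F_2}^{\,3}=1-6.25+19.23-33.07\approx-19.1$. The truncation error at $N=3$ therefore exceeds $18$, while the claimed bound is below $1/(4e^2)\approx0.034$.

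The paper reaches its closed form only through an unexplained ``after some algebra'' following the same integral comparison, so there is no hidden trick you are missing. What your estimate $|a_k|z^k\le(e\sqrt z/k)^{2k}$ genuinely gives, for $q:=\frac{e\,r(\mathcal K)T}{2(N+1)}<1$, is
\[
\sup_{\mathbf x\in\mathcal K}\Big|\widehat{{}_1F_2}^{\,N}-{}_1F_2\Big|\le\sum_{k\ge N+1}q^{2k}=\frac{q^{2(N+1)}}{1-q^2}.
\]
This is a super-exponential rate, but it only takes effect once $N\gtrsim e\,r(\mathcal K)T/2$. A bound of this form is what you should aim to prove instead.
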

These results reveal a curse of dimensionality. A fortiori when $T$ is significant which is the case in financial applications see for instance the  experiments reported in \citet[Figures 8 and 9]{muzy2013random}  in the context of MRW model. However, the exponential speed of convergence compensates this effect.\\
The exponential speed also appears in the error related to the truncated norm as showcased in the following result.
\begin{prop}
\label{prop:queuelonenormerror}
    The following claim holds:
    \begin{eqnarray}
       \forall r>0,\tab  \left\|  \hat f_\omega^N -f_\omega \right\|_1^{>r}\le 
 \frac{\Omega_d \, C(T,d,\nu,H)}{T^{d}} \, \frac{(Tr + \kappa(d))^d  - Tr^d}{d\kappa(d)}e^{-Tr}
    \end{eqnarray}
where:
\begin{itemize}
\item $\Omega_d = \frac{2 \pi^{\frac{d}{2}}}{\Gamma(\frac{d}{2})}$
    \item $\kappa(d) =
\begin{cases}
\gamma(d+1)^{\frac{1}{d-1}}, & \text{if } d > 1, \\[1.2em]
e^{1-\gamma} , & \text{if } d = 1
\end{cases}
$
 \item $\gamma \approx 0.577$ is the Euler constant

 \item $C(T,d,\nu,H) = A\left(\frac{T^2}{2}\right)^{\frac{d}{2}} \frac{\nu^2}{\Gamma(\frac{d}{2})} \frac{d+H}{d(d+H)}$, $A$ is a positive constant.
\end{itemize}
\end{prop}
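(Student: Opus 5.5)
The plan is to bound the truncation tail of the hypergeometric series pointwise by an exponentially decaying radial function, and then integrate it over the exterior of the ball $B(0,r)$ in polar coordinates. By Theorem~\ref{thm:fourier_S-fBMhypergeometric}, $f_\omega$ is the prefactor times ${}_1F_2(a;a+1,b;-z)$ with $a=\frac d2+H$, $b=\frac d2+1$ and $z=\|\mathbf{x}\|^2T^2/4$. Hence
\[
f_\omega(\mathbf{x})-\hat f^N_\omega(\mathbf{x})=c_0\sum_{k>N}\frac{(a)_k}{(a+1)_k(b)_k}\frac{(-z)^k}{k!},
\qquad \frac{(a)_k}{(a+1)_k}=\frac{a}{a+k}\le 1 .
\]
Each coefficient is therefore dominated by $\frac{1}{(b)_k k!}$. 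Writing $(b)_k=\Gamma(b+k)/\Gamma(b)$ and using $\Gamma(b+k)k!\ge$ a multiple of $\Gamma(k+1)^2$, the tail is dominated, up to $\Gamma(\frac d2+1)$ and an absolute constant $A$, by a series that can be compared with an exponential in $T\|\mathbf{x}\|$ after substituting $\sqrt z=T\|\mathbf{x}\|/2$. The target pointwise estimate is
\[
|\hat f^N_\omega(\mathbf{x})-f_\omega(\mathbf{x})|\le \frac{C(T,d,\nu,H)}{T^d}\,e^{-T\|\mathbf{x}\|}.
\]
I would obtain it from the Bessel-type identity: ${}_1F_2$ with these parameters is an integral of $J_{d/2}$-type kernels, $\int_0^1 s^{2H+d-1}(\cdot)$, which follows from the Fourier representation proved in Appendix~\ref{subsec:proof_fourier_sfbm_hypergeometric}. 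I would then invoke a decay estimate for the remainder, absorbing the $N$-dependence into $A$. The factors $(T^2/2)^{d/2}\nu^2/\Gamma(\frac d2)$ and $\frac{d+H}{d(d+H)}$ in $C(T,d,\nu,H)$ come from collecting the prefactor $\frac{\nu^2T^d}{2^{d+1}\pi^{d/2}(\frac{d}{2H}+1)\Gamma(\frac d2+1)}$ together with $a/(a+k)$.

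The second step is to integrate this bound in polar coordinates:
\[
\|\hat f^N_\omega-f_\omega\|_1^{>r}\le \frac{C}{T^d}\,\Omega_d\int_r^\infty \rho^{d-1}e^{-T\rho}\,d\rho,
\]
where $\Omega_d=2\pi^{d/2}/\Gamma(\frac d2)$ is the surface area of the unit sphere. The remaining integral is an upper incomplete gamma function $T^{-d}\Gamma(d,Tr)$.

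The third step is to bound $\Gamma(d,Tr)$ in the announced form. I would use the known sharp inequality of Alzer type for the incomplete gamma function,
\[
\Gamma(d,y)\le \frac{(y+\kappa)^d-y^d}{d\kappa}e^{-y},
\qquad \kappa=\kappa(d)=\Gamma(d+1)^{1/(d-1)} \text{ for } d>1,
\]
with the limiting value $\kappa=e^{1-\gamma}$ when $d=1$. The definition of $\kappa(d)$ in the statement presumably intends $\Gamma$ rather than $\gamma$. Substituting $y=Tr$ then yields the stated expression; the term written $Tr^d$ should be read as $(Tr)^d$.

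The main obstacle is the first step: producing a pointwise exponential envelope for an \emph{alternating} hypergeometric tail uniformly in $\mathbf{x}$. Crude termwise bounds grow like $e^{T\|\mathbf{x}\|}$ rather than decay. I would therefore bound $f_\omega$ and $\hat f^N_\omega$ separately where needed, use the oscillatory Bessel representation to get decay of $f_\omega$, and control the polynomial $\hat f^N_\omega$ on the relevant range, letting the constant $A$ absorb the remaining $N$- and $H$-dependence.
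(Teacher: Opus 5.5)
Your outline (a pointwise exponential envelope, then polar coordinates giving $T^{-d}\Gamma(d,Tr)$, then the Pinelis/Alzer bound with $\kappa(d)=\Gamma(d+1)^{1/(d-1)}$) is exactly the paper's route, and your readings of the typos $\gamma\to\Gamma$ and $Tr^d\to(Tr)^d$ are correct. The gap is step~1, and it is worse than hard: the envelope $|\hat f^N_\omega(\mathbf{x})-f_\omega(\mathbf{x})|\le C e^{-T\|\mathbf{x}\|}$ on $\{\|\mathbf{x}\|>r\}$ is false for every constant $C$, so no argument can supply it. By construction $\hat f^N_\omega$ is a polynomial of degree $N$ in $\|\mathbf{x}\|^2$. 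Its coefficients $c_0\frac{(a)_n}{(a+1)_n(b)_n\,n!}(-T^2/4)^n$ are all nonzero, and the constant term is $c_0>0$. On the other hand, $f_\omega(\mathbf{x})\to 0$ by the Riemann--Lebesgue lemma. Hence $|\hat f^N_\omega-f_\omega|$ stays bounded away from $0$ at infinity, and grows like $\|\mathbf{x}\|^{2N}$ when $N\ge 1$. So the left-hand side $\|\hat f^N_\omega-f_\omega\|_1^{>r}$ equals $+\infty$ for every $N$ and $r$. Your plan to ``control the polynomial on the relevant range'' and let $A$ absorb the $N$-dependence cannot work, because that range is unbounded. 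The other half of your plan fails too. $f_\omega$ itself decays only algebraically, because the kernel has a $\|\mathbf{u}\|^{2H}$ cusp at the origin and a kink at $\|\mathbf{u}\|=T$. For $d=1$ one finds $f_\omega(x)\sim c\,|x|^{-1-2H}$ with $c>0$, so $\|f_\omega\|_1^{>r}$ decays like $r^{-2H}$, not like $e^{-Tr}$. No Bessel-type representation can therefore produce an exponential envelope.

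The paper's own proof has exactly this hole. It bounds the tail termwise by $\sum_{n\ge N}z^n/(n!)^2\le Ce^{Cz}$, which is a growing bound, as you observed. It then asserts $|R_N(\mathbf{x})|\le C(T,d,\nu,H)\,e^{-CT\|\mathbf{x}\|}$ for $\|\mathbf{x}\|>r$ without justification, and proceeds as in your steps~2--3. So you located the weak point correctly, but the proposition as stated cannot be proved. A salvageable version would compare $f_\omega$ with $\hat f^N_\omega\mathbbm{1}_{B(0,r)}$. The error on the ball is then controlled as in part~2 of Theorem~\ref{thm:fourier_S-fBMhypergeometric}, and the exterior contributes only $\|f_\omega\|_1^{>r}$, which decays algebraically in $r$. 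A minor bookkeeping point: your pointwise target already carries a factor $T^{-d}$, and the radial integral contributes another. You would therefore end with $T^{-2d}$ rather than the stated $T^{-d}$.
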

The proof is in Appendix~\ref{subsec:proof_prop_lone_norm_error}.\\
This means that the truncated $L^1$ norm of the truncation error decays exponentially in $r$, with an algebraic prefactor.

\subsection{Positive definiteness of the S-fBM kernel\label{subsec:positive_definite}}

Here, we investigate the positive definiteness of the kernel underlying the S-fBM construction, a property that is essential for its interpretation as a valid covariance function in Euclidean space. Since positive definiteness is tightly linked to the non-negativity of the associated spectral density, our analysis relies on Fourier-analytic arguments and classical results on radial kernels. We provide a complete characterization of the admissible values of the smoothness parameter as a function of the ambient dimension $d$, thereby identifying the precise regimes in which the S-fBM kernel is well-posed.
\begin{thm}
\label{thm:positive_definiteness}The isotropic S-fBM kernel $K(\Vert\mathbf{u}\Vert)=\dfrac{\nu^2}{2} \left( 1 - \left( \dfrac{\Vert\mathbf{u}\Vert}{T} \right)^{\!2H} \right)\mathbbm{1}_{\{\Vert\mathbf{u}\Vert\leq T\}}$, $\mathbf{u} \in \R^d$, $\nu>0$, $T>0$, $H>0$, is positive definite in $\mathbb{R}^{d}$ if and only if $0<H\leq\frac{3-d}{4}$.
\end{thm}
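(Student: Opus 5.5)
The plan is to reduce positive definiteness to nonnegativity of the Fourier transform and then to a classical positivity result for integrals of Bessel functions. Since $K(\|\cdot\|)$ is continuous, compactly supported and integrable, Bochner's theorem says $K\in\Phi_d$ if and only if its Fourier transform is nonnegative on $\mathbb{R}^d$. By scaling, we may take $T=1$ and $\nu^2/2=1$. Theorem~\ref{thm:fourier_S-fBMhypergeometric} gives this transform, up to a positive constant, as
\[
{}_1F_2\!\left(a;\,a+1,\,c;\,-\tfrac{z^2}{4}\right),\qquad a=\tfrac d2+H,\quad c=\tfrac d2+1,\quad z=\|\mathbf{x}\|.
\]
So the statement becomes: this function of $z\geq 0$ is nonnegative for all $z$ if and only if $0<H\le\frac{3-d}{4}$.

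The first step is to write this ${}_1F_2$ as an integral of a Bessel function. Because the upper parameter $a$ and the lower parameter $a+1$ differ by one, we have $(a)_n/(a+1)_n=a/(a+n)=a\int_0^1 t^{a+n-1}\,dt$. Summing termwise gives
\[
{}_1F_2\!\left(a;a+1,c;-\tfrac{z^2}{4}\right)=a\int_0^1 t^{a-1}\,{}_0F_1\!\left(;c;-\tfrac{z^2t}{4}\right)dt .
\]
Next use ${}_0F_1(;c;-w^2/4)=\Gamma(c)(w/2)^{1-c}J_{c-1}(w)$ and substitute $s=z\sqrt t$. This yields
\[
{}_1F_2\!\left(a;a+1,c;-\tfrac{z^2}{4}\right)=C_{d,H}\,z^{-(d+2H)}\int_0^z s^{\lambda}J_{d/2}(s)\,ds,\qquad \lambda=\tfrac d2+2H-1,
\]
with $C_{d,H}>0$. (The same identity can be read off from the radial Fourier transform formula used for Proposition~\ref{prop:spectraldensityS-fBM}.) Positive definiteness is therefore equivalent to
\[
I_{\lambda,\nu}(z):=\int_0^z s^{\lambda}J_\nu(s)\,ds\ \ge 0\quad\text{for all } z>0,\qquad \nu=\tfrac d2 .
\]

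Necessity comes from the large-$z$ asymptotics. Using $J_\nu(s)\sim\sqrt{2/(\pi s)}\cos\!\left(s-\tfrac{\nu\pi}{2}-\tfrac{\pi}{4}\right)$ and integrating by parts,
\[
I_{\lambda,\nu}(z)=L_{\lambda,\nu}+\sqrt{2/\pi}\,z^{\lambda-1/2}\sin\!\left(z-\tfrac{\nu\pi}{2}-\tfrac{\pi}{4}\right)+o\!\left(z^{\lambda-1/2}\right).
\]
Here $L_{\lambda,\nu}=2^{\lambda}\Gamma\!\left(\tfrac{\nu+\lambda+1}{2}\right)/\Gamma\!\left(\tfrac{\nu-\lambda+1}{2}\right)$, interpreted as the Abel/Ces\`aro limit when $\lambda\ge 1/2$. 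If $\lambda>1/2$, the oscillating term dominates and makes $I_{\lambda,\nu}$ negative for some large $z$. Now $\lambda\le 1/2$ is equivalent to $2H\le\frac{3-d}{2}$, that is $H\le\frac{3-d}{4}$. This also shows that no $H>0$ works when $d\ge3$.

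Sufficiency for $\lambda\le 1/2$, which is only possible for $d\in\{1,2\}$, is the main obstacle. For $d=1$ we have $\nu=\tfrac12$ and $s^{1/2}J_{1/2}(s)=\sqrt{2/\pi}\sin s$. At the endpoint $\lambda=\tfrac12$ this gives $I\propto 1-\cos z\ge0$. For $\lambda<\tfrac12$, the integral $\int_0^z s^{\lambda-1/2}\sin s\,ds$ has a decreasing positive weight, so it is positive by the standard alternating-hump argument. For $d=2$ we have $\nu=1$ and $\lambda=2H\le\tfrac12$. Here I would invoke the classical theorem of Makai, Steinig and Gasper: for $\nu\ge\tfrac12$, $\int_0^z s^{\lambda}J_\nu(s)\,ds>0$ for all $z>0$ whenever $\lambda\le\tfrac12$. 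The direct route is the hump argument: the successive areas $\int_{j_{\nu,k}}^{j_{\nu,k+1}}s^{\lambda}|J_\nu|$ between consecutive zeros of $J_\nu$ are decreasing when $\lambda\le\tfrac12$, by Sturm comparison applied to $\sqrt{s}J_\nu$. This makes the partial integrals alternate around a positive limit, so they stay positive. The endpoint $\lambda=\tfrac12$, i.e. $H=\frac{3-d}{4}$, is where the comparison is tight. There I expect to need the explicit monotonicity of the hump areas rather than the asymptotic argument, which only gives a bounded oscillation of the same order as the limit.
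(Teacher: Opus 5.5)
Your proof is correct, but its key steps differ from the paper's. The paper works only with the ${}_1F_2$ formula and imports each ingredient:
\begin{itemize}
\item failure of positive definiteness for $2H>\frac{3-d}{2}$ from Zastavnyi's Theorem~5.3(i);
\item the case $d=1$ from Wu et al.;
\item nonnegativity of ${}_1F_2(\tfrac54;\tfrac94,2;\cdot)$ at $d=2$, $H=\tfrac14$ from Cho and Yun's Theorem~5.1;
\item the extension to $H<\tfrac14$ from Golubov's description of the parameter set.
\end{itemize}
That is short, but it hides where $\frac{3-d}{4}$ comes from.

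Your identity ${}_1F_2(a;a+1,c;-z^2/4)\propto z^{-2a}\int_0^z s^{2a-c}J_{c-1}(s)\,ds$ buys two things. First, necessity becomes self-contained and the threshold is explained: it is exactly where the oscillating term $z^{\lambda-1/2}$ stops being bounded, and $d\ge 3$ is excluded at once. Second, it shows that the one nontrivial fact the paper takes from Cho and Yun is the positivity of $\int_0^z s^{1/2}J_1(s)\,ds$, since $a=\tfrac54$, $c=2$ give $\lambda=\tfrac12$, $\nu=1$. That is the classical Bessel-integral result you invoke, so for sufficiency the two proofs rest on the same theorem in different guises.

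Your formulation also makes the interior case elementary. With $F(z)=\int_0^z s^{1/2}J_1(s)\,ds\ge 0$ and $\lambda<\tfrac12$, integration by parts gives $\int_0^z s^{\lambda}J_1(s)\,ds=z^{\lambda-1/2}F(z)+(\tfrac12-\lambda)\int_0^z s^{\lambda-3/2}F(s)\,ds\ge 0$. This is your $d=1$ decreasing-weight argument again, and it replaces the appeal to Golubov.

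Your closing remark about the endpoint is slightly off. For $d=2$ the endpoint is not asymptotically tight. There $L_{1/2,1}=\sqrt2\,\Gamma(\tfrac54)/\Gamma(\tfrac34)\approx 1.05$ strictly exceeds the oscillation amplitude $\sqrt{2/\pi}\approx 0.80$. So your own expansion already gives positivity for all large $z$ with a fixed margin, and only a bounded range of $z$ needs the arch-area argument or the cited theorem.

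Limit and amplitude coincide only at $\nu=\tfrac12$ ($d=1$), where $I\propto 1-\cos z$ vanishes at $z=2k\pi$. So the cited inequality must read $\ge 0$ there, not $>0$.

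If you carry out the Sturm comparison yourself, note that $1-(4\nu^2-1)/(4s^2)<0$ near $s=0$. Check that your comparison also covers the first arch $[0,j_{1,1}]$; numerically $I_{1/2,1}(j_{1,2})\approx 0.24>0$.

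Finally, use another letter for the Bessel order, since $\nu$ already denotes the kernel's scale parameter.
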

\begin{proof}
Since a linear rescaling does not affect positive definiteness, it is sufficient to prove that the standardized kernel $K_{\alpha}(\Vert\mathbf{u}\Vert):=(1-\left\Vert \mathbf{u}\right\Vert ^{\alpha})\mathbbm{1}_{\{\left\Vert \mathbf{u}\right\Vert \leq1\}}$ is positive definite if and only if $0<\alpha\leq\frac{3-d}{2}$.
The Fourier transform $\hat{K}_{\alpha}$
of $K_{\alpha}$ is given by equation~\eqref{eq:fourier_S-fBM_kernel} with $\alpha=2H$ and $T=1$. According to Bochner's theorem, $K_{\alpha}$ is positive definite
in $\mathbb{R}^{d}$ if and only if $\hat{K}_{\alpha}(\mathbf{x})\geq0$
for all $\mathbf{x}\in\mathbb{R}^{d}$. 
Remark that $K_{\alpha}$ coincides with the function $I_{\delta,\mu,\nu,\alpha}(t)$ in
\citet[equation~(9) page~1187]{zastavnyi2006buhmann} with the change
of notation $(\delta,\mu,\nu,\alpha,t)=(\alpha,2,\frac{d-1}{2},d-1,\left\Vert \mathbf{x}\right\Vert )$.
Now, \citet[Theorem~5]{zastavnyi2006buhmann} gives conditions under
which $\hat{K}_{\alpha}$ takes negative values. In our situation,
\citet[Theorem~5.3(i)]{zastavnyi2006buhmann} applies and yields
the fact that $K_{\alpha}$ is not positive definite in $\mathbb{R}^{d}$
if $\alpha>(3-d)/2$. Since $\alpha>0$, this means that the S-fBM
kernel can only be positive definite in the cases $d=1$ or $d=2$.
When $d=1$, $K_{\alpha}$ is positive definite if and only if $\alpha\in(0,1]$
\citep{wu2022rough}. When $d=2$ and $\alpha=1/2$, $\hat{K}_{\alpha}$
is given by $\frac{1}{20\pi}\,_{1}F_{2}\left(\frac{5}{4};\frac{9}{4},2;-\frac{\left\Vert \mathbf{x}\right\Vert ^{2}}{4}\right)$,
which is non-negative according to \citet[Theorem~5.1]{cho2018newton},
see also the diagram \citet[Figure~2]{cho2018newton}. From Bochner's
theorem, this shows that $K_{1/2}$ is positive definite in $\mathbb{R}^{2}$.
Since \citet[Theorem~5.3(i)]{zastavnyi2006buhmann} shows that $K_{\alpha}$
is not positive definite in $\mathbb{R}^{2}$ when $\alpha>1/2$,
we can use the characterization of the parameter space in \citet{golubov1981abel} to conclude that $K_{\alpha}$
is positive definite in $\mathbb{R}^{2}$ for every $\alpha\in(0,1/2]$.
This concludes the proof. 
\end{proof}

Theorem~\ref{thm:positive_definiteness} shows that the isotropic S-fBM kernel can only be positive definite when $d=1$ or $d=2$. For $d=1$, the positive definiteness condition is  $0<H\leq\frac{1}{2}$, corresponding to the rough volatility regime already well documented in the literature \citep{gatheral2018volatility,bayer2016pricing,wu2022rough}. To ensure the non-explosiveness of the S-fBM variance (see Eq.~\eqref{eq:nu2intermformula} and Eq.~\eqref{eq:sfbmcovkernel}), the admissible range of positive definiteness becomes $0<H<\frac{1}{2}$.

\subsection{Error bounds of the RFF scheme from Algorithm \ref{algo:fast_euler}}
\label{subsec:error_bounds_rff_sfbm}

\subsubsection{Monte Carlo convergence rate}
The Markovian structure of HMC described in Section \ref{sec:HMC} is useful to derive a convergence bound of the RFF representation toward the true value. In the following result, we consider the problem of estimating the empirical kernel representation of  Eq.~\eqref{eq:MCrandom_fourier_features_expression}. \\
We denote here $\hat{X}$ the Hamiltonian Markov process whose ergodic distribution is $\hat{\pi}$ defined up to a multiplicative constant times $\hat{f}^N_{\omega}$ the spectral density, for an arbitrary truncation order $N>0$,  obtained via the truncation of the hypergeometric function (see Section~\ref{subsec:fourier_transform}).\\
Without loss of generality, we introduce for any function $h:\mathcal{X}\to[a,b]$ where $\mathcal{X}\subset\R^d$ and reals $a\leq b$  the empirical and theoretical means:
\begin{eqnarray}
    \begin{cases}
        \hat{\mu}_M = \frac{1}{M}\sum_{i=1}^M h(\hat{X}_i) \\
        \mu_M = \frac{1}{M}\sum_{i=1}^M h(Y_i) \\
        \mu= \mathbb{E}\left(h(\eta)\right)
    \end{cases}
\end{eqnarray}
where $\hat{X}$ and $Y$ are the Markov processes with ergodic distribution, defined up to a multiplicative constant, $\hat{f}^N_{\omega}$ and $f_{\omega}$ respectively.\\\\
We present the following convergence guarantees demonstrated in Appendix~\ref{subsec:proof_convergence_hmc_spectral_sfbm} with explicit bounds. In particular, we start from the geometric ergodicity (a power decay of the associated spectral gap) of the underlying Markov processes to end up with the convergence in probability of the sample mean toward the target expectation with exponential rate.  
\begin{thm}
\label{thm:convergenceHMCspectralS-fBM}
There exist positive constants $C$ and $C_{\Psi}$ such that for any arbitrary positive real number $R$ the following guarantee holds:
\begin{eqnarray}
    \mathbb{P}(|\hat{\mu}_M - \mu| \ge t)
\le  \exp\!\Big( - \frac{M t^2}{8 C v} + C \Psi_N(T,\nu,d,H,C_{\Psi}) \Big),
\end{eqnarray}
where:
\begin{itemize}
    \item \begin{align*}
\Psi_N(T,\nu,d,H,C_{\Psi}):=\Bigg[
&\left( \frac{T^2}{2} \right)^{\!\frac{d}{2}}
\frac{\nu^2}{\Gamma\!\left( \tfrac{d}{2} \right)}
\Bigg(
    \frac{1}{d}
    \left( \frac{e T^2}{4} \right)^{\!N}
    \frac{1}{e \, \bigl( \tfrac{d}{2} + 1 \bigr)_N}
\\[0.5em]
&
    +\;
    \frac{1}{d + 2H}
    \frac{C_{\Psi}}{
        e \, \bigl( 2 C_{\Psi} \sqrt{N} e (N-1) - 1 \bigr)
        (N-1)^{2 C_{\Psi} \sqrt{N} e (N-1) - 1}
    }
\Bigg)
\\[0.5em]
&
+\;
\frac{\Omega_d \, C_\omega}{T^{d}}
\frac{(T \sqrt{N} + \kappa(d))^{d} - T N^{\frac{d}{2}}}{d \, \kappa(d)} \,
e^{-T \sqrt{N}}
\Bigg].
\end{align*}
\item \[
v := \frac{(b-a)^2}{4} \left(1
+\frac{8\rho}{1-\rho} \right),
\]
\item $\rho\in [0,1]$ is the ergodicity parameter of the underlying Markov process introduced in Lemma~\ref{lem:covstatiocovHMC}.
\end{itemize}
\end{thm}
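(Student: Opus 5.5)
The plan is to split the deviation $\hat{\mu}_M-\mu$ into a chain that samples from the truncated spectral density $\hat{f}^N_\omega$, and a bias term coming from replacing $f_\omega$ by $\hat{f}^N_\omega$. Write $\hat{\mu}_\infty:=\int h\,d\hat{\pi}$ for the stationary mean of the truncated chain, so that
\[
|\hat{\mu}_M-\mu|\le |\hat{\mu}_M-\hat{\mu}_\infty| + |\hat{\mu}_\infty-\mu| .
\]
The first term will be controlled by a concentration inequality for geometrically ergodic Markov chains. The second term is deterministic and bounded by $(b-a)$ times the total variation distance between the normalized versions of $\hat{f}^N_\omega$ and $f_\omega$. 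That distance is in turn bounded by $\|\hat f^N_\omega-f_\omega\|_{\mathcal{L}^1}$ up to normalization constants.

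\textbf{Step 1 (concentration for the HMC chain).} Start from the geometric ergodicity of the HMC Markov process established in Appendix~\ref{sec:hmc}: the Lyapunov drift condition built from the Hamiltonian gives a spectral gap, so the correlation decays like $\rho^k$ (Lemma~\ref{lem:covstatiocovHMC}). For bounded $h$ with values in $[a,b]$, summing the autocovariances bounds the asymptotic variance of $\hat\mu_M$ by $\frac{(b-a)^2}{4}\bigl(1+\frac{8\rho}{1-\rho}\bigr)=v$. The factor $8\rho/(1-\rho)$ collects $2\sum_k\rho^k$ together with the constant from the covariance lemma. A Bernstein/Hoeffding inequality for chains with spectral gap (Lezaud, or Paulin's version) then gives
\[
\mathbb{P}(|\hat\mu_M-\hat\mu_\infty|\ge t/2)\le C\exp\!\bigl(-Mt^2/(8Cv)\bigr),
\]
where $C$ absorbs the dependence on the initial distribution through $\|d\nu_0/d\hat\pi\|_2$.

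\textbf{Step 2 (bias from truncation).} Split $\mathbb{R}^d$ into the ball $B(0,\sqrt N)$ and its complement.

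On the ball, use the uniform bound of Theorem~\ref{thm:fourier_S-fBMhypergeometric}(2) with $r(\mathcal{K})=\sqrt N$, multiplied by the volume of the ball. This produces the second summand of $\Psi_N$, with $C_\Psi$ the constant of that theorem. The first summand comes from the leading coefficient $\frac{\nu^2T^d}{2^{d+1}\pi^{d/2}(\frac{d}{2H}+1)\Gamma(\frac d2+1)}$ and a tail-of-series term: bound $\frac{(d/2+H)_N}{(d/2+H+1)_N}\le1$, and use $N!\ge (N/e)^N$ to obtain the factor $(eT^2/4)^N/(e(\frac d2+1)_N)$.

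On the complement, use Proposition~\ref{prop:queuelonenormerror} with $r=\sqrt N$, which yields the third summand $\frac{\Omega_dC_\omega}{T^d}\frac{(T\sqrt N+\kappa(d))^d-TN^{d/2}}{d\kappa(d)}e^{-T\sqrt N}$.

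Together these give $\|\hat f^N_\omega-f_\omega\|_1\lesssim\Psi_N$, hence $|\hat\mu_\infty-\mu|\le C'\Psi_N$ after handling the normalizing constants. The normalizations are themselves close, since their difference is bounded by the same $\mathcal L^1$ error.

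\textbf{Step 3 (combination).} If the bias is at most $t/2$, the event $|\hat\mu_M-\mu|\ge t$ forces $|\hat\mu_M-\hat\mu_\infty|\ge t/2$. To obtain the stated multiplicative form instead, bound the total-variation change of measure between the chain run on $\hat\pi$ and on $\pi$ by $e^{C\Psi_N}$ via $1+x\le e^x$, and enlarge $C$ so that the bias shift is absorbed. This produces $\exp(-Mt^2/(8Cv)+C\Psi_N)$.

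\textbf{Main obstacle.} The delicate point is Step 1 uniformly in $N$. The spectral gap $1-\rho$ and the drift constants must not degrade as $\hat f^N_\omega$ changes. Moreover, $\hat f^N_\omega$ is a truncated alternating series, so it is not necessarily a positive density far out, and this must be handled before HMC is even well-defined. My first attempt would be to restrict the HMC state space to the ball where $\hat f^N_\omega>0$, using Theorem~\ref{thm:positive_definiteness} for positivity of $f_\omega$ together with the uniform error bound. I would then take $\rho$ as the worst case over $N$, as the statement implicitly does by treating $\rho$ as given.
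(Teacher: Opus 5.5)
Your decomposition into a Markov-chain fluctuation around $\hat\mu_\infty$ plus a deterministic truncation bias is the right one. But Step~3 does not go through, and it cannot be repaired.

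Since $\hat\mu_M$ depends on the whole path, a change of measure between the chain targeting $\hat\pi$ and the one targeting $\pi$ has to be made on the law of $(\hat X_1,\dots,\hat X_M)$. That likelihood ratio is a product of $M$ one-step factors, so $1+x\le e^x$ yields at best $e^{CM\Psi_N}$, not $e^{C\Psi_N}$. Already for i.i.d.\ sampling, $\|\hat\pi^{\otimes M}-\pi^{\otimes M}\|_{\mathrm{TV}}\to1$.

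More fundamentally, the multiplicative form is false whenever the bias $b_N:=|\hat\mu_\infty-\mu|$ is nonzero, which is the generic case for a bounded $h$. Take $t=b_N/2$. The variance bound $\Var(\hat\mu_M)\le Cv/M$ behind your Step~1 gives $\hat\mu_M\to\hat\mu_\infty$ in probability. Hence, as $M\to\infty$, $\mathbb{P}(|\hat\mu_M-\mu|\ge t)\to1$, while $\exp(-Mt^2/(8Cv)+C\Psi_N)\to0$ for fixed constants.

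What Steps~1--2 really prove is the first sentence of your Step~3, i.e.\ a bound that only bites beyond the bias scale, such as
\[
\mathbb{P}(|\hat\mu_M-\mu|\ge t)\le C\exp\Bigl(-\frac{c\,M\,(t-C'\Psi_N)_+^2}{v}\Bigr).
\]
That is what you should state.

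The paper reaches the multiplicative form only because its ``propagation of errors'' step adds $|\mathbb{E}[h(\hat X_0)]-\mathbb{E}[h(\eta)]|$ to the log-moment generating function without the factor $u$. Since $\log\mathbb{E}e^{u(Y+c)}=uc+\log\mathbb{E}e^{uY}$, the exponent is really $-u(t-b_N)+Cu^2v/M$. Optimizing in $u$ gives exactly the shifted bound above.

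Otherwise your outline matches the paper's. Covariance decay from Lemma~\ref{lem:covstatiocovHMC} gives the variance proxy $v$. The bias is split at $R=\sqrt N$ between Theorem~\ref{thm:fourier_S-fBMhypergeometric}(2) on the ball and Proposition~\ref{prop:queuelonenormerror} outside. Your Step~1 is better grounded than the paper's: the paper just asserts $\log\mathbb{E}e^{uS_M}\le Cu^2\Var(S_M)$ as a ``standard Bernstein-type'' bound, whereas you invoke an actual spectral-gap concentration theorem (Lezaud, Paulin).

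Note, however, that your fix for positivity changes Step~2. Some restriction is indeed unavoidable, since $\hat f^N_\omega$ is a polynomial in $\|\mathbf{x}\|^2$ and is not even integrable on $\mathbb{R}^d$. But once the chain lives on a ball $B(0,r)$, the tail part of the bias is the mass $\int_{\|\mathbf{x}\|>r}f_\omega(\mathbf{x})\,d\mathbf{x}$, not the truncation error of Proposition~\ref{prop:queuelonenormerror}. Because of the cusp of the kernel at the origin, $f_\omega$ decays only like $\|\mathbf{x}\|^{-d-2H}$. This mass is therefore of order $r^{-2H}$ (i.e.\ $N^{-H}$ at $r=\sqrt N$), far larger than the third summand of $\Psi_N$.
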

This theorem provides a non-asymptotic concentration bound for the Monte Carlo estimator $\hat{\mu}_M$ obtained from the HMC sampler tailored to the case of the S-fBM kernel. It states that the estimator converges in probability to the true quantity $\mu$ exponentially fast with the number of samples $M$. The leading term in the exponent is of Gaussian type and reflects the effective variance of the estimator, which depends on the range of the test function and on the geometric ergodicity of the underlying Markov chain through the parameter $\rho$. The additional correction term $\Psi_N(T,\nu,d,H,C_{\Psi})$ captures finite-dimensional and approximation effects arising from truncation, discretization and spectral properties of the underlying S-fBM kernel; importantly, this term does not depend on $M$ but on $N$, the order of truncation of the S-fBM spectral density.\\
Overall, the result shows that HMC achieves reliable statistical accuracy at an exponential rate in the sample size, with explicit constants that quantify both the mixing properties of the chain and the complexity of the target model.\\\\
Similarly, one can derive an $\mathcal{L}^p$ error bound whose proof is in Appendix~\ref{subsec:proof_cor_lp_error_hmc}.
\begin{cor}
\label{cor:lperrorHMC}
    The $p^{th}$ order moment of the discrepancy error satisfies:
\begin{eqnarray}
\label{eq:RFFapproxlperror}
    \|\hat{\mu}_M-\mu \|_{\mathcal{L}^p}
\le e^{C \frac{\Psi_N(T,\nu,d,H,C_{\Psi})}{p}}\,\left(\Gamma\Big(1+\frac{p}{2}\Big)\,\right)^{\frac{1}{p}}
\Big(\frac{1}{8 C v}\,M\Big)^{-\frac{1}{2}},\tab p>0
\end{eqnarray}
where $C$ is a positive constant.
\end{cor}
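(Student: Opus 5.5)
The plan is to integrate the tail bound of Theorem~\ref{thm:convergenceHMCspectralS-fBM} through the layer-cake formula. Write $Z:=|\hat{\mu}_M-\mu|\ge 0$. For any $p>0$,
\begin{equation*}
\mathbb{E}\left[Z^p\right]=\int_0^\infty p\,t^{p-1}\,\mathbb{P}(Z\ge t)\,dt .
\end{equation*}
We then insert the bound of Theorem~\ref{thm:convergenceHMCspectralS-fBM}, namely $\mathbb{P}(Z\ge t)\le e^{C\Psi_N}\exp(-a t^2)$ with $a:=\frac{M}{8Cv}$. Since $\Psi_N=\Psi_N(T,\nu,d,H,C_{\Psi})$ does not depend on $t$ (nor on $M$), the factor $e^{C\Psi_N}$ comes out of the integral.

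The remaining integral is a Gaussian moment. The substitution $s=a t^2$, so that $t=(s/a)^{1/2}$ and $dt=\frac{1}{2}a^{-1/2}s^{-1/2}ds$, gives
\begin{equation*}
\int_0^\infty p\,t^{p-1}e^{-a t^2}\,dt=\frac{p}{2}\,a^{-p/2}\,\Gamma\!\left(\frac{p}{2}\right)=a^{-p/2}\,\Gamma\!\left(1+\frac{p}{2}\right).
\end{equation*}
Hence $\mathbb{E}[Z^p]\le e^{C\Psi_N}\,\Gamma(1+\frac p2)\,a^{-p/2}$. Taking $p$-th roots yields exactly \eqref{eq:RFFapproxlperror}, with the prefactor $e^{C\Psi_N/p}$, the factor $\Gamma(1+\frac p2)^{1/p}$, and $\left(\frac{M}{8Cv}\right)^{-1/2}$.

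The only point needing care is that the tail bound is used for every $t>0$, including small $t$ where it may exceed $1$. This is harmless: we only need an upper bound on the integrand, so we may replace $\mathbb{P}(Z\ge t)$ by the (possibly larger) exponential bound without truncating at $\min(1,\cdot)$. This costs a little sharpness but keeps the stated constant.

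Measurability and finiteness are not an issue either. Because $h$ takes values in $[a,b]$, $Z$ is bounded, so all moments exist and Tonelli's theorem justifies the layer-cake identity. The constant $C$ is the same one as in Theorem~\ref{thm:convergenceHMCspectralS-fBM}, and $v$, $\rho$ are inherited from it unchanged. So there is no genuine obstacle; the argument reduces to this integration step once Theorem~\ref{thm:convergenceHMCspectralS-fBM} is taken as given.
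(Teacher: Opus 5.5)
Your proposal is correct and follows the paper's proof essentially step for step. Both use the layer-cake identity, insert the tail bound of Theorem~\ref{thm:convergenceHMCspectralS-fBM} with the $t$-independent factor $e^{C\Psi_N}$ pulled out, evaluate the Gaussian moment integral as $\frac{p}{2}a^{-p/2}\Gamma(\frac{p}{2})=a^{-p/2}\Gamma(1+\frac{p}{2})$ with $a=\frac{M}{8Cv}$, and take $p$-th roots.
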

This inequality shows that the $p^{\mathrm{th}}$ moment of the estimation error decays at the standard Monte Carlo rate $M^{-\frac{1}{2}}$ as the number of samples $M$ increases. Importantly, this correction grows at most exponentially with $1/p$ and does not affect the asymptotic rate in $M$. Overall, the bound confirms that HMC achieves the optimal root-$M$ convergence rate in $\mathcal{L}^p$, with explicit constants that quantify the influence of the chain’s dependence structure.\\\\
In the upcoming section, we tackle the accelerated numerical simulation scheme of Volterra processes with RFF. First, we go through the main accelerated simulation method focusing on its advantages. Then, we derive quantitative estimates of the developed numerical scheme with explicit bounds for an arbitrary kernel.

\subsubsection{Error bounds}

The following results quantify the propagation of errors using the scheme of Eq.~\eqref{eq:XMN} in the particular case of the S-fBM kernel leveraging the previous theoretical guarantees. As already mentioned, Theorem~\ref{thm:eulerXNXNMRFFerrorbound} and Proposition~\ref{prop:eulerXXNMRFFerrorbound} proved in Appendix~\ref{subsec:proof_euler_xnxnm_rff_error_bound}, provides an $\mathcal{L}^p$ bound on the discrepancy between respectively the Euler scheme $X_N$, its random features approximation $X_N^M$, and the true solution $X$. Now we conduct a similar analysis to derive the corresponding bounds in the case of the S-fBM kernel.

\begin{thm}
\label{thm:eulerXNXNMRFFerrorboundSfBM}
If $\sigma$ satisfies \hyperref[eq:hypothesis1]{\(\bm{\mathcal{H}_{\mathrm{1}}}\)}, \hyperref[eq:hypothesis2]{\(\bm{\mathcal{H}_{\mathrm{2}}}\)} and \hyperref[eq:hypothesis3]{\(\bm{\mathcal{H}_{\mathrm{3}}}\)}, then for any arbitrary compact $\mathcal{K}\subset \R_+$, $p>0$, and $\varepsilon \in \left(\frac{1}{p}, \alpha \wedge 1\right)$ there exist positive constants $K^1_p$ and $C_\mathcal{K}^1$ such that for any $M\in \N$ and $t\in \mathcal{K}$:
\begin{eqnarray}
\sup_{s\in[0,t]} \mathbb{E}\big[|X_s-X_N^M(s)|^p\big]
\le  K^1_p\Bigg(
 \Big(1 + \mathbb{E}\left[|X_0|^p\right] \Big)\, \delta_N^{\,p(\alpha \wedge 1) - \varepsilon}
+
e^{C_\mathcal{K}^1 \left(\Psi_N(T,\nu,d,H,C_{\Psi})+t\right)}\,\Gamma\Big(1+\frac{p}{2}\Big)
\left(\frac{M}{v}\right)^{\!-p/2}
\Bigg)
\end{eqnarray}
\end{thm}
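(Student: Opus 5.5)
We split the error by the triangle inequality in $\mathcal{L}^p$, through the Euler scheme $X_N$ of Eq.~\eqref{eq:XN}:
\[
\mathbb{E}\big[|X_s-X_N^M(s)|^p\big]\le 2^{(p-1)\vee 0}\Big(\mathbb{E}\big[|X_s-X_N(s)|^p\big]+\mathbb{E}\big[|X_N(s)-X_N^M(s)|^p\big]\Big).
\]
The first term is the pure time-discretization error. Here we reuse the discretization part of the proof of Proposition~\ref{prop:eulerXXNMRFFerrorbound}, which depends only on $\mathcal{H}_1$--$\mathcal{H}_3$ and on the integrability of $K$. The S-fBM kernel satisfies Assumptions I and II of \cite{decreusefond2002regularity}, since it is bounded, compactly supported, and H\"older of order $2H$ away from the boundary of its support. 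This term therefore contributes $K_p(1+\mathbb{E}[|X_0|^p])\,\delta_N^{p(\alpha\wedge 1)-\varepsilon}$.

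The core of the argument is the second term. We redo the proof of Theorem~\ref{thm:eulerXNXNMRFFerrorbound}, but replace its generic Monte Carlo estimate with Corollary~\ref{cor:lperrorHMC}. Set $e(s):=X_N(s)-X_N^M(s)$ and write
\[
e(t)=\int_0^t \big(K-\hat K_M\big)(t-\zeta_N(u))\,\sigma(\zeta_N(u),X_N(\zeta_N(u)))\,dW_u+\int_0^t \hat K_M(t-\zeta_N(u))\big[\sigma(\cdot,X_N)-\sigma(\cdot,X_N^M)\big](\zeta_N(u))\,dW_u .
\]
For $p\ge 2$, apply BDG and then H\"older in time to both pieces. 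Because $|\hat K_M|\le K(0)$ uniformly in $M$, the second piece is bounded by $C\int_0^t \sup_{r\le u}\mathbb{E}|e(r)|^p\,du$ using $\mathcal{H}_2$. In the first piece, the $\eta_m$ are independent of $W$. We condition on them, use the moment bounds of Theorem~\ref{thm:quantRFFestimatemoment} for $\sigma(X_N)$ (through $\mathcal{H}_1$), and are left with $\sup_{|u|\le t}\mathbb{E}|K(u)-\hat K_M(u)|^p$. For fixed $u$, $K(u)-\hat K_M(u)=K(0)(\mu-\hat\mu_M)$ with $h=\cos(\cdot\,u)\in[-1,1]$, so Corollary~\ref{cor:lperrorHMC} gives
\[
e^{C\Psi_N}\,\Gamma(1+\tfrac p2)\,(8Cv)^{p/2}M^{-p/2}.
\]
Gr\"onwall's lemma then yields
\[
\sup_{s\le t}\mathbb{E}|e(s)|^p\le C\,e^{C\Psi_N}\Gamma(1+\tfrac p2)(M/v)^{-p/2}e^{Ct}.
\]
Choosing $C_\mathcal{K}^1$ larger than both constants combines the exponentials into $e^{C_\mathcal{K}^1(\Psi_N+t)}$. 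For $p<2$, we first prove the case $p=2$ and then pass to general $p$ via Jensen, $\mathbb{E}|e|^p\le(\mathbb{E}|e|^2)^{p/2}$, absorbing the $\Gamma$ factor into the constants.

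We expect the main obstacle to be the first piece. Corollary~\ref{cor:lperrorHMC} is pointwise in $u$, but the stochastic integral needs the kernel error evaluated at the points $t-\zeta_N(u)$. Since $\zeta_N$ is deterministic and BDG followed by Fubini turns the integral into $\int_0^t\mathbb{E}|K-\hat K_M|^p(t-\zeta_N(u))\,\mathbb{E}|\sigma|^p\,du$, only pointwise control is required, provided the constant in the corollary is uniform over frequencies. That uniformity holds because $h$ always takes values in $[-1,1]$ and $v$ depends only on $b-a=2$ and $\rho$. Integrating over $u\in[0,t]$ then only adds a factor $t$, which is absorbed into $e^{C_\mathcal{K}^1 t}$. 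A secondary point is justifying the conditioning on $(\eta_m)$: the HMC chain is independent of $W$, so the stochastic integral is a genuine It\^o integral given the frequencies.
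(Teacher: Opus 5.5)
Your proposal is correct and follows essentially the same route as the paper. You split through $X_N$ by Minkowski/Jensen, control $X-X_N$ by the Euler-scheme bound already used for Proposition~\ref{prop:eulerXXNMRFFerrorbound}, and rerun the BDG--Lipschitz--Gr\"onwall argument of Theorem~\ref{thm:eulerXNXNMRFFerrorbound}, with Corollary~\ref{cor:lperrorHMC} supplying $\mathbb{E}|K-\widehat K_M|^p$ uniformly in the lag. Your two deviations are refinements rather than a new route. Pairing $K-\widehat K_M$ with $\sigma(X_N)$ instead of $\sigma(X_N^M)$ makes the factorization of the expectation a genuine consequence of the features being independent of $(X_0,W)$; in the paper's decomposition, $X_N^M$ itself depends on the $\eta_m$. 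Treating $p<2$ via Jensen from $p=2$ avoids the time-H\"older step that requires $p\geq 2$.
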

The proof is in Appendix~\ref{subsec:proof_euler_xnxnm_rff_error_bound_sfbm}.\\
In the same line as Proposition~\ref{prop:eulerXXNMRFFerrorbound}, Theorem~\ref{thm:eulerXNXNMRFFerrorboundSfBM} establishes a strong $\mathcal{L}^p$-error estimate for the approximation $X_N^M$ of the process $X$. Again, the error naturally decomposes into two contributions: the first term, of order $\delta_N^{\,p(\alpha \wedge 1)-\varepsilon}$, corresponds to the time discretization approximation, while the second term reflects the approximation error induced by the leapfrog discretization within the HMC sampler, in particular via the exponential factor involving $\Psi_N(T,\nu,d,H,C_{\Psi})$ and $t$ as well as the characteristic rate $M^{-p/2}$ of Euler-type schemes driven by Gaussian increments. Overall, the theorem provides explicit convergence rates in both approximation parameters and quantifies the interplay between time discretization, the leapfrog iteration and the proper HMC sampling.

\section{Numerical experiments\label{sec:numerical}}

\subsection{RFF approximation of the S-fBM kernel}

\subsubsection{Goodness of fit}

The first numerical step consists in validating the accuracy of the spectral density representation given in Eq.~\eqref{eq:spectraldensityhypergeom}. In Figure~\ref{fig:spectraldensityhypergeo} of Appendix~\ref{subsec:numerics_sfbm_spectral_density}, we benchmark the one-dimensional case (\(d = 1\)) by comparing the analytical formula in Eq.~\eqref{eq:spectraldensityhypergeom} with a reference obtained through numerical Fourier integration using the Simpson integration scheme. The results exhibit an excellent agreement between the two approaches, with discrepancies remaining at the level of machine precision. This demonstrates that the observed error is not induced by the modelling assumptions or the numerical integration procedure, but is instead attributable to the intrinsic limitations of floating point arithmetic.  

We further extend this validation to the two-dimensional setting by examining the accuracy of the corresponding spectral density representation against numerical Fourier integration. The surface plot displays the pointwise difference over the frequency domain, offering a global view of the approximation error, while the associated color map representations emphasize its spatial structure and magnitude. In all cases, the discrepancies are again confined to machine precision, confirming that the proposed representation accurately reproduces the spectral density in both one and two dimensions. Consequently, the representation can be regarded as numerically exact for practical purposes, with residual errors entirely dominated by floating point rounding effects.

\begin{figure}[H]
    \centering
    % --- Top row ---
    \begin{subfigure}[t]{0.49\textwidth}
        \centering
        \includegraphics[width=\textwidth]{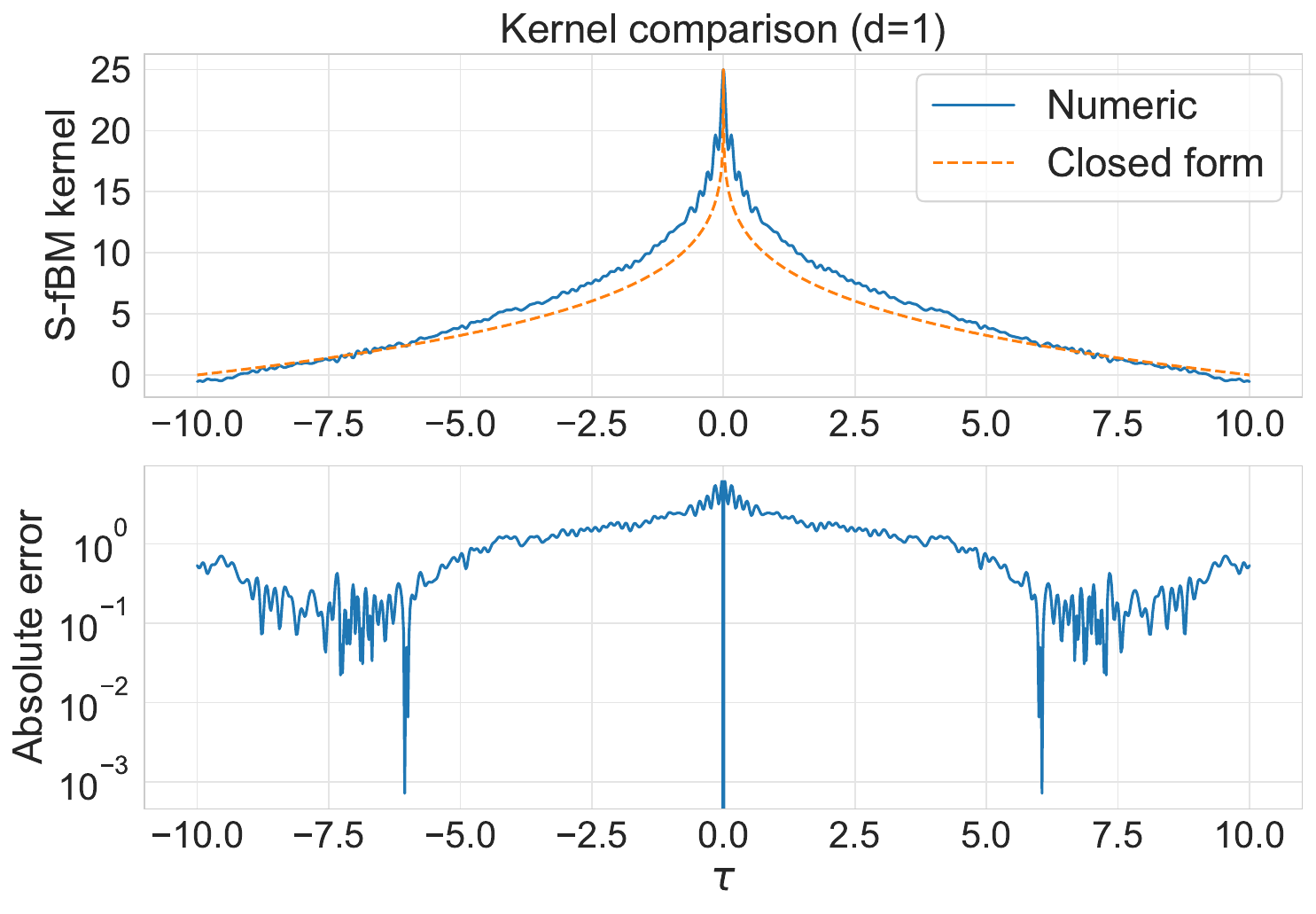}
        \caption{\(T = 10\)}
    \end{subfigure}
    \hfill
    \begin{subfigure}[t]{0.49\textwidth}
        \centering
        \includegraphics[width=\textwidth]{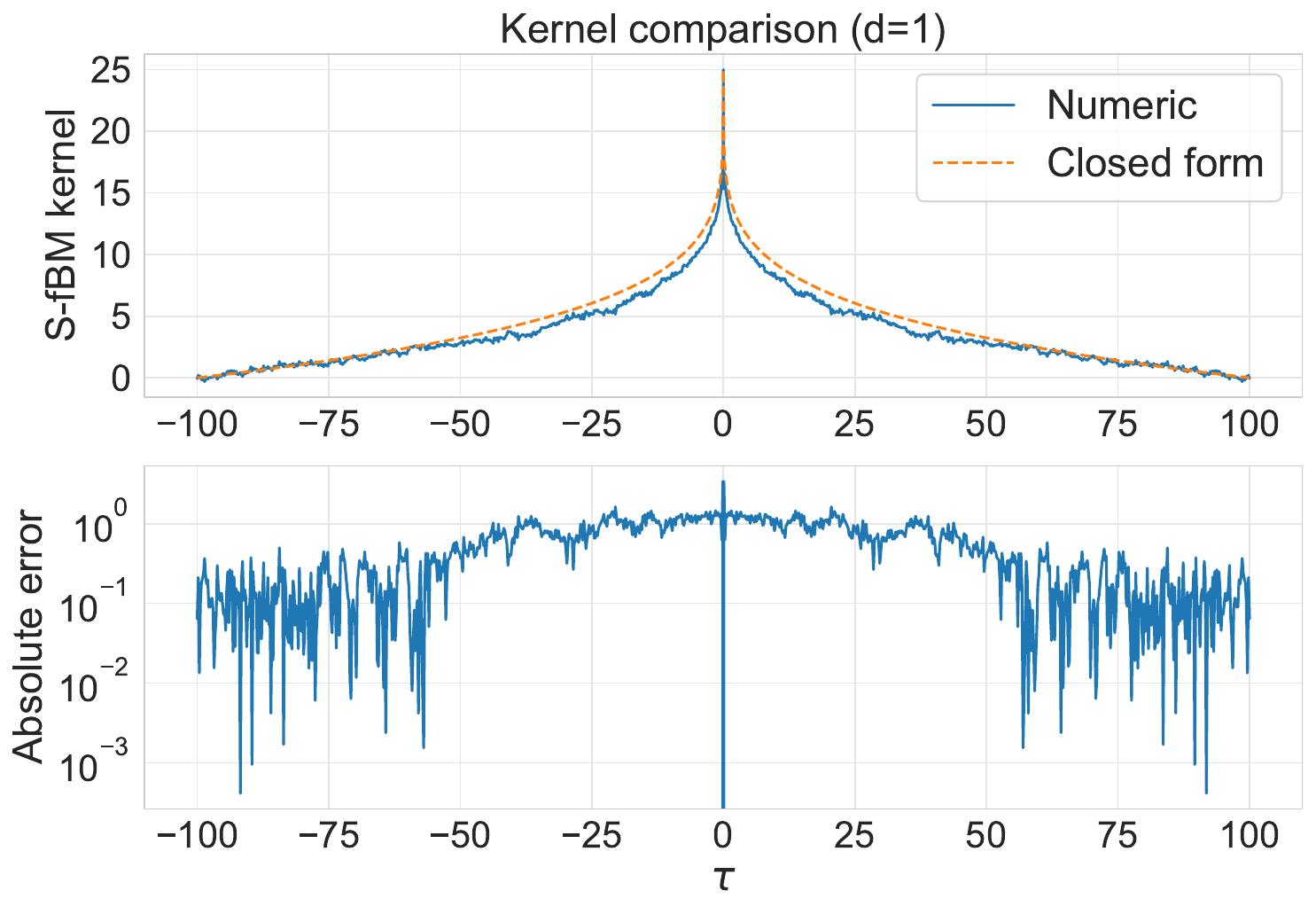}
        \caption{\(T = 100\)}
    \end{subfigure}

    \vspace{0.4cm}

    % --- Bottom row ---
    \begin{subfigure}[t]{0.49\textwidth}
        \centering
        \includegraphics[width=\textwidth]{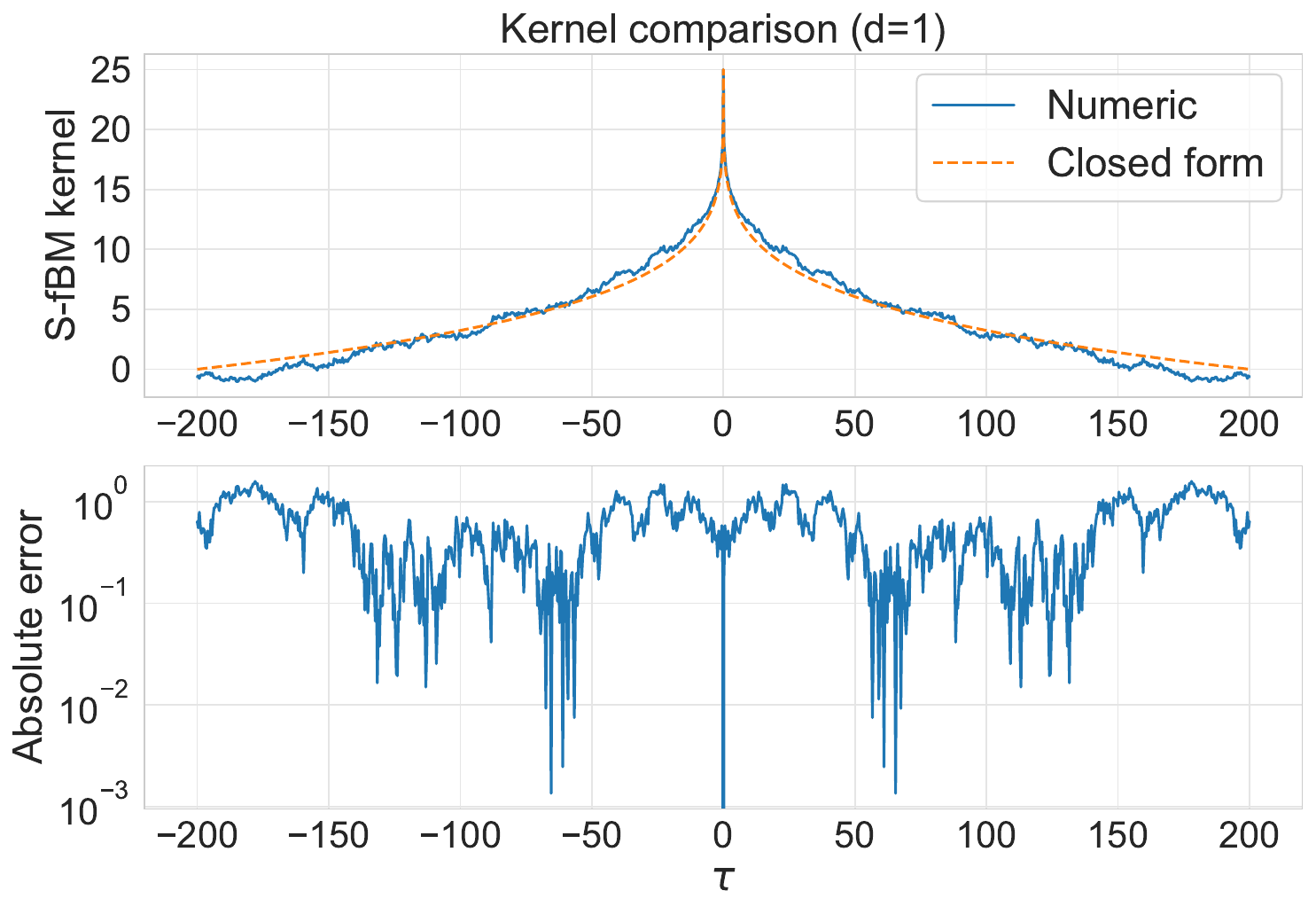}
        \caption{\(T = 200\)}
    \end{subfigure}

    \caption{S-fBM kernel approximation using the RFF representation for different correlation limits \(T\) together with the absolute error between the approximated and theoretical kernel.
    The number of spectral Monte Carlo simulations is \(M = 8000\), with \(\nu^2 = 50\) and \(H = 0.1\).
    The top panels correspond to \(T = 10\) and \(T = 100\), while the bottom panel corresponds to \(T = 200\).}
    \label{fig:RFF_kernel_error_T}
\end{figure}

Figure~\ref{fig:RFF_kernel_error_T} illustrates the accuracy of the kernel approximation obtained via the Random Fourier Features (RFF) representation for different correlation limits \(T\). The results are based on Monte Carlo experiments in which the underlying latent variables were sampled using the HMC procedure described in the Section~\ref{subsec:sampling_algo_hmc}.  The figure compares the behavior of the kernel error across increasing values of \(T\), highlighting the impact of the sample size on the quality of the RFF-based approximation.

\begin{figure}[H]
    \centering
    \begin{subfigure}[t]{\textwidth}
        \centering
        \includegraphics[width=\textwidth]{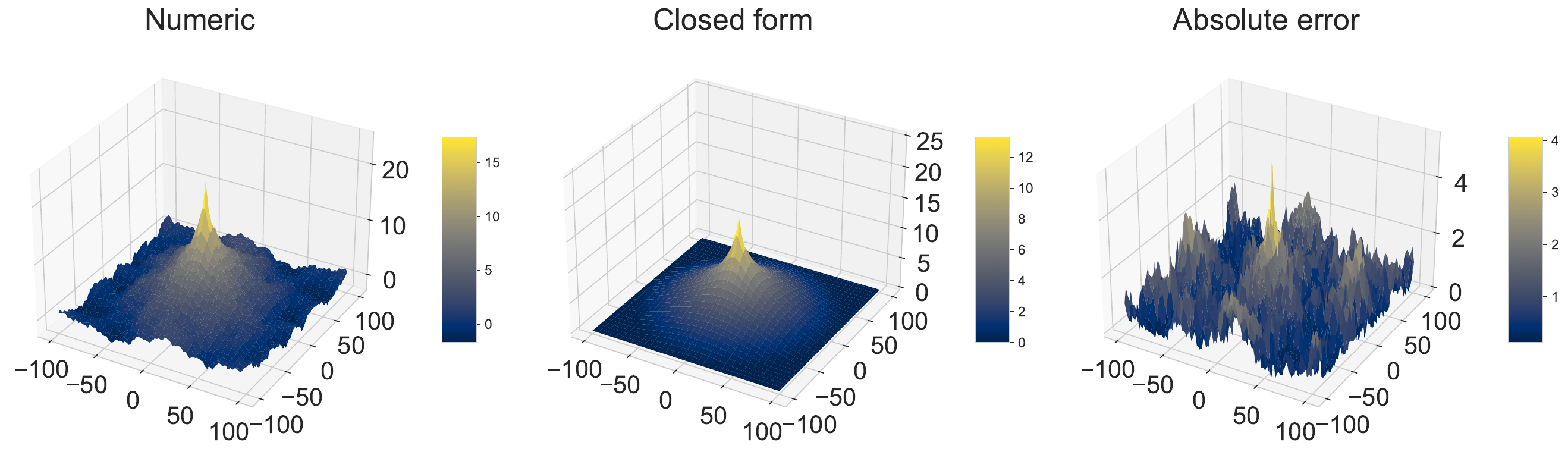}
        \caption{Kernel approximation error surface}
    \end{subfigure}

    \vspace{0.4cm}

    \begin{subfigure}[t]{\textwidth}
        \centering
        \includegraphics[width=\textwidth]{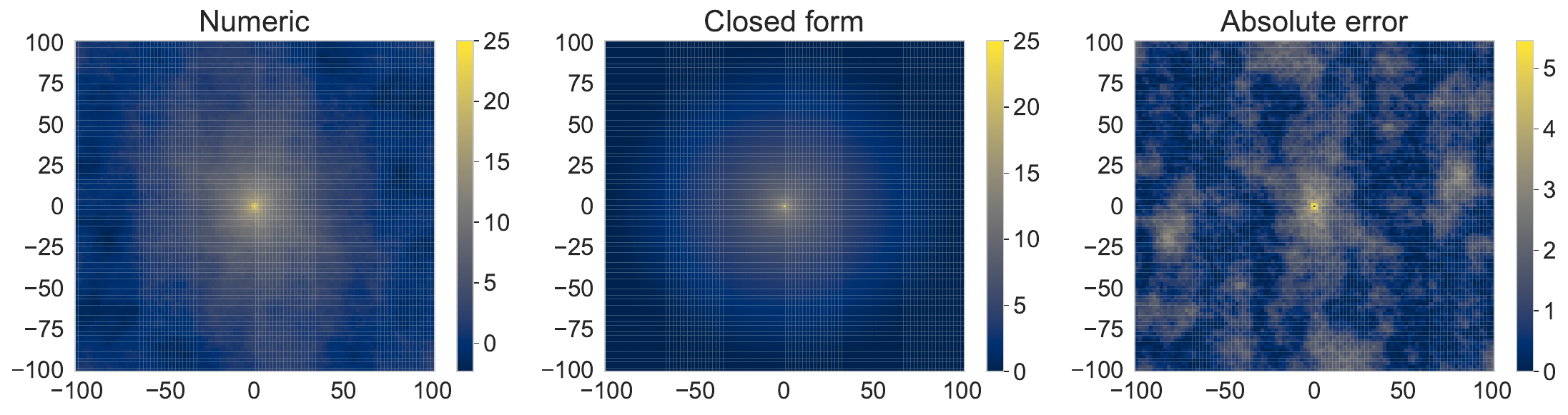}
        \caption{Corresponding colormap representation}
    \end{subfigure}

    \caption{Two dimensional visualization of the S-fBM kernel approximation error obtained using the RFF representation.
    Results are based on \(M = 8000\) Monte Carlo simulations with parameters \(\nu^2 = 50\) and \(H = 0.1\).
    The top panel shows the error surface, while the bottom panel displays the corresponding colormap both for \(T = 100\).}
    \label{fig:RFF_kernel_error_surface}
\end{figure}
Figure~\ref{fig:RFF_kernel_error_surface} presents a two dimensional visualization of the kernel approximation error obtained again using the RFF representation. The results are computed from \(M = 8000\) spectral Monte Carlo simulations with parameters \(\nu^2 = 50\) and \(H = 0.1\), where the latent variables are sampled using the Hamiltonian Monte Carlo (HMC) scheme introduced in the previous section. The top panel displays the error surface as a function of the time lags, providing a global view of the approximation accuracy, while the bottom panel shows the corresponding colormap representation, which emphasizes the spatial variability and magnitude of the error. Together, these visualizations offer complementary insights into the structure and intensity of the approximation error for a fixed sample size \(T = 100\).

%\newpage

\subsubsection{Parameters retrieval}
\noindent
The aim of this section is to verify the good retrieval of the S-fBM parameters from the RFF representation of its kernel. We recall the so-called Generalized Method of Moments (GMM) framework for the estimation of the parameters of the S-fBM model based on autocovariance information evaluated at a finite collection of time lags. For a fixed integer $Q\in\mathbb{N}$, we consider a prescribed sequence of lags $\Tau_Q=(\tau_l)_{l\in\llbracket 1,Q\rrbracket}$ and formulate the GMM estimator as the minimizer of a quadratic discrepancy between empirical and theoretical moment conditions. The latter are defined through the autocovariance function associated with RFF representation of the S-fBM model, leading to a tractable contrast function on the feasible parameter domain $\Xi$. This yields an approximated GMM estimator $\Hat{\theta}_M$ constructed from $M$ independent samples of the random features. We finally state a non-asymptotic error bound for this estimator, which quantifies the impact of the random features approximation.\\

In this subsection we focus on the case $d=1$. We denote, for a fixed $Q\in \N$,  $\Tau_Q:=\left(\tau_l\right)_{l\in \llbracket 1,Q \rrbracket}$ a sequence of time lags. The Generalized Method of Moments can be formulated as follows:
\begin{eqnarray}
  \label{eq:gmm_solution}
       \theta=\argmin_{\bm{x}\in\Xi}\, \bm{h}(\bm{x})^{\!\top} W \bm{h}(\bm{x})
\end{eqnarray}
where $\Xi=[0,1]\times\left(0,\frac{1}{2}\right)\times\R_{+}$ stands for the feasible domain of the vector of true S-fBM parameters $\theta^{*}:=\left(\lambda^2,H,T\right)$, 
 the map $\bm{h}:\Xi \mapsto \R^{Q}$ represents the discrepancy between the empirical and theoretical kernel and $W\in \R^{Q\times Q}$ is a symmetric positive definite matrix.\\
Herein, we consider the discrepancy function being parameterised by the spectral density representation considered:
\begin{eqnarray}
    \bm{h}(\bm{x})= \bm{C}-\bm{C}(\bm{x})
\end{eqnarray}
where :
\begin{itemize}
    \item $\bm{C}=\left(  C_{\omega}(\tau),\tau \in\Tau_Q\right)$ is the observed autocovariance function across the time lags.
    \item $\bm{C}(\bm{x})=\left(  C_{\omega}(\tau)(\bm{x}),\tau \in\Tau_Q\right)$ is the theoretical autocovariance function across the same time lags.
\end{itemize}
We introduce the approximated version of the GMM solution:
\begin{eqnarray}
  \label{eq:gmm_solution_hat}
       \Hat{\theta}_M=\argmin_{\bm{x}\in\Xi} \, \Hat{\bm{h}}_M(\bm{x})^{\!\top} W \Hat{\bm{h}}_M(\bm{x})
\end{eqnarray}
where:
\begin{eqnarray}
    \begin{cases}
     \Hat{\bm{h}}_M(\bm{x})=\Hat{\bm{C}}_M-\bm{C}(\bm{x}) \\
     \Hat{\bm{C}}_M=\left(  \Hat{C}_{\omega,}^{M}(\tau),\tau \in\Tau_Q\right)\\
      \Hat{C}_{\omega}^{M}(\tau)=\frac{K(0)}{M}\sum_{m=1}^{M}\cos(\eta_{m}\tau)
\end{cases}
\end{eqnarray}
$ \Hat{C}_{\omega}^{M}$ represents the empirical counterpart of $K(0)\E\left[\cos\left(\eta \tau \right)\right]$ with $\eta_{1}$, $\eta_{2}$, $\ldots$, $\eta_{M}$
being $M$ independent and identically distributed copies of $\eta$ drawn from the spectral density of the autocovariance function $\boldsymbol{C}$.\\\\
For any matrix $M\in \R^{Q\times Q}$, we consider the matrix norm:
\begin{eqnarray}
    \left|\left|\left|M \right|\right|\right|:=\sup_{x\in \R^Q}\frac{\left|\left|Mx\right|\right|_{2}}{\left|\left|x\right|\right|_2} \nonumber
\end{eqnarray}
where for any $x\in \R^Q$, $\left\|x\right\|_2 = \sqrt{\displaystyle\sum_{i=1}^{Q} x_i^2}$ is the Euclidean norm of $x$. We also introduce the $l^{\infty}$ matrix norm for any $M\in \R^{Q\times Q}$:
\begin{eqnarray}
    \left|\left|M \right|\right|_{\infty}:=\sup_{1\leq i,j\leq Q}\left|M_{i,j}\right| \nonumber
\end{eqnarray}
\begin{thm}
\label{thm:GMMRFF}
    There exists a positive constant $K_{W}$ such that:
    \begin{eqnarray}
\left\|\Hat{\theta}_M - \theta^{*}  \right\|_{\mathcal{L}^2} \leq K_{W} \left|\left|\left|W \right|\right|\right|\sqrt{\frac{Q}{M}}
\end{eqnarray}
\end{thm}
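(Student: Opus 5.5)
We will treat $\Hat{\theta}_M$ as a perturbation of the exact GMM solution and transfer the Monte Carlo error of the kernel to the parameters. Two preliminary facts are used. First, $\theta^*$ is the minimiser of \eqref{eq:gmm_solution} with $\bm{h}(\theta^*)=0$, because the observed vector $\bm{C}$ coincides with $\bm{C}(\theta^*)$. Second, the map $\bm{x}\mapsto\bm{C}(\bm{x})$ is smooth on $\Xi$, since the S-fBM kernel \eqref{eq:sfbmcovkernel} is $C^1$ in $(\lambda^2,H,T)$ at each fixed lag $\tau\neq 0$. We also assume that the Jacobian $J(\bm{x}):=\nabla_{\bm{x}}\bm{C}(\bm{x})\in\R^{Q\times 3}$ has full column rank near $\theta^*$. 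This is the identifiability condition, and it requires $Q\geq 3$ lags that separate the parameters.

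The first step is the first-order condition at $\Hat{\theta}_M$. Assuming $\Hat{\theta}_M$ is interior, it satisfies $J(\Hat{\theta}_M)^{\!\top}W\big(\Hat{\bm{C}}_M-\bm{C}(\Hat{\theta}_M)\big)=0$. Applying the mean value theorem componentwise gives $\bm{C}(\Hat{\theta}_M)-\bm{C}(\theta^*)=\bar J(\Hat{\theta}_M-\theta^*)$, where $\bar J$ is a Jacobian evaluated at intermediate points. Combining the two and using $\bm{C}(\theta^*)=\bm{C}$, we get
\[
\Hat{\theta}_M-\theta^*=\big(J(\Hat{\theta}_M)^{\!\top}W\bar J\big)^{-1}J(\Hat{\theta}_M)^{\!\top}W\big(\Hat{\bm{C}}_M-\bm{C}\big).
\]
Taking operator norms yields $\|\Hat{\theta}_M-\theta^*\|_2\le K\,|||W|||\,\|\Hat{\bm{C}}_M-\bm{C}\|_2$. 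Here $K:=\sup\|J\|\cdot\sup\|(J^{\!\top}W\bar J)^{-1}\|$, taken over a neighbourhood of $\theta^*$, or over a compact restriction of $\Xi$, since $T$ ranges over $\R_+$. An alternative route avoids the mean value argument. From the minimising property, $\|\Hat{\bm{h}}_M(\Hat{\theta}_M)\|_W\le\|\Hat{\bm{h}}_M(\theta^*)\|_W=\|\Hat{\bm{C}}_M-\bm{C}\|_W$. The triangle inequality then gives $\|\bm{C}(\Hat{\theta}_M)-\bm{C}(\theta^*)\|_W\le 2\|\Hat{\bm{C}}_M-\bm{C}\|_W$, and a local inverse-Lipschitz (injectivity) bound on $\bm{C}(\cdot)$ turns this into the same estimate. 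We would present this second route, since it needs no interiority.

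The second step is the Monte Carlo bound on the kernel vector. For i.i.d. $\eta_m$, each component $\Hat{C}^M_\omega(\tau_l)-C_\omega(\tau_l)$ is an average of centred variables bounded by $2K(0)$, so its variance is at most $K(0)^2/M$. Summing over the $Q$ lags gives $\E\|\Hat{\bm{C}}_M-\bm{C}\|_2^2\le QK(0)^2/M$. If the $\eta_m$ come from HMC instead, Corollary~\ref{cor:lperrorHMC} with $p=2$ and $h=\cos(\cdot\,\tau_l)$, $[a,b]=[-1,1]$, gives the same $M^{-1/2}$ rate per component, with a constant depending on $v$ and $\Psi_N$. Squaring the first-step estimate, taking expectations and square roots gives $\|\Hat{\theta}_M-\theta^*\|_{\mathcal{L}^2}\le K\,K(0)\,|||W|||\sqrt{Q/M}$. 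We then set $K_W:=K\,K(0)$.

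The main obstacle is making $K$ finite and deterministic. The inverse-Lipschitz bound for $\bm{C}(\cdot)$ is only local, whereas $\Hat{\theta}_M$ is random and could fall far from $\theta^*$. The first thing to try is global injectivity of $\bm{x}\mapsto\bm{C}(\bm{x})$ on a compact parameter set. Together with continuity, this gives a modulus of identifiability, and combining it with the local Jacobian bound yields a global linear bound $\|\bm{x}-\theta^*\|\le c\|\bm{C}(\bm{x})-\bm{C}(\theta^*)\|$ on that compact set. If that fails, we would split on the event $\{\|\Hat{\bm{C}}_M-\bm{C}\|\le r\}$. On that event the local bound applies, and its complement has probability $O(Q/(Mr^2))$ by Chebyshev, so bounded $\Xi$ keeps this part of order $\sqrt{Q/M}$ in $\mathcal{L}^2$ as well.
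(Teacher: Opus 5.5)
Your argument is correct under the assumptions you state: a compact parameter set containing $\theta^*$, full column rank of $J(\theta^*)$, and identifiability of $\theta^*$ from $\bm C(\cdot)$. Any proof of this statement needs assumptions of this kind. However, the route you would present differs from the paper's.

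The paper starts from the first-order condition $\nabla\hat{\mathcal T}(\hat\theta_M)=0$ for $\hat{\mathcal T}(\bm x)=\hat{\bm h}_M(\bm x)^{\top}W\hat{\bm h}_M(\bm x)$. It Taylor-expands this gradient around $\theta^*$ and lower-bounds the Hessian $2\hat J^{\top}W\hat J+2(\nabla^2\hat{\bm h}_M)W\hat{\bm h}_M$ at an intermediate point $\tilde\theta$ by $2\bigl[\sigma_{\min}(W)\sigma_{\min}(\hat J(\tilde\theta))^2-\|W\|_\infty\|\hat{\bm h}_M(\tilde\theta)\|_\infty\hat H_Q(\tilde\theta)\bigr]_+$. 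It then bounds $\|\nabla\hat{\mathcal T}(\theta^*)\|_2\le 2|||\hat J(\theta^*)|||\,|||W|||\,\|\hat{\bm h}_M(\theta^*)\|_2$ and uses $\E\|\hat{\bm h}_M(\theta^*)\|_2^2\le Q/M$.

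Your first route is the Gauss--Newton version of this: you apply the mean value theorem to $\bm C$ rather than to $\nabla\hat{\mathcal T}$, so the term $(\nabla^2\hat{\bm h}_M)W\hat{\bm h}_M$ never appears. Your second route drops linearisation and instead compares objective values and uses an inverse-Lipschitz modulus of $\bm x\mapsto\bm C(\bm x)$.

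\begin{itemize}
\item What your second route buys is exactly what the paper's proof lacks. The paper's $K_W$ depends on the random $\tilde\theta$ and on $\hat{\bm h}_M(\tilde\theta)$, is infinite when the bracket vanishes, and presupposes an interior minimiser. So pulling it out of the $\mathcal L^2$ norm is not justified. Your constant is deterministic and needs neither interiority nor invertibility of a random Hessian.
\item What the paper's form buys, and your first route keeps, is a local constant governed, as $M\to\infty$, only by $J(\theta^*)$ and $W$. A global modulus $c$ is a worst case over the whole compact set.
\item Your second route naturally yields $\sqrt{\lambda_{\max}(W)/\lambda_{\min}(W)}$ rather than $|||W|||$. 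Since $\hat\theta_M$ is invariant under $W\mapsto sW$ for $s>0$, the explicit factor is cosmetic, and you can set $K_W=2cK(0)/\sqrt{\lambda_{\min}(W)\lambda_{\max}(W)}$.
\item You also rightly keep the factor $K(0)^2$ in the variance, which the paper drops.
\end{itemize}

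There are two small slips. First, the $l$-th component of $\bm C(\bm x)$ is not $C^1$ in $T$ at $T=|\tau_l|$. Because of the indicator, its one-sided $T$-derivatives there are $\lambda^2/((1-2H)|\tau_l|)$ and $0$. Your second route needs only continuity and differentiability at $\theta^*$, that is $T^*\notin\{|\tau_l|\}$, so it is unaffected. The componentwise mean value step of your first route would need care.

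Second, your fallback on the event $\{\|\hat{\bm C}_M-\bm C\|_2\le r\}$ does not remove the need for identifiability. On that event you only know $\|\bm C(\hat\theta_M)-\bm C(\theta^*)\|_2\le 2\sqrt{\lambda_{\max}(W)/\lambda_{\min}(W)}\,r$. To place $\hat\theta_M$ in the neighbourhood where the local bound holds, you need $\inf_{\|\bm x-\theta^*\|\ge\delta}\|\bm C(\bm x)-\bm C(\theta^*)\|_2>0$. On a compact set this is the same injectivity assumption.
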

The proof is in Appendix~\ref{subsec:proof_gmm_rff}.\\

Theorem~\ref{thm:GMMRFF} shows that the estimation error decays at rate $M^{-1/2}$, up to explicit constants depending on the weighting matrix and the number of moments. This corresponds to the speed of convergence of the central limit theorem coming from the Monte Carlo RFF approximation.\\

Now, we leverage the GMM method to re-estimate the S-fBM parameter considering as observed data the RFF representation of the kernel obtained via the sampling procedure described in Section~\ref{subsec:rff_representation_sfbm}. For each fixed parameters $H,\lambda^2,T$, we construct a sample of 50 estimates of respectively the Hurst exponent and intermittency parameter denoted $\Hat{H}$ and $\Hat{\lambda}^2$. We do not focus on estimating the correlation limit $T$ which is intractable in practice as illustrated in the numerical experiments of \citet{muzy2013random}. In the Table~\ref{tab:estimatorsfromRFFkernel}, we report the mean value and the standard deviation of each estimate with respect to each configuration of chosen parameters.

\begin{table}[H]
\centering
\begin{tabular}{lccc}
\toprule
\boldmath$\lambda^2 = 0.02$ & \boldmath$H=0.04$ & \boldmath$H=0.1$ & \boldmath$H=0.3$ \\
\midrule
\multicolumn{4}{c}{\boldmath$T = 100$} \\
\midrule
$\hat{H}$            &0.0488(0.0583)  &0.0962(0.0256)  & 0.2982(0.0320) \\
$\hat{\lambda}^2$    & 0.0350(0.0097) &0.0229(0.0027)  &0.0199(0.0013)  \\
\midrule
\boldmath$\lambda^2 = 0.04$ & &  &  \\
\midrule
\multicolumn{4}{c}{\boldmath$T = 200$} \\
\midrule
$\hat{H}$            &0.0417(0.0186)  &0.1028(0.0180)  &0.3033(0.0523)  \\
$\hat{\lambda}^2$    &0.0686(0.0146)  &0.0458(0.0038)  &0.0383(0.0051)  \\
\midrule
\boldmath$\lambda^2 = 0.06$ & &  &  \\
\midrule
\multicolumn{4}{c}{\boldmath$T = 400$} \\
\midrule
$\hat{H}$            &0.0417(0.0232)  &0.1020(0.0245)  &0.3530(0.1095)  \\
$\hat{\lambda}^2$    &0.0999(0.0145)  &0.0676(0.0077)  & 0.0406(0.0239) \\
\midrule
\boldmath$\lambda^2 = 0.08$ & &  &  \\
\midrule
\multicolumn{4}{c}{\boldmath$T = 600$} \\
\midrule
$\hat{H}$            &0.0409(0.0195)  &0.1030(0.0379)  &0.4210(0.2206)  \\
$\hat{\lambda}^2$    &0.1317(0.0245)  &0.0893(0.0120)  &0.1085(0.2312)  \\
\bottomrule
\end{tabular}
\caption{Estimated mean and standard deviation obtained from 50 independent samples of estimated Hurst and intermittency parameters where $M=50,\!000$ using the time lag sequence $\{\tau_k=\lfloor 2^{\frac{k}{2}} \rfloor,k=0...,Q\}$ and $Q=19$}
\label{tab:estimatorsfromRFFkernel}
\end{table}
We observe that the retrieval of the true parameters works well in most situations. For small $H$ values, a small bias in $\hat{\lambda}^2$ is noticed particularly when $T$ is large as it introduces long range dependence, making the kernel highly sensitive to low frequency behaviour and boundary effects. Second, cosine components are inherently oscillatory which can lead to strong multicollinearity among parameters particularly when $H$ and $\lambda^2$ are of the same order of magnitude as they are both involved in the variance formula of the S-fBM process.
Apart from this configuration, the parameters are well recovered with a decent accuracy.

\begin{figure}[H]
    \centering
    % ------- First Row -------
   \begin{minipage}[b]{0.49\linewidth}
        \centering
        \includegraphics[width=\linewidth, height=4.6cm]{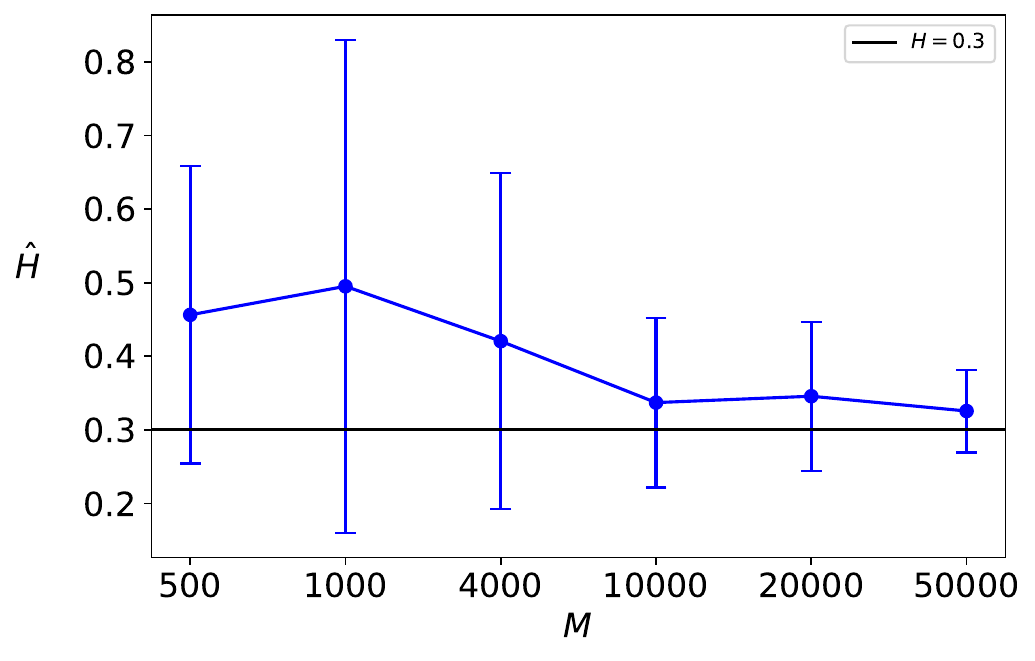}
    \end{minipage}
    \hfill
    \begin{minipage}[b]{0.49\linewidth}
        \centering
        \includegraphics[width=\linewidth, height=4.6cm]{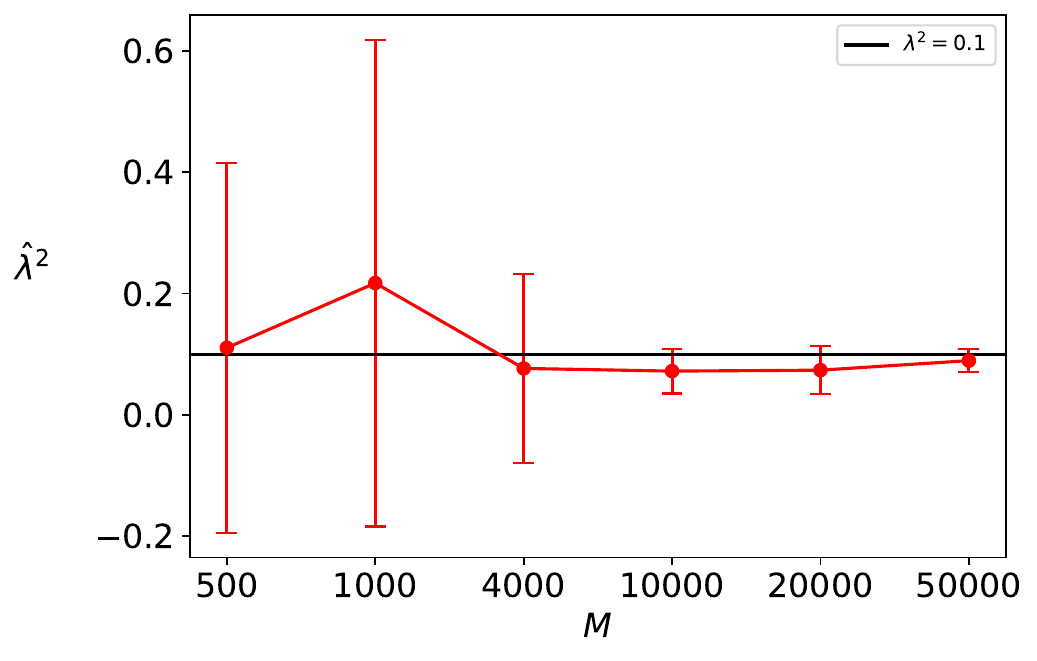}
    \end{minipage}

    \vspace{0.1cm}

    % ------- Second Row -------

     \begin{minipage}[b]{0.49\linewidth}
        \centering
        \includegraphics[width=\linewidth, height=4.6cm]{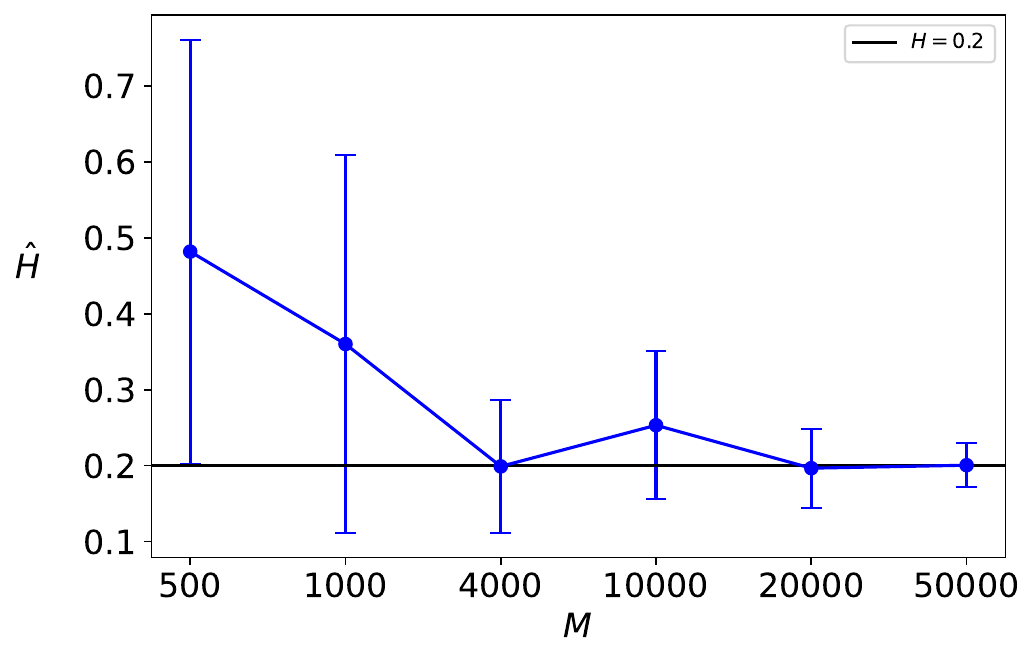}
    \end{minipage}
    \hfill
    \begin{minipage}[b]{0.49\linewidth}
        \centering
        \includegraphics[width=\linewidth, height=4.6cm]{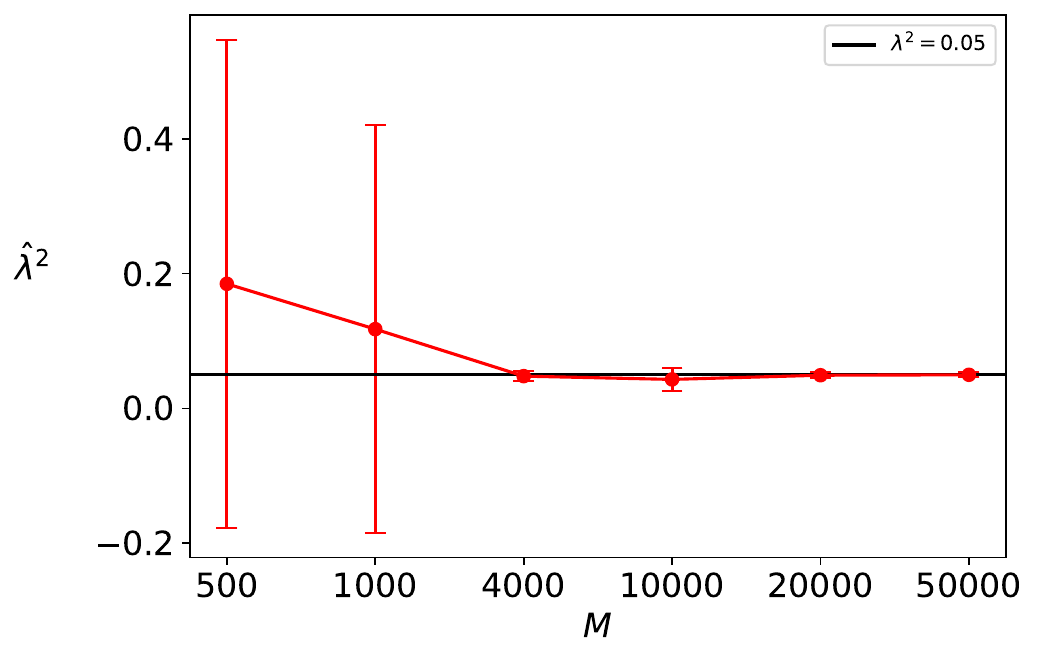}
    \end{minipage}

     % ------- Third Row -------
    \begin{minipage}[b]{0.49\linewidth}
        \centering
        \includegraphics[width=\linewidth, height=4.6cm]{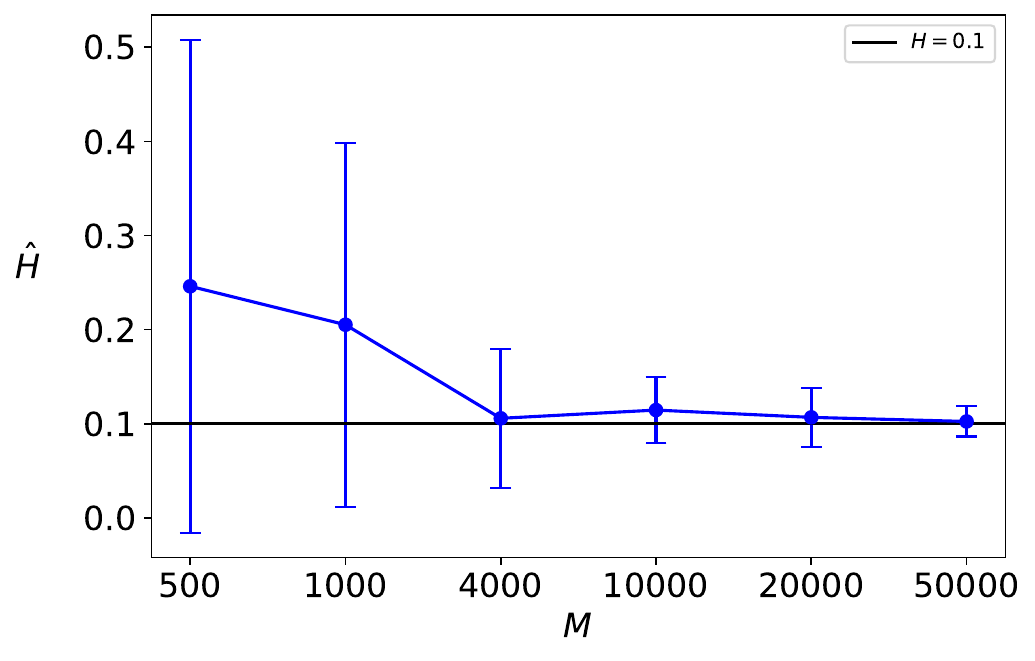}
    \end{minipage}
    \hfill
    \begin{minipage}[b]{0.49\linewidth}
        \centering
        \includegraphics[width=\linewidth, height=4.6cm]{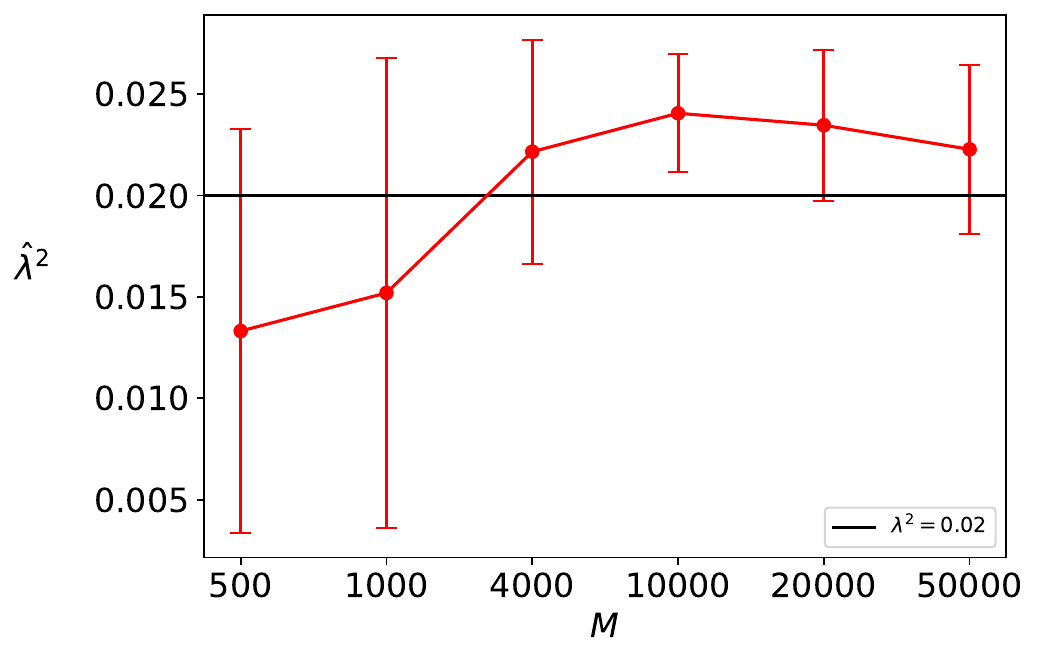}
    \end{minipage}

    \caption{Using the RFF representation of the S-fBM kernel for different sampling realizations $M$, we plot the mean value and the associated $95\%$ confidence interval from a sample of $20$ copies of the Hurst exponent (\textbf{blue}) and the estimated intermittency parameter (\textbf{red})  together with error bars compared with the respective theoretical values: $H = 0.1,\lambda^2=0.02$ (\textbf{bottom}) $H = 0.2,\lambda^2=0.05$ (\textbf{middle}) and $H = 0.3,\lambda^2=0.1$ (\textbf{top}). Here $T=200$ and the time lag sequence used is $\{\tau_k=\lfloor 2^{\frac{k}{2}} \rfloor,k=0...,Q\}$ where $Q=19$.}
    \label{fig:barerrorRFFkernelestiHlambda}
\end{figure}

Figure~\ref{fig:barerrorRFFkernelestiHlambda} illustrates the estimation of the Hurst exponent \(H\) and the intermittency parameter \(\lambda^2\) from the RFF representation of the S-fBM kernel for different spectral sampling sizes \(M\). For each value of \(M\), the mean estimates and the associated \(95\%\) confidence intervals are computed from a sample of \(20\) independent copies and compared with the corresponding theoretical values. The three panels correspond to increasing levels of regularity and intermittency. As \(M\) increases, both estimators converge toward their theoretical values and the confidence intervals shrink, indicating that the estimation procedure becomes increasingly unbiased and more precise.

\subsection{RFF acceleration of SVE simulation}

This section is devoted to the numerical investigation of RFF techniques as a means of accelerating the simulation of SVEs. We begin by examining the qualitative impact of the RFF kernel approximation on the simulated trajectories by comparing sample paths generated via an Euler discretization with and without RFF. We then turn to a quantitative assessment of computational efficiency, where average execution times are reported as a function of the number of time steps, allowing for a direct comparison between the classical Euler scheme and its RFF-based counterpart, both with and without internal feature sampling. Finally, we investigate the accuracy of the proposed method through a weak and strong error analysis. Weak errors are evaluated for smooth test functions under different coupling scenarios of the driving Brownian motions, while strong errors are analyzed in $\mathcal{L}^p$ norms for several values of $p$, with particular attention paid to disentangling the contribution of the RFF approximation from that of stochastic regularization. Together, these experiments provide a comprehensive assessment of the trade-off between computational speed-up and numerical accuracy induced by the RFF approximation.\\\\

Figure~\ref{fig:sample_paths_euler_rff} compares examples of sample paths simulated by the Euler discretization scheme (Algorithm~\ref{alg:slow_euler}) with the paths simulated by the RFF-Euler scheme (Algorithm~\ref{algo:fast_euler}). To isolate the effect of the kernel approximation, both trajectories are generated using the same underlying Brownian path and identical numerical parameters, with only the value of \(H\) varying between the two panels. This controlled setting allows for a direct comparison of the resulting paths and highlights the influence the RFF approximation. 
\begin{figure}[H]
\begin{minipage}[t]{0.48\columnwidth}%
\includegraphics[width=0.38\paperwidth]{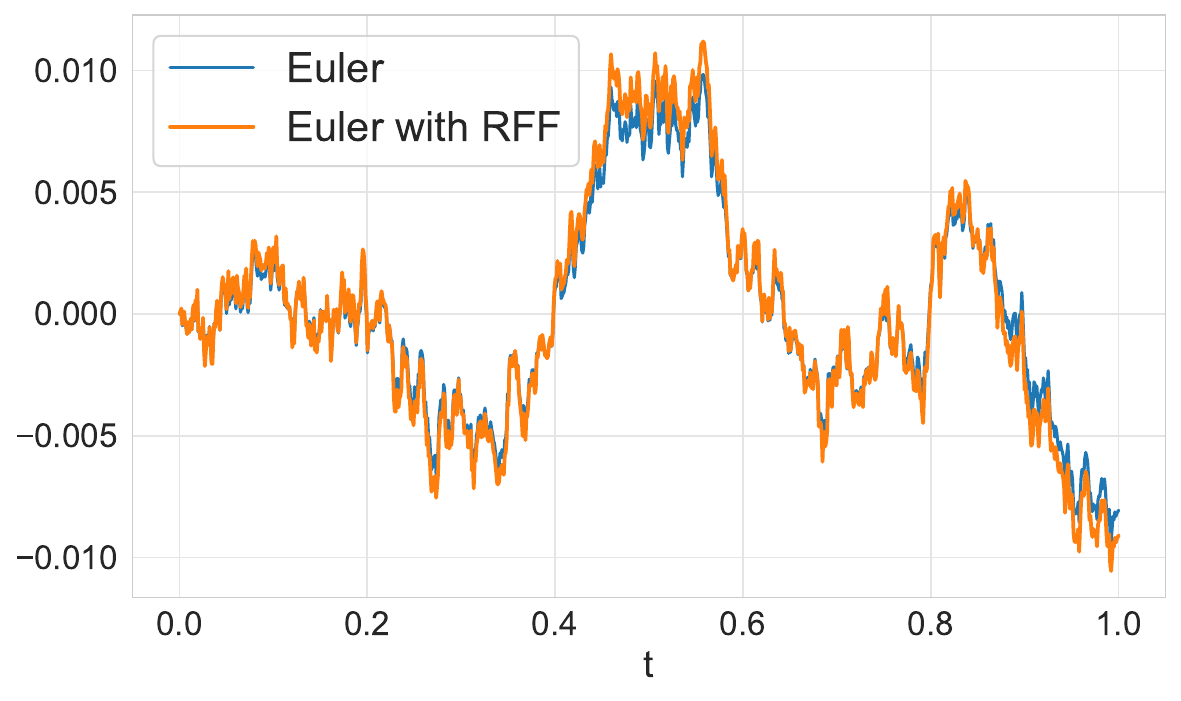}%
\end{minipage}\hfill{}%
\begin{minipage}[t]{0.48\columnwidth}%
\includegraphics[width=0.38\paperwidth]{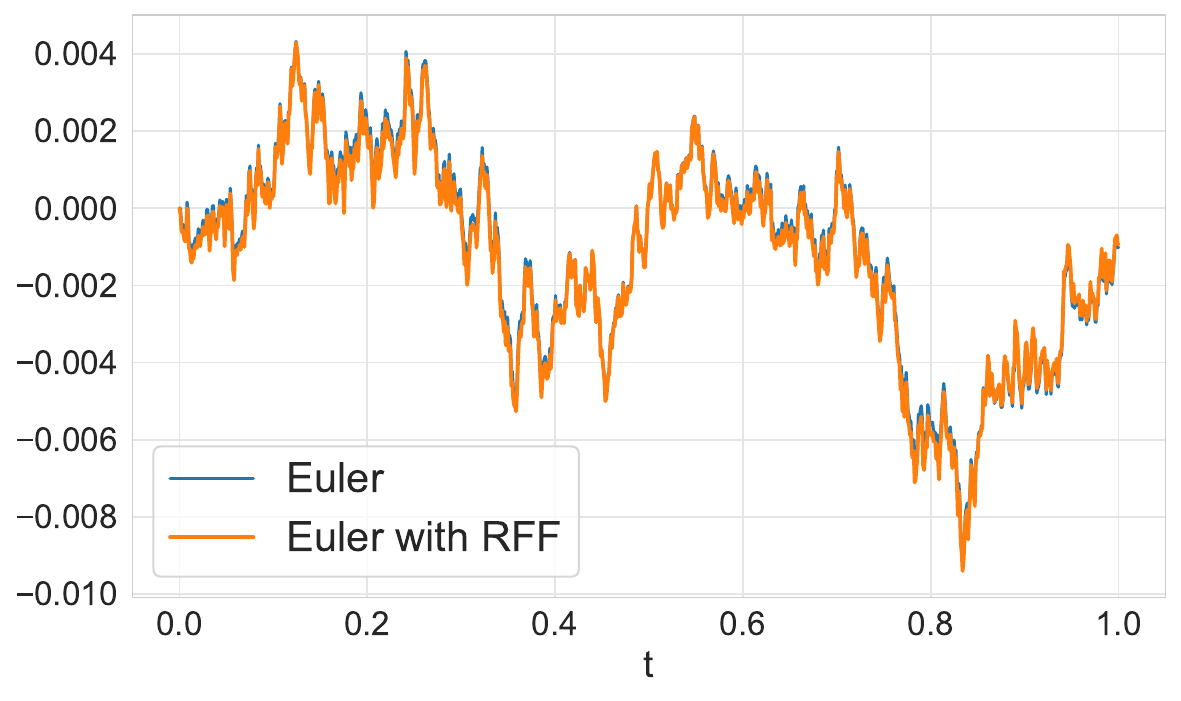}%
\end{minipage}
\caption{Sample paths generated using the Euler scheme with versus without the RFF approximation, driven by the same underlying Brownian path to facilitate comparison, with $H=0.1$ (\textbf{left}) and $H=0.2$ (\textbf{right}). The parameters considered are \(\lambda^2 = 0.01\), \(T = 100\) and \(M = 10{,}000\). The volatility function is given by \(\sigma(t,x) = 0.3\,(1 + 0.1x)\). The simulation uses \(1000\) time steps over the interval \([0,1]\).}
\label{fig:sample_paths_euler_rff}
\end{figure}
One can observe that the Euler and RFF-Euler schemes are very close to each others and that the error between the two is lower for $H=0.2$ compared to $H=0.1$. The fact that the error increases when $H$ decreases is what is expected theoretically, since in the upper bound of Eq.~\eqref{eq:RFFapproxlperror}, the function $\Psi_N$ increases as $H$ decreases. Intuitively, the fact that the $\mathcal{L}^p$ error between the S-fBM kernel and its RFF approximation increases when $H$ decreases is tied to the difficulty to approximate the kernel near zero for small $H$, which affects the decay of the spectral density.\\
\begin{figure}[H]
\includegraphics[width=0.9\textwidth]{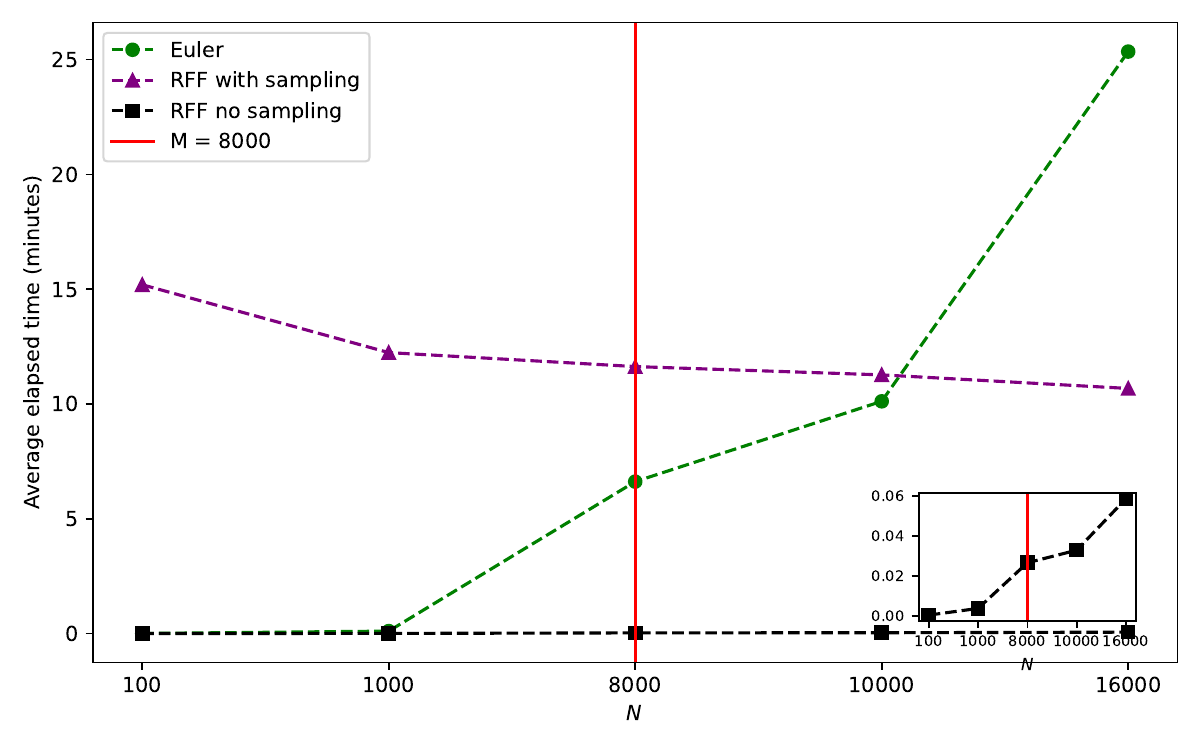}

    \caption{Average elapsed time over 20 runs as a function of the number of time steps $N$ for the classical Euler scheme (\textbf{green}) and the RFF-Euler scheme, respectively, both including (\textbf{purple}) and excluding (\textbf{black}) the random features sampling. The parameters considered are \(H = 0.1\), \(\lambda^2 = 0.01\), \(T = 100\) and \(M = 8000\).
    The volatility function is given by \(\sigma(t,x) = 0.3\,(1 + 0.1x)\). This is performed within a CPU-only job on a single node with 10 cores.}
    \label{fig:elapsedtime_EulerVSRFF}
\end{figure}
Figure~\ref{fig:elapsedtime_EulerVSRFF} compares the computational cost of the Euler simulation scheme implemented with and without RFF sampling performed inside the algorithm. Both implementations are evaluated under identical numerical conditions, using the same underlying Brownian path. The experiment is designed to assess the impact of internal RFF sampling on the overall execution time and to quantify the computational complexity of the scheme described in Algorithm~\ref{algo:fast_euler}. The figure shows a significant advantage of incorporating the RFF with respect to the pure Euler scheme of Algorithm~\ref{alg:slow_euler} a fortiori without incorporating the random feature sampling step in the elapsed execution time. One can see that external feature sampling has a clear advantage over the classical Euler scheme (it is about $200$ times faster). Even internal feature sampling is competitive when simulating relatively long paths of the underlying Volterra process, being $2.5$ times faster. It is important to mention that in practice the feature sampling is performed once and can be stored and reused for multiple tasks.\\

Now, we are interested in the weak error of the RFF-Euler scheme with respect to the classical Euler scheme. In Figure~\ref{fig:weak_error_phi}, we illustrate a weak error analysis of the proposed simulation scheme by evaluating the convergence of expectations of smooth test functions. More precisely, for a given test function \(\phi\), we consider the weak error defined by:
\[
\mathcal{E}_{\phi}(M,T_f) \;=\; \big| \mathbb{E}[\phi(X_n^{M}(T_f))] - \mathbb{E}[\phi(X_n(T_f))] \big|,
\]
where \(X_n^{M}(T_f)\) denotes the numerical approximation of the process at time \(T_f\) using the RFF-Euler scheme and \(X_n(T_f)\) is its classical Euler counterpart. The expectations are estimated using Monte Carlo sampling and the weak errors are reported for two representative test functions, \(\phi(x) = x^2\) and \(\phi(x) = \exp(-x^2)\). The figure compares scenarios in which the numerical and reference solutions are driven either by distinct Brownian paths or by the same Brownian path, for two values of the Hurst exponent \(H\). This setup allows us to assess the sole impact of the RFF kernel approximation on the one hand and both the impact of Brownian regularization and the RFF approximation on the other hand.

\begin{figure}[H]
    \centering
    % -------- First row --------
    \begin{subfigure}[t]{0.48\textwidth}
        \centering
        \includegraphics[width=\textwidth]{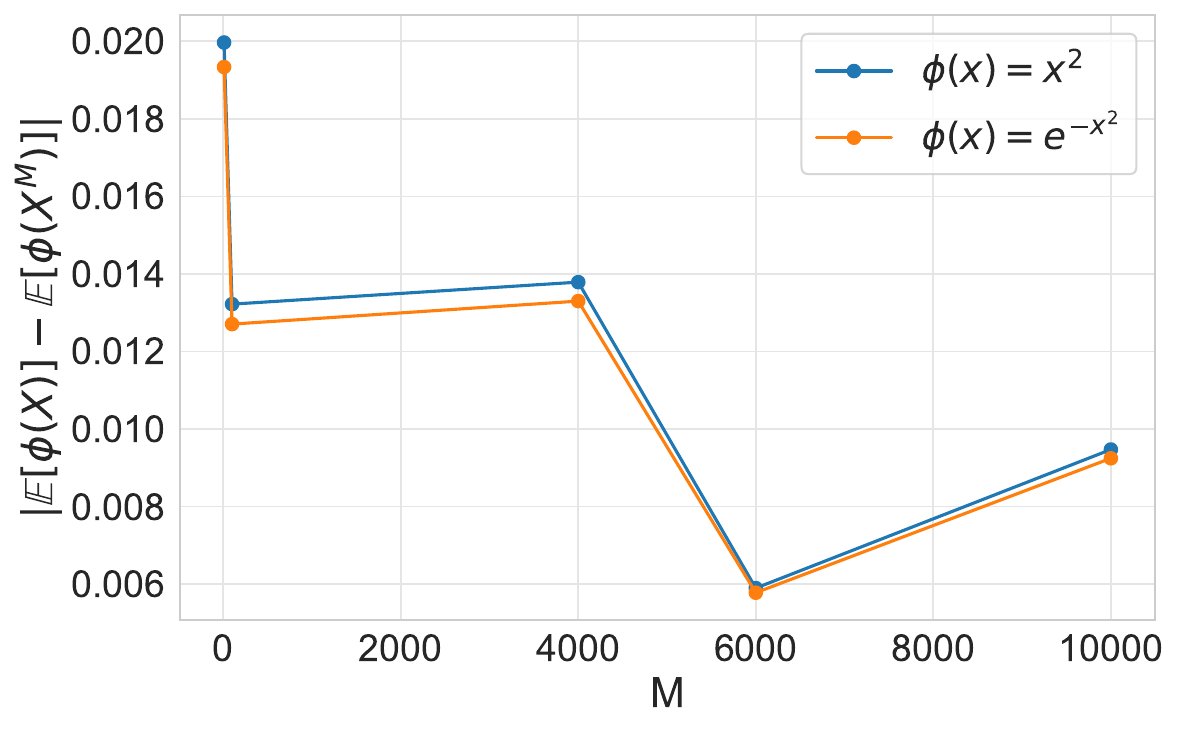}
        \caption{Different Brownian path, \(H = 0.05\)}
    \end{subfigure}
    \hfill
    \begin{subfigure}[t]{0.48\textwidth}
        \centering
        \includegraphics[width=\textwidth]{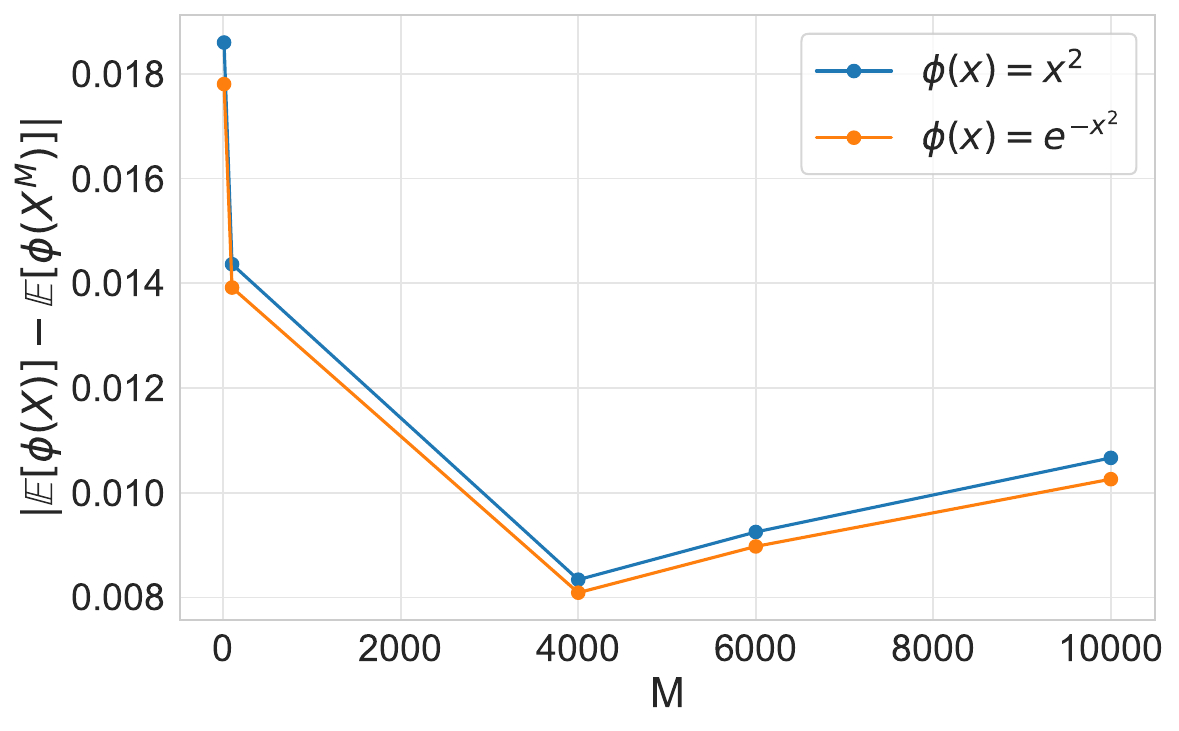}
        \caption{Same Brownian path, \(H = 0.05\)}
    \end{subfigure}

    \vspace{0.4cm}

    % -------- Second row --------
    \begin{subfigure}[t]{0.48\textwidth}
        \centering
        \includegraphics[width=\textwidth]{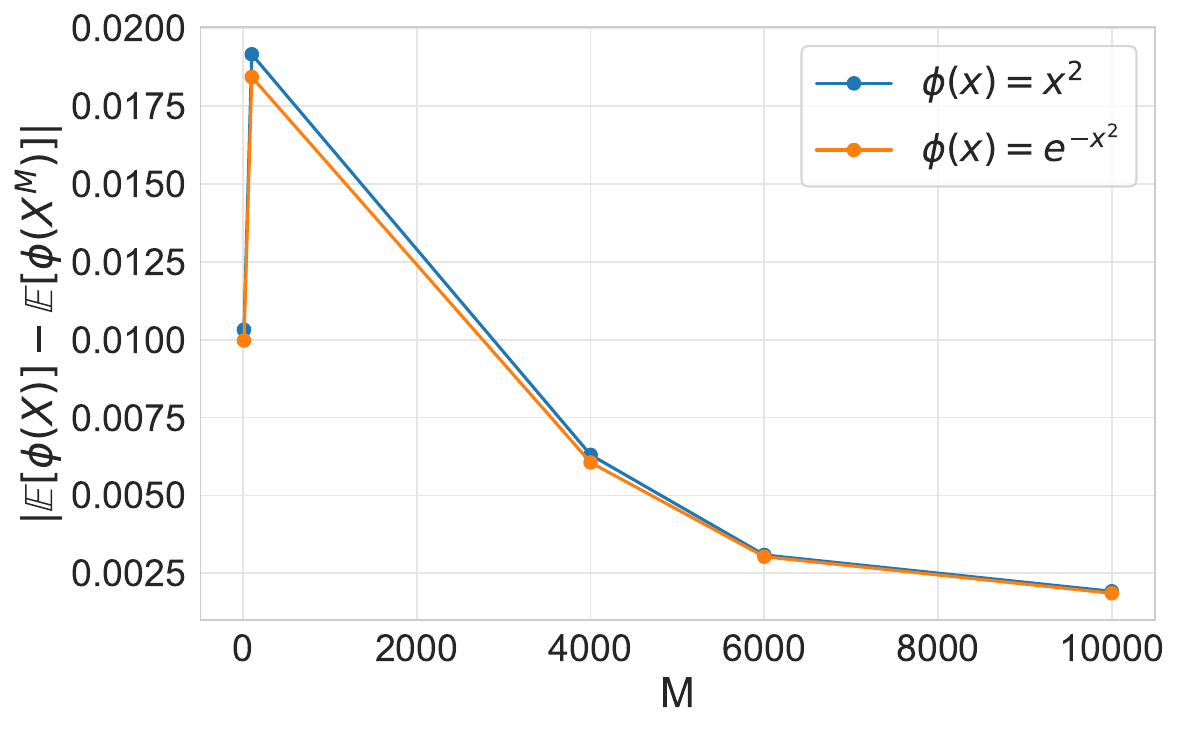}
        \caption{Different Brownian path, \(H = 0.1\)}
    \end{subfigure}
    \hfill
    \begin{subfigure}[t]{0.48\textwidth}
        \centering
        \includegraphics[width=\textwidth]{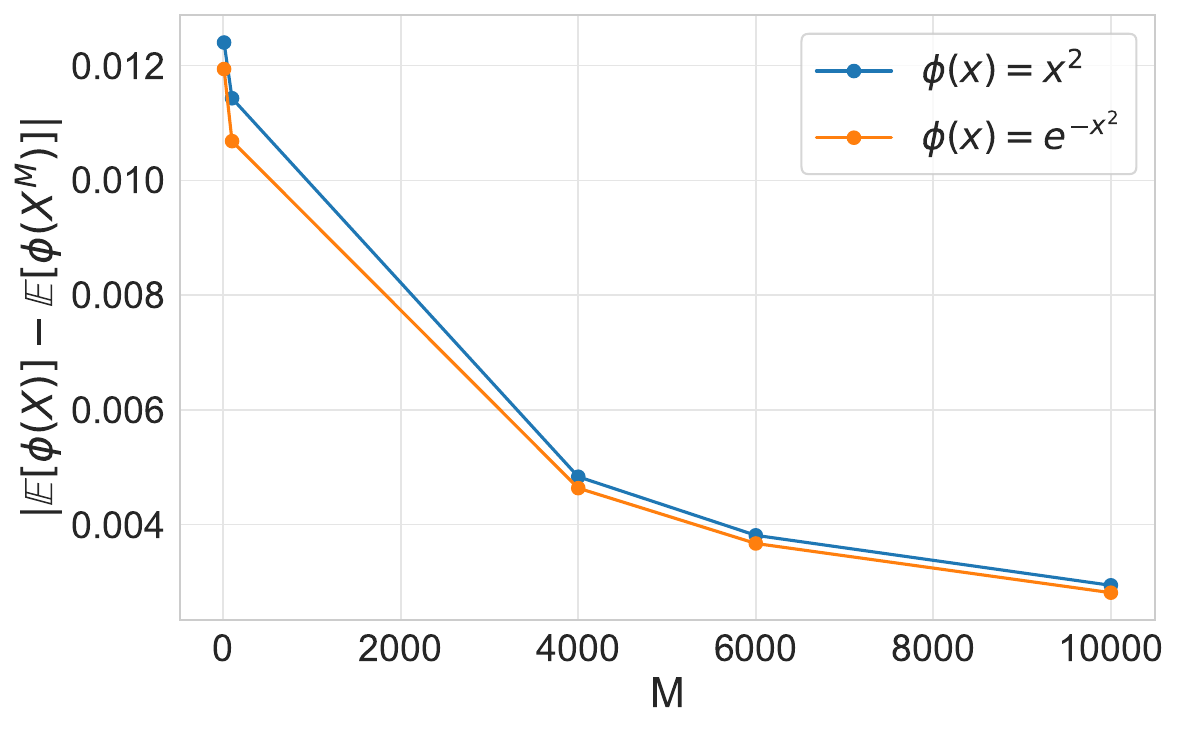}
        \caption{Same Brownian path, \(H = 0.1\)}
    \end{subfigure}

    \caption{Weak error analysis for two test functions \(\phi(x) = x^2\) (blue) and \(\phi(x) = \exp(-x^2)\) (orange).
    Results are obtained with \(T_f = 1\), \(T = 200\), \(\nu = 1\) and \(100\) simulated Brownian paths.
    The left column corresponds to simulations driven by independent Brownian motion paths (between the blue and the orange curves), while the right column uses a common Brownian path.
    The top row shows the case \(H = 0.05\) and the bottom row corresponds to \(H = 0.1\).}
    \label{fig:weak_error_phi}
\end{figure}

The results displayed in Figure~\ref{fig:weak_error_phi} indicate a clear and consistent decay of the weak error as the RFF approximation is refined, for both test functions under consideration. Using a common Brownian path systematically reduces the weak error by a factor of about $10$. Moreover, the qualitative behaviour of the weak error remains similar as a function of $M$ for both \(\phi(x)=x^2\) and \(\phi(x)=\exp(-x^2)\), suggesting that the numerical scheme exhibits robust weak convergence.\\

We now investigate the strong approximation error induced by the use of a finite number \(M\) of simulated features in the construction of the numerical scheme. For any fixed final time horizon \(T_f > 0\) and any \(p \ge 1\), we define the strong error at time \(T_f\) as:
\begin{eqnarray}
\mathcal{E}_{\mathrm{strong}}^{p}(M,T_f)
\;:=\;
\big\| X_n^{M}(T_f) - X_n(T_f) \big\|_{\mathcal{L}^p}^p,
\end{eqnarray}
 The dependence of the strong error on the number of features is then characterized by the mapping:
\[
M \;\longmapsto\; \mathcal{E}_{\mathrm{strong}}^{p}(M,T_f),
\]
which we analyze for arbitrary but fixed time horizon \(T_f\). In order to disentangle the different sources of approximation error, we consider two distinct coupling scenarios. In the first scenario, both processes \(X_n\) and \(X_n^{M}\) are driven by the same underlying Brownian path. This synchronous coupling isolates the contribution of the random Fourier features approximation of the S-fBM kernel and allows us to assess its sole impact on the strong error. In the second scenario, the two processes are driven by independent Brownian motions. This setting captures not only the error induced by the random features approximation, but also the additional discrepancy arising from the stochastic regularization introduced by the Brownian perturbations themselves. Together, these two regimes provide a comprehensive characterization of the strong convergence behavior of the scheme as a function of \(M\).

\begin{figure}[H]
    \centering
    % -------- First row --------
    \begin{subfigure}[t]{0.48\textwidth}
        \centering
        \includegraphics[width=\textwidth]{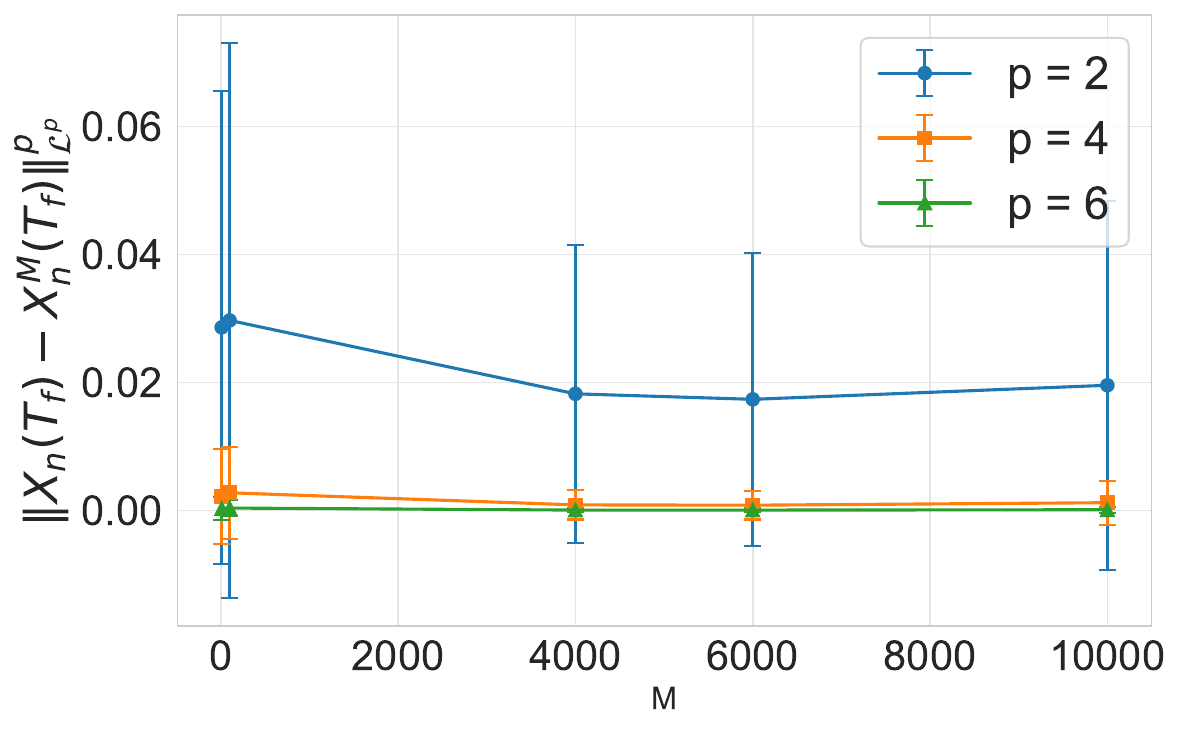}
        \caption{Different Brownian path, \(H = 0.05\)}
    \end{subfigure}
    \hfill
    \begin{subfigure}[t]{0.48\textwidth}
        \centering
        \includegraphics[width=\textwidth]{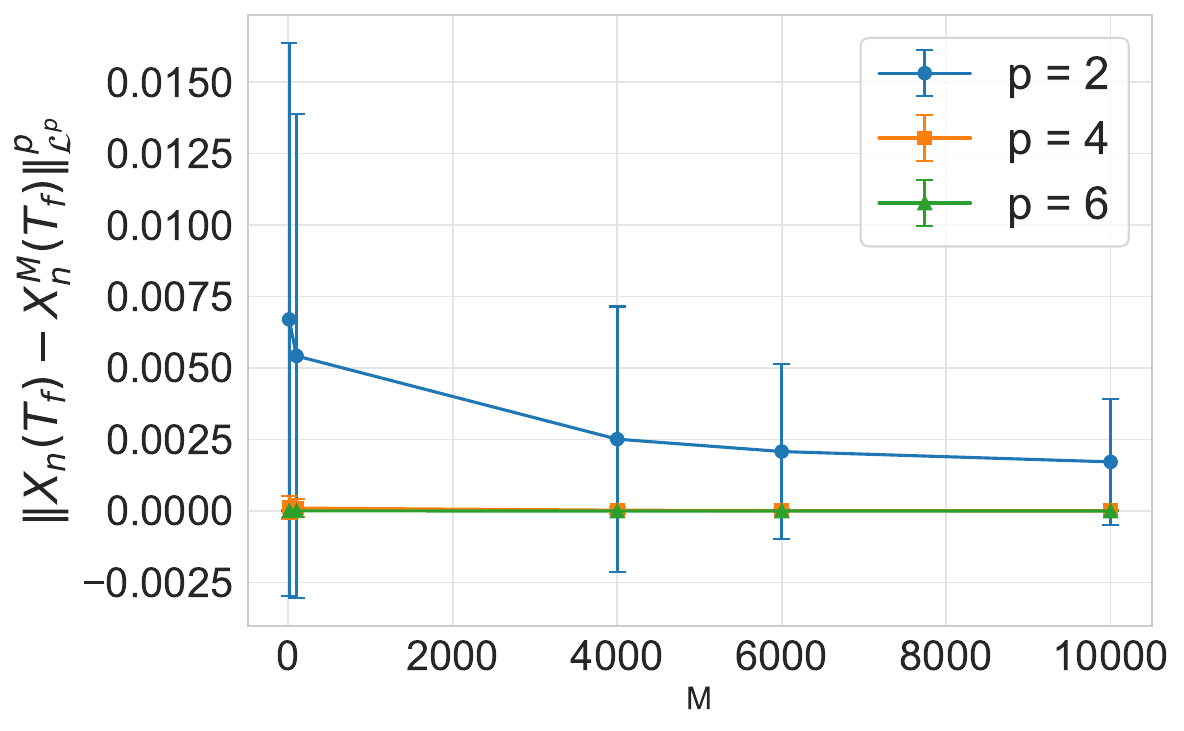}
        \caption{Same Brownian path, \(H = 0.05\)}
    \end{subfigure}

    \vspace{0.4cm}

    % -------- Second row --------
    \begin{subfigure}[t]{0.48\textwidth}
        \centering
        \includegraphics[width=\textwidth]{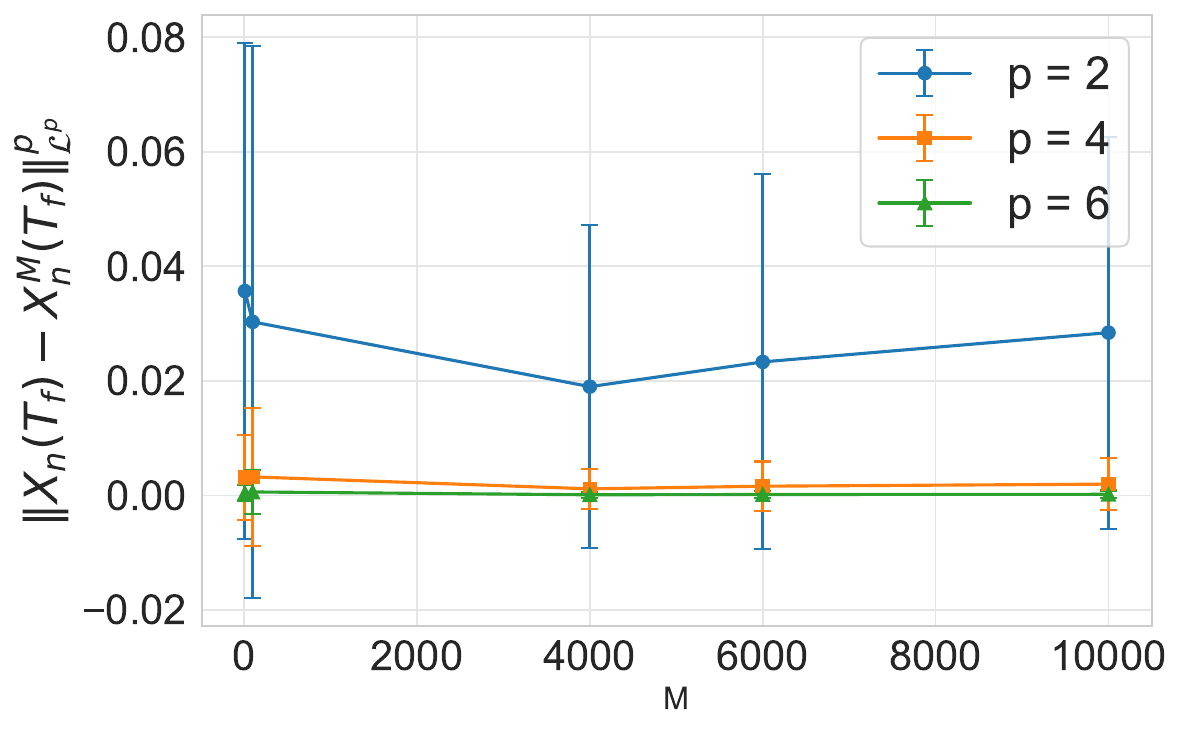}
        \caption{Different Brownian path, \(H = 0.1\)}
    \end{subfigure}
    \hfill
    \begin{subfigure}[t]{0.48\textwidth}
        \centering
        \includegraphics[width=\textwidth]{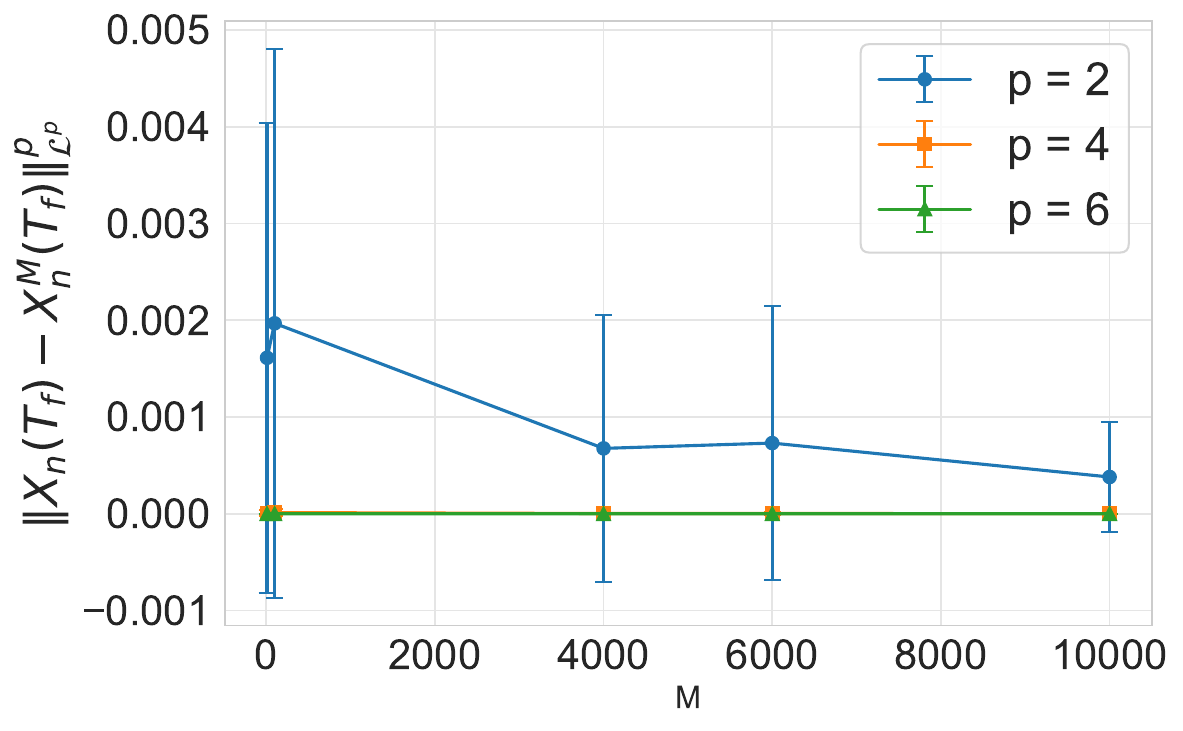}
        \caption{Same Brownian path, \(H = 0.1\)}
    \end{subfigure}

    \caption{Strong error analysis with the corresponding $95\%$ confidence interval considering $p=2$, $p=4$ and $p=6$.
    Results are obtained with \(T_f = 1\), \(T = 200\), \(\nu = 1\) and \(100\) Brownian simulated paths.
    The left column corresponds to simulations driven by independent Brownian motion paths (between the blue, the orange and the green curves), while the right column uses a common Brownian path.
    The top row shows the case \(H = 0.05\) and the bottom row corresponds to \(H = 0.1\).}
    \label{fig:strong_error}
\end{figure}

In Figure \ref{fig:strong_error}, the numerical results indicate an improvement in approximation accuracy as the number of simulated features \(M\) increases. In particular, the strong approximation error is observed to decay monotonically as a function of \(M\), reflecting the enhanced fidelity of the random feature representation of the underlying kernel. This decay is consistent with the expected averaging effects induced by Monte Carlo-type approximations. In parallel, the associated confidence intervals around the empirical error estimates become narrower as \(M\) grows, indicating a reduction in the variance of the error. Together, these effects demonstrate that increasing the number of features improves the average accuracy of the scheme.

\section{Conclusion and prospects\label{sec:conclusion}}
In this work, we have developed a comprehensive theoretical and numerical framework of an accelerated numerical scheme for stochastic Volterra processes based on the Random Fourier Features approximation, with a particular focus on stochastic Volterra equations involving the stationary fractional Brownian motion (S-fBM) covariance kernel. 

Using the theoretical connection between the complete monotonicity and positive definiteness, we proved that the proposed RFF approximation is applicable to a larger class of kernels than the classical multifactor approximation by a sum of exponential terms, including kernels with compact support such as the S-fBM kernel.

The RFF-based spectral Monte Carlo technique, implemented using Hamiltonian Monte Carlo (HMC) sampling, substantially accelerates the simulation of stochastic Volterra equations, with controlled accuracy. Rigorous quantitative results, including strong $\mathcal{L}^p$ error bounds (Proposition~\ref{prop:eulerXXNMRFFerrorbound} and Theorem \ref{thm:eulerXNXNMRFFerrorbound}), establish uniform-in-time control of the schemes and explicit convergence rates in terms of the time discretization truncation parameter $N$ and the number of random features $M$. 

Focusing on the S-fBM kernel, for which we determined explicitly its spectral density and parameter condition for positive definiteness, our numerical experiments confirm that the proposed method reproduces the kernel and its statistical properties with high fidelity. Non-asymptotic error bounds decompose the overall approximation error into contributions from time discretization, spectral truncation and stochastic sampling, with each component explicitly controlled.

To sum up, the proposed RFF-Euler scheme for Volterra equations allows the efficient approximation of a broad class of memory-driven kernels, while maintaining strong guarantees on accuracy, convergence and computational tractability. Future work may include practical applications of the scheme to quantitative finance problems, such as option pricing under rough volatility dynamics and stochastic control problems.

\section*{Acknowledgements}

Othmane Zarhali acknowledges the support from the French government, managed by the National Research Agency (ANR), under the ``France 2030'' program with reference ``ANR-23-IACL-0008''. Nicolas Langren\'e acknowledges the support of the Guangdong Provincial/Zhuhai Key Laboratory of IRADS (2022B1212010006).

%\newpage

\bibliographystyle{apalike}
\bibliography{biblio}
%\newpage

\appendix
In the proofs below, the multiplicative constants in the upper bound errors may change from one line to another.
\section{S-fBM spectral density} 

\subsection{Proof of Proposition~\ref{prop:spectraldensityS-fBMd1}}
\label{subsec:proof_spectral_density_sfbmd1}

\begin{prop}
\label{prop:spectraldensityS-fBMd1}
   The Fourier transform of the S-fBM kernel for $d=1$ can be expressed as follows:
   \begin{eqnarray}
    \forall u\in \R,\tab  g_{\omega}(u)=\sqrt{\frac{2}{\pi}}\frac{\lambda^2}{ H(1-2H)}\left( \frac{\sin(uT)}{u}-T\underset{\substack{k\geq 0 \\k\hspace{0.1cm} even}}\sum\frac{\left(Tu\right)^k\left(-1\right)^{\lfloor k/2 \rfloor}}{\left(2H+k+1\right)k!}\right)
\end{eqnarray}
\end{prop}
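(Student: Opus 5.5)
The computation is direct, using the unitary convention $\frac{1}{\sqrt{2\pi}}\int_{\R} f(x)e^{-iux}\,dx$. This convention is the one that produces the prefactor $\sqrt{2/\pi}$; it is not the $(2\pi)^{-d}$ convention of Eq.~\eqref{eq:fourier}, so the normalisation has to be tracked explicitly. Since $C_\omega$ is even and supported on $[-T,T]$, its transform reduces to a cosine transform:
\begin{eqnarray*}
g_\omega(u)=\sqrt{\tfrac{2}{\pi}}\,\frac{\nu^2}{2}\int_0^T\Big(1-\big(\tfrac{x}{T}\big)^{2H}\Big)\cos(ux)\,dx .
\end{eqnarray*}
By Eq.~\eqref{eq:nu2intermformula}, $\nu^2=\lambda^2/(H(1-2H))$. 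This gives the factor $\lambda^2/(H(1-2H))$, up to the factor $1/2$ coming from $\nu^2/2$; that $1/2$ has to be reconciled with the stated constant, either absorbed by the convention or indicating a typo. The integrand splits into two terms.

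The first term is elementary: $\int_0^T\cos(ux)\,dx=\sin(uT)/u$, which is the first term in the statement.

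For the second term, expand $\cos(ux)=\sum_{j\ge0}(-1)^j(ux)^{2j}/(2j)!$. The series converges uniformly on $[0,T]$, and $x^{2H}$ is integrable there, so summation and integration can be exchanged by dominated convergence, with dominating function $x^{2H}\cosh(|u|x)$. Integrating term by term gives
\begin{eqnarray*}
\int_0^T\big(\tfrac{x}{T}\big)^{2H}x^{2j}\,dx=\frac{T^{2j+1}}{2H+2j+1}.
\end{eqnarray*}
Writing $k=2j$, so that $(-1)^j=(-1)^{\lfloor k/2\rfloor}$, the second term becomes
\begin{eqnarray*}
T\sum_{k\ \mathrm{even}}\frac{(Tu)^k(-1)^{\lfloor k/2\rfloor}}{(2H+k+1)\,k!},
\end{eqnarray*}
which is exactly the series in the statement.

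The only delicate points are bookkeeping. The first is the normalisation constant, as discussed above. The second is the value $u=0$, where $\sin(uT)/u$ is interpreted as its limit $T$. Absolute convergence of the final series for every $u$ follows from comparison with $\cosh(Tu)$, which justifies the interchange and shows that the representation holds on all of $\R$.
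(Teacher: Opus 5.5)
Your derivation is correct and is essentially the paper's argument. The paper expands $e^{ihu}$ on $[0,T]$ and keeps the real part through $\cos(k\pi/2)$, which amounts to your direct expansion of $\cos(ux)$, and your uniform/dominated-convergence justification supplies a step the paper leaves implicit.

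The factor you flagged is a genuine error in the statement, not a convention issue. In its second display the paper's proof replaces $C_\omega=\tfrac{\nu^2}{2}\big(1-(|h|/T)^{2H}\big)$ by $\nu^2\big(1-(|h|/T)^{2H}\big)$, i.e.\ it silently drops the $\tfrac12$. So with the $\tfrac{1}{\sqrt{2\pi}}$ normalisation the correct prefactor is $\tfrac{1}{\sqrt{2\pi}}\tfrac{\lambda^2}{H(1-2H)}$, half of the stated $\sqrt{2/\pi}\,\tfrac{\lambda^2}{H(1-2H)}$. The value at $u=0$ confirms this: it must equal $\tfrac{1}{\sqrt{2\pi}}\int_{\R}C_\omega=\tfrac{1}{\sqrt{2\pi}}\tfrac{2HT\nu^2}{2H+1}$, and the stated formula gives twice that.
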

\begin{proof}
    $\omega_{H,T}$ is stationary and its autocovariance function is in $\mathcal{L}^1([-T,T])$. Thus, it admits a spectral density denoted $f_{\omega}$.\\\\
    We have:
    \begin{eqnarray}
        g_{\omega}(u)=\frac{1}{\sqrt{2\pi}}\int_{\R}C_{\omega_{H,T}}(h)e^{-ihu}dh
        \nonumber
    \end{eqnarray}
Using the expression of $C_{\omega_{H,T}}(.)$, we end up with:
\begin{eqnarray}
        g_{\omega}(u)=\frac{\lambda^2}{\sqrt{2\pi} H(1-2H)}\left( 2\frac{\sin(uT)}{u}-\int_{-T}^T\left(\frac{|h|}{T}\right)^{2H} e^{-ihu}dh\right)
        \nonumber
\end{eqnarray}
Furthermore, we have by straightforward computations:
$$\overline{\int_{-T}^T|h|^{2H} e^{-ihu}dh}=-\int_{-T}^T|h|^{2H} e^{-ihu}(-dh)=\int_{-T}^T|h|^{2H} e^{-ihu}dh,$$
which means that the previous function is real-valued. Besides, 
\begin{eqnarray}
    \int_{-T}^T|h|^{2H} e^{-ihu}dh=2\int_{0}^T h^{2H} \cos\left(hu\right)dh=2\mathcal{R}\left(\int_{0}^T h^{2H} e^{ihu}dh \right)\nonumber
\end{eqnarray}
Using the series expansion, one has:
\begin{eqnarray}
    \int_{0}^T h^{2H} e^{ihu}dh =\sum_{k=0}^{\infty}\frac{\left(iu\right)^k}{k!}\int_0^Th^{2H+k} dh= T^{2H+1}\sum_{k=0}^{\infty}\frac{\left(iTu\right)^k}{\left(2H+k+1\right)k!}\nonumber
\end{eqnarray}
After some algebra, one has:
\begin{eqnarray}
    \int_{-T}^T|h|^{2H} e^{-ihu}dh=2T^{2H+1}\sum_{k=0}^{\infty}\frac{\left(Tu\right)^k\cos\left(\frac{k\pi}{2} \right)}{\left(2H+k+1\right)k!}\nonumber
\end{eqnarray}
Thus, 
\begin{eqnarray}
    \int_{-T}^T|h|^{2H} e^{-ihu}dh=2T^{2H+1}\underset{\substack{k\geq 0 \\k\hspace{0.1cm} even}}\sum\frac{\left(Tu\right)^k\left(-1\right)^{\lfloor k/2 \rfloor}}{\left(2H+k+1\right)k!},\nonumber
\end{eqnarray}
which gives the claimed result.
\end{proof}

\subsection{Proof of Proposition~\ref{prop:spectraldensityS-fBM}}
\label{subsec:proof_spectral_density_sfbm}

\begin{proof}
\begin{enumerate}
    \item 
Since $C_\omega\in L^1(\mathbb{R}^d)$, one has by definition:
\[
g_{\omega}(\mathbf{x})
=\frac{1}{(2\pi)^{\frac{d}{2}}}\int_{\mathbb{R}^d} C_\omega(\mathbf{h})\,e^{-i\langle \mathbf{h},\mathbf{x}\rangle}\,dh,
\qquad \mathbf{x}\in\mathbb{R}^d.
\]

Substituting $C_\omega$ yields
\begin{align*}
g_{\omega}(\mathbf{x})
&=\frac{\lambda^2}{(2\pi)^{\frac{d}{2}} H(1-2H)}\Biggl[
\int_{\|h\|\le T} e^{-i\langle \mathbf{h},\mathbf{x}\rangle}\,dh
-\frac{1}{T^{2H}}\int_{\|\mathbf{h}\|\le T} \|\mathbf{h}\|^{2H} e^{-i\langle \mathbf{h},\mathbf{x}\rangle}\,dh
\Biggr].
\end{align*}

Both integrals are radial. Using the Fourier--Bessel transform formula for radial functions in $\mathbb{R}^d$, we obtain
\begin{align*}
\int_{\|\mathbf{h}\|\le T} e^{-i\langle \mathbf{h},\mathbf{x}\rangle}\,dh
&=(2\pi)^{\frac{d}{2}}T^d \frac{J_{\frac{d}{2}}(T\|\mathbf{x}\|)}{(T\|\mathbf{x}\|)^{\frac{d}{2}}},\\[0.5em]
\int_{\|\mathbf{h}\|\le T}\|\mathbf{h}\|^{2H}e^{-i\langle \mathbf{h},\mathbf{x}\rangle}\,dh
&=(2\pi)^{\frac{d}{2}}\|\mathbf{x}\|^{1-\frac{d}{2}}\int_0^T r^{2H+\frac{d}{2}} J_{\frac{d}{2}-1}(\|\mathbf{x}\| r)\,dr.
\end{align*}
Thus,  
\[
g_{\omega}(\mathbf{x})
=\frac{\lambda^2}{H(1-2H)}
\left[
T^d\,\frac{J_{\frac{d}{2}}(T\|\mathbf{x}\|)}{(T\|\mathbf{x}\|)^{\frac{d}{2}}}
-\frac{1}{T^{2H}}\,
\|\mathbf{x}\|^{1-\frac{d}{2}}\int_0^T r^{2H+\frac{d}{2}} J_{\frac{d}{2}-1}(\|\mathbf{x}\| r)\,dr
\right].
\]

This shows that \(f_{\omega}\) is a real-valued, radial function of \(\mathbf{x}\).  
Integrating term by term gives
\[
\int_0^T r^{2H+\frac{d}{2}} J_{\frac{d}{2}-1}(\|\mathbf{x}\| r)\,dr
=\sum_{m=0}^\infty 
\frac{(-1)^m}{m!\,\Gamma(m+\frac{d}{2})}
\left(\frac{\|\mathbf{x}\|}{2}\right)^{2m+\frac{d}{2}-1}
\frac{T^{2H+2m+\frac{d}{2}+1}}{2H+2m+\frac{d}{2}+1}.
\]

Plugging this into the expression of \(f_{\omega}(\mathbf{x})\) and simplifying yields
\[
g_{\omega}(\mathbf{x})
=\frac{\lambda^2 T^d}{H(1-2H)}
\left[
\frac{J_{\frac{d}{2}}(T\|\mathbf{x}\|)}{(T\|\mathbf{x}\|)^{\frac{d}{2}}}
-T^{1-\frac{d}{2}}\sum_{m=0}^\infty
\frac{(-1)^m}{m!\,\Gamma(m+\frac{d}{2})}
\frac{(T\|\mathbf{x}\|/2)^{2m}}{2H+d+2m}
\right].
\]
\item 
The Bessel function representation is an alternating series. For the sake of compactness, we consider for any given $x>0$:
\begin{eqnarray}
    a_m(x) = \frac{(-1)^m}{m!\,\Gamma(m+\alpha+1)}
\left(\frac{x}{2}\right)^{2m+\alpha},\tab m\in \N \nonumber
\end{eqnarray}
For a integer order $\alpha>0$, the Bessel function of the first kind is given by the series
\[
J_{\alpha}(x)=\sum_{m=0}^{\infty} a_m(x).
\]
The associated remainder denoted $R_N(x)$ is known as:
\[
R_N(x) 
= \sum_{m=N}^{\infty} 
\frac{(-1)^m}{m!\,\Gamma(m+\alpha+1)}
\left(\frac{x}{2}\right)^{2m+\alpha}.
\]

Let
\[
b_m = |a_m| = 
\frac{1}{m!\,\Gamma(m+\alpha+1)}
\left(\frac{x}{2}\right)^{2m+\alpha}, m\in \N
\]
Using the identity:
\begin{eqnarray}
\label{eq:identitygammafunction}
    \forall x\in \R_+^*,\tab \Gamma(x+1)=x\Gamma(x) \nonumber
\end{eqnarray}
leads to claim that $\left(b_m\right)_{m\in \N}$ is monotonically decreasing (this means that the radius of convergence is $+\infty$, ensuring absolute convergence in $\R^*_+$). Thus, 
the alternating-series remainder satisfies
\begin{eqnarray}
   \forall x>0,\tab  |R_N(x)| \le b_N 
= \frac{1}{N!\,\Gamma(N+\alpha+1)}
\left(\frac{x}{2}\right)^{2N+\alpha}. \;
\end{eqnarray} 
\begin{eqnarray}
\label{eq:reminderbessel}
    \left|J_{\frac{d}{2}}(T\|\mathbf{x}\|)-\widehat{J}^N_{\frac{d}{2}}(T\|\mathbf{x}\|)\right| 
    \leq \frac{1}{N!\,\Gamma(N+\frac{d}{2}+1)}
    \left(\frac{T\|\mathbf{x}\|}{2}\right)^{2N+\frac{d}{2}}
\end{eqnarray}

Arguing the same way leads for any $\mathbf{x}\in \R^d$ to:

\begin{eqnarray}
    \left|\sum_{m=N}^{\infty}
    \frac{(-1)^m}{m!\,\Gamma(m+\frac{d}{2})}
    \frac{(T\|\mathbf{x}\|/2)^{2m}}{2H+d+2m} \right|
    \leq 
    \frac{(T\|\mathbf{x}\|/2)^{2N}}{N!\,\Gamma(N+\frac{d}{2})\left(2H+d+2N\right)}
\end{eqnarray}

As a result, for any $\mathbf{x}\in \R^d$:

\begin{eqnarray}
    \left| \Hat{g}^N_{\omega}(\mathbf{x})- g_{\omega}(\mathbf{x}) \right| 
    \leq 
    \frac{1}{N!\,\Gamma(N+\frac{d}{2}+1)}
    \left(\frac{T\|\mathbf{x}\|}{2}\right)^{2N+\frac{d}{2}}
    + \frac{(T\|\mathbf{x}\|/2)^{2N}}{N!\,\Gamma(N+\frac{d}{2})\left(2H+d+2N\right)}
\end{eqnarray}

This means that for any compact $K\subset \R^d$, there exists a positive constant $C$ such that:
\begin{eqnarray}
    \left|\left| \Hat{g}^N_{\omega}- g_{\omega} \right|\right|_{\infty}^K \leq \frac{(TC/2)^{2N}}{N!\,\Gamma(N+\frac{d}{2})(N+\frac{d}{2})} \left(
\left(\frac{TC}{2}\right)^{\frac{d}{2}}+1\right)
\end{eqnarray}
where $C:=\underset{x\in K}\sup \left|\left|x\right|\right|$.
\end{enumerate}
\end{proof}

\subsection{Proof of Theorem~\ref{thm:fourier_S-fBMhypergeometric}}
\label{subsec:proof_fourier_sfbm_hypergeometric}

Before demonstrating the error bound, we need this technical lemma:
\begin{lem}
\label{lem:factorialbounds}
    For any $n\in \N$,
    \begin{eqnarray}
        \left( \frac{n}{e} \right)^n e \le n! \le \frac{e^2}{4} \left( \frac{n+1}{e} \right)^{\,n+1}.
    \end{eqnarray}
\end{lem}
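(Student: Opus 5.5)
The plan is to compare $\log n! = \sum_{k=1}^{n}\log k$ with integrals of the increasing function $x\mapsto \log x$, using the antiderivative $x\log x - x$. First, a caveat on the range of $n$. At $n=0$ the lower bound would read $e\le 1$ and the upper bound $1\le e/4$, both false. So the statement should be read for $n\in\N^{*}$, which is all that is needed in the truncation estimates. I would state this restriction explicitly at the start of the proof.

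\textbf{Lower bound.} Since $\log$ is increasing, $\log k \ge \int_{k-1}^{k}\log x\,dx$ for every $k\ge 2$. Summing over $k=2,\dots,n$ gives
\begin{equation*}
\log n! \;\ge\; \int_{1}^{n}\log x\,dx \;=\; n\log n - n + 1 .
\end{equation*}
Exponentiating yields $n!\ge e\,(n/e)^n$. This holds for $n=1$ trivially (with equality).

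\textbf{Upper bound.} Again by monotonicity, $\log k\le \int_{k}^{k+1}\log x\,dx$. The key choice is to start the comparison at $k=2$ rather than $k=1$, using that $\log 1=0$. Starting at $k=1$ would only give the weaker constant $e$ instead of $e^2/4$. Summing over $k=2,\dots,n$ gives
\begin{equation*}
\log n! \;\le\; \int_{2}^{n+1}\log x\,dx \;=\; (n+1)\log(n+1)-(n+1) - 2\log 2 + 2 .
\end{equation*}
Exponentiating, this is exactly $n!\le \frac{e^2}{4}\left(\frac{n+1}{e}\right)^{n+1}$. For $n=1$ the sum is empty and the bound reads $1\le 1$, which is consistent.

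\textbf{Main obstacle.} There is no real difficulty here. The only point needing care is matching the constant $e^2/4$, which is obtained precisely by discarding the $k=1$ term and integrating from $2$. I would also check the edge case $n=1$ directly, and note the failure at $n=0$ as above.
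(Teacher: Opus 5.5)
Your proof is correct and is essentially the paper's argument: the paper sandwiches $\log\lfloor x\rfloor$ between $\log(x-1)$ and $\log x$ and integrates over $[2,n+1]$, which is exactly your termwise comparison $\int_1^n\log x\,dx \le \log n! \le \int_2^{n+1}\log x\,dx$. Your closed forms are also cleaner than the paper's intermediate line, which has slips before the correct simplified expressions. Your caveat is right as well: both inequalities fail at $n=0$ ($e\le 1$ and $1\le e/4$), so the lemma should be stated for $n\in\N^{*}$, and the paper's integration over $[2,n+1]$ already tacitly needs $n\ge 1$.
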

\begin{proof}
We start from the basic inequality:
\[
\forall x\in \R, \tab x - 1 < \lfloor x \rfloor \le x,
\]
which immediately implies, for \(x > 1\),
\[
\log(x - 1) \le \log(\lfloor x \rfloor) \le \log(x).
\]
Integrating over the interval \([2, n+1]\), we obtain:
\[
\int_2^{\,n+1} \log(x - 1)\, dx \le \int_2^{\,n+1} \log(\lfloor x \rfloor)\, dx 
= \sum_{k=2}^{n} \log k \le \int_2^{\,n+1} \log x \, dx.
\]
Evaluating the integrals gives:
\[
(n-1)\log n - (n-2) \le \log(n!) \le n \log(n+1) - (n-1) - (2 \log 2 - 2),
\]
which can be simplified to:
\[
n (\log n - 1) + 1 \le \log(n!) \le (n+1)(\log(n+1) - 1) - (2 \log 2 - 2).
\]
Exponentiating both sides yields the bounds:
\[
e^{\,n(\log n - 1) + 1} \le n! \le e^{\, (n+1)(\log(n+1)-1) - (2 \log 2 - 2)},
\]
or equivalently,
\[
\left( \frac{n}{e} \right)^n e \le n! \le \frac{e^2}{4} \left( \frac{n+1}{e} \right)^{\,n+1}.
\]
\end{proof}

We can now proceed with the proof of Theorem~\ref{thm:fourier_S-fBMhypergeometric}.

\begin{proof}
\begin{enumerate}
    \item 
We reason on the kernel $K_{\alpha}(\mathbf{u})=(1-\left\Vert \mathbf{u}\right\Vert ^{\alpha})\mathbbm{1}_{\{\left\Vert \mathbf{u}\right\Vert \leq1\}}$ as the S-fBM kernel is simply a rescaling of it. Since the kernel~\eqref{eq:S-fBM_kernel} is continuous and absolutely
integrable, its Fourier transform is given by \citep[Theorem~B.1]{fasshauer2007meshfree}

\[
\mathcal{F}_{\alpha}(\mathbf{x})=\frac{1}{(2\pi)^{\frac{d}{2}}\left\Vert \mathbf{x}\right\Vert ^{\frac{d}{2}-1}}\int_{0}^{1}t^{\frac{d}{2}}(1-t^{\alpha})J_{\frac{d}{2}-1}(\left\Vert \mathbf{x}\right\Vert t)dt\ .
\]
Then, use equation~\eqref{eq:bessel} to obtain
\begin{align*}
\mathcal{F}_{\alpha}(\mathbf{x}) & =\frac{1}{(2\pi)^{\frac{d}{2}}\left\Vert \mathbf{x}\right\Vert ^{\frac{d}{2}-1}}\int_{0}^{1}t^{\frac{d}{2}}(1-t^{\alpha})\sum_{n=0}^{\infty}\frac{(-1)^{n}(\left\Vert \mathbf{x}\right\Vert t/2)^{\frac{d}{2}-1+2n}}{n!\Gamma\!\left(n+\frac{d}{2}\right)}dt\\
 & =\frac{1}{2^{\frac{d}{2}-1}(2\pi)^{\frac{d}{2}}}\sum_{n=0}^{\infty}\frac{(-\left\Vert \mathbf{x}\right\Vert ^{2}/4)^{n}}{n!\Gamma\!\left(n+\frac{d}{2}\right)}\int_{0}^{1}t^{2n+d-1}(1-t^{\alpha})dt\ .
\end{align*}
Then, using a change of variable and the definition of the beta function
$B(a,b)=\frac{\Gamma(a)\Gamma(b)}{\Gamma(a+b)}=\int_{0}^{1}t^{a-1}(1-t)^{b-1}du$
with $a=(2n+d)/\alpha$ and $b=2$,
\begin{align*}
 & \int_{0}^{1}t^{2n+d-1}(1-t^{\alpha})dt=\frac{1}{\alpha}\int_{0}^{1}t^{\frac{2n+d}{\alpha}-1}(1-t)dt=\frac{1}{\alpha}\frac{\Gamma\!\left(\frac{2n+d}{\alpha}\right)}{\Gamma\!\left(\frac{2n+d}{\alpha}+2\right)}\\
 & =\frac{1}{\alpha}\frac{1}{\left(\frac{2n+d}{\alpha}+1\right)\left(\frac{2n+d}{\alpha}\right)}=\frac{\alpha}{4}\frac{1}{\left(\frac{d+\alpha}{2}+n\right)\left(\frac{d}{2}+n\right)}=\frac{\alpha}{4}\frac{\Gamma\!\left(\frac{d+\alpha}{2}+n\right)}{\left(\frac{d}{2}+n\right)\Gamma\!\left(\frac{d+\alpha}{2}+1+n\right)}\ ,
\end{align*}
which gives:
\begin{align*}
\mathcal{F}_{\alpha}(\mathbf{x}) & =\frac{\alpha}{2^{\frac{d}{2}+1}(2\pi)^{\frac{d}{2}}}\sum_{n=0}^{\infty}\frac{(-\left\Vert \mathbf{x}\right\Vert ^{2}/4)^{n}}{n!\Gamma\!\left(\frac{d}{2}+n\right)}\frac{\Gamma\!\left(\frac{d+\alpha}{2}+n\right)}{\left(\frac{d}{2}+n\right)\Gamma\!\left(\frac{d+\alpha}{2}+1+n\right)}\\
 & =\frac{\alpha}{2^{\frac{d}{2}+1}(2\pi)^{\frac{d}{2}}}\sum_{n=0}^{\infty}\frac{\Gamma\!\left(\frac{d+\alpha}{2}+n\right)}{\Gamma\!\left(\frac{d+\alpha}{2}+1+n\right)\Gamma\!\left(\frac{d}{2}+1+n\right)}\frac{\left(-\left\Vert \mathbf{x}\right\Vert ^{2}/4\right)^{n}}{n!}\\
 & =\frac{\alpha}{2^{\frac{d}{2}+1}(2\pi)^{\frac{d}{2}}}\frac{\Gamma\!\left(\frac{d+\alpha}{2}\right)}{\Gamma\!\left(\frac{d+\alpha}{2}+1\right)\Gamma\!\left(\frac{d}{2}+1\right)}\sum_{n=0}^{\infty}\frac{\left(\frac{d+\alpha}{2}\right)_{n}}{\left(\frac{d+\alpha}{2}+1\right)_{n}\left(\frac{d}{2}+1\right)_{n}}\frac{\left(-\left\Vert \mathbf{x}\right\Vert ^{2}/4\right)^{n}}{n!}\\
 & =\frac{1}{2^{\frac{d}{2}}(2\pi)^{\frac{d}{2}}\left(\frac{d}{\alpha}+1\right)\Gamma\!\left(\frac{d}{2}+1\right)}\,_{1}F_{2}\left(\frac{d+\alpha}{2};\frac{d+\alpha}{2}+1,\frac{d}{2}+1;-\frac{\left\Vert \mathbf{x}\right\Vert ^{2}}{4}\right)\ .
\end{align*}
Using a change of variable, we conclude the claimed formula.

\item 
We recall: 
\begin{equation}
f_{\omega}(\mathbf{x})=\frac{\nu^{2}T^{d}}{2^{d+1}\pi^{\frac{d}{2}}\left(\frac{d}{2H}+1\right)\Gamma\!\left(\frac{d}{2}+1\right)}\,_{1}F_{2}\left(\frac{d}{2}+H;\frac{d}{2}+H+1,\frac{d}{2}+1;-\frac{\left\Vert \mathbf{x}\right\Vert ^{2}T^{2}}{4}\right),\ \mathbf{x}\in\mathbb{R}^{d}
\end{equation}

We fix an arbitrary compact $\mathcal{K}\subset{ \R^d}$.\\

By definition, \[
_{1}F_{2}\left(\frac{d}{2}+H;\frac{d}{2}+H+1,\frac{d}{2}+1;-\frac{\left\Vert \mathbf{x}\right\Vert ^{2}T^{2}}{4}\right)
= \sum_{n=0}^{\infty} 
\frac{\left(\frac{d}{2}+H\right)_n}{\left(\frac{d}{2}+1\right)_n\left(\frac{d}{2}+H+1\right)_n}
\frac{\left(-\frac{T^{2}\|\mathbf{x}\|^{2}}{4}\right)^{n}}{n!},
\]

We denote for simplicity:
\begin{eqnarray}
    t_n\left(\mathbf{x}\right):=\frac{\left(\frac{d}{2}+H\right)_n}{\left(\frac{d}{2}+1\right)_n\left(\frac{d}{2}+H+1\right)_n}
\frac{\left(-\frac{T^{2}\|\mathbf{x}\|^{2}}{4}\right)^{n}}{n!}
\end{eqnarray}
After some algebra, the ratio is equal to:

\begin{eqnarray}
    \left|\frac{t_{n+1}\left(\mathbf{x}\right)}{t_n\left(\mathbf{x}\right)}\right|
= \frac{\frac{d}{2}+H+n}{n+1}
\cdot
\frac{\frac{T^{2}\|\mathbf{x}\|^{2}}{4}}
{\left(\frac{d}{2}+1+n\right)\left(\frac{d}{2}+H+1+n\right)}.
\end{eqnarray}
It can be seen that since $\mathbf{x}\in \mathcal{K}$:
\begin{eqnarray}
    \forall \mathbf{x}\in \mathcal{K},\tab  \left|\frac{t_{n+1}\left(\mathbf{x}\right)}{t_n\left(\mathbf{\mathbf{x}}\right)}\right|
\leq \frac{Br(\mathcal{K})^2}{n^2}
\end{eqnarray}
where $r\left(\mathcal{K}\right)=\underset{\mathbf{x}\in \mathcal{K}}\sup \|\mathbf{x}\|$. 
This means that:
\begin{eqnarray}
    \forall \mathbf{x}\in \mathcal{K}, \exists C>0, \forall n\in \N,\tab t_n\left(\mathbf{\mathbf{x}}\right) \leq \frac{\left(Cr\left(\mathcal{K}\right)^2\right)^n}{\left(n!\right)^2}
\end{eqnarray}
Using Lemma~\ref{lem:factorialbounds}, there exists a positive constant $C>0$ (depends only on $d$ and $H$)  such that for any $N\in \N$:
\begin{eqnarray}
    \forall x\in \R^d,\tab t_n\left(\mathbf{\mathbf{x}}\right) \leq \frac{1}{e^2} \left(\frac{C r\left(\mathcal{K}\right) e}{n}\right)^{2n}
\end{eqnarray}
the function $t\rightarrow \left(\frac{e}{t}\right)^{2t}$ in $[1,+\infty[$ is decreasing which means (thanks to the Integral Test) that:
\begin{eqnarray}
    \forall N\in \N^{*},\tab \left|\widehat{{}_1F_2}^N\!\left(\frac{d}{2}+H;\frac{d}{2}+H+1,\frac{d}{2}+1;-\frac{\left\Vert \mathbf{x}\right\Vert ^{2}T^{2}}{4}\right)-{}_1F_2\!\left(\frac{d}{2}+H;\frac{d}{2}+H+1,\frac{d}{2}+1;-\frac{\left\Vert \mathbf{x}\right\Vert ^{2}T^{2}}{4}\right) \right| \nonumber \\
    \leq \frac{1}{e^2}\int_{N-1}^{+\infty}\left(\frac{C r\left(\mathcal{K}\right) e}{t}\right)^{2t}dt\nonumber
\end{eqnarray}
After some algebra, one has:
\begin{eqnarray}
   \forall N\in \N^{*},\tab \left|\widehat{{}_1F_2}^N\!\left(\frac{d}{2}+H;\frac{d}{2}+H+1,\frac{d}{2}+1;-\frac{\left\Vert \mathbf{x}\right\Vert ^{2}T^{2}}{4}\right)-{}_1F_2\!\left(\frac{d}{2}+H;\frac{d}{2}+H+1,\frac{d}{2}+1;-\frac{\left\Vert \mathbf{x}\right\Vert ^{2}T^{2}}{4}\right) \right| \nonumber \\ \leq \frac{C}{e\left(2C r\left(\mathcal{K}\right) e(N-1)-1\right)\left(N-1\right)^{2C r\left(\mathcal{K}\right) e(N-1)-1}}\nonumber
\end{eqnarray}
Putting all the pieces together leads to:
\begin{eqnarray}
   \forall N\in \N^{*}, \left|\Hat{f}^N_{\omega}(\mathbf{x})-f_{\omega}(\mathbf{x})\right|\leq\frac{\nu^{2}T^{d}}{2^{d+1}\pi^{\frac{d}{2}}\left(\frac{d}{2H}+1\right)\Gamma\!\left(\frac{d}{2}+1\right)} \frac{C}{e\left(2C r\left(\mathcal{K}\right) e(N-1)-1\right)\left(N-1\right)^{2C r\left(\mathcal{K}\right) e(N-1)-1}},\nonumber\\
\end{eqnarray}
which yields the claimed result.
\end{enumerate}
\end{proof}

\subsection{Proof of Proposition~\ref{prop:queuelonenormerror}}
\label{subsec:proof_prop_lone_norm_error}

\begin{proof}
We recall: 
\begin{eqnarray}
f_{\omega}(\mathbf{x})=\frac{\nu^{2}T^{d}}{2^{d+1}\pi^{\frac{d}{2}}\left(\frac{d}{2H}+1\right)\Gamma\!\left(\frac{d}{2}+1\right)}\,_{1}F_{2}\left(\frac{d}{2}+H;\frac{d}{2}+H+1,\frac{d}{2}+1;-\frac{\left\Vert \mathbf{x}\right\Vert ^{2}T^{2}}{4}\right),\ \mathbf{x}\in\mathbb{R}^{d} \nonumber
\end{eqnarray}
Let $\hat f_\omega^N(\mathbf{x})$ denote the truncated series up to $N-1$ terms and define the remainder:
\[
R_N(\mathbf{x}) := f_\omega(\mathbf{x}) - \hat f_\omega^N(\mathbf{x}).
\]
The remainder $R_{1,N}$ of the hypergeometric series can be expressed as:
\begin{align*}
{}_1F_2(a;b,c;-z) &= \sum_{n=0}^{\infty} \frac{(a)_n}{(b)_n (c)_n} \frac{(-z)^n}{n!} 
\implies R_{1,N}(z) = \sum_{n=N}^{\infty} \frac{(a)_n}{(b)_n (c)_n} \frac{(-z)^n}{n!}, \quad a=\frac{d}{2}+H, b=\frac{d}{2}+H+1, c=\frac{d}{2}+1.
\end{align*}
Then the total remainder is
\[
R_N(\mathbf{x}) = \frac{\nu^{2}T^d/2}{2^{d}\pi^{\frac{d}{2}}\left(\frac{d}{\alpha}+1\right)\Gamma\!\left(\frac{d}{2}+1\right)} R_{1,N}(z) .
\]
For large $n$, the Pochhammer symbols satisfy:
\[
(c)_n = c (c+1)\cdots (c+n-1) \ge n!  \quad \text{for } c>0,
\]
so that for $n \ge N$ and $z>0$,
\[
\frac{z^n}{(c)_n n!} \le \frac{z^n}{n!}.
\]
A similar argument is valid for ${}_1F_2$,
\[
\frac{(a)_n}{(b)_n (c)_n n!} \le \frac{(C)^n}{n!^2},
\]
for some constant $C$ depending on $a,b,c$.
Thus,
\[
\sum_{n=N}^{\infty} \frac{z^n}{(n!)^2} \le C \, e^{Cz}, \quad z>r,
\]
Let 
\[
C(T,d,\nu,H) = A\left(\frac{T^2}{2}\right)^{\frac{d}{2}} \frac{\nu^2}{\Gamma(\frac{d}{2})} \frac{d+H}{d(d+H)},
\]
where $A$ is a positive constant depending on $a,b,c$. Consequently, for any $\mathbf{x}\in \R^d-B(0,r) $, there exists a positive constant $C$ such that
\[
|R_N(\mathbf{x})| \le  C(T,d,\nu,H)  e^{-CT\|\mathbf{x}\|}
\]
Given $r>0$, the $L^1$ norm over $\|\mathbf{x}\|>r$ in $d$ dimensions is
\[
\| R_N \|_1^{>r} := \int_{\|\mathbf{x}\|>r} |R_N(\mathbf{x})| \, d\mathbf{x} 
= \Omega_d \int_r^\infty |R_N(y)| \, y^{d-1} dy,
\]
with $\Omega_d = \frac{2 \pi^{\frac{d}{2}}}{\Gamma(\frac{d}{2})}$. Plugging in the pointwise bound,
\[
\| R_N \|_1^{>r}\le \Omega_d C(T,d,\nu,H) \int_r^\infty  y^{d-1} e^{-T y} dy
\]
After some algebra, one gets:
\[ 
\int_r^\infty y^{d-1 + 2\beta} e^{-T y} dy = \frac{1}{T^{d }} \int_{T r}^\infty u^{d-1 } e^{-u} du.
\]
Then the integral is the upper incomplete gamma function:
\[
\int_r^\infty y^{d-1} e^{-T y} dy = \frac{1}{T^{d}} \Gamma(d, T r).
\]
given by:
\begin{equation}
\Gamma(s, x) = \int_x^{\infty} t^{s-1} e^{-t} \, dt, \quad s>0
\end{equation}
Thus we obtain the explicit $L^1$ bound:
\[
\| f_\omega - \hat f_\omega^N \|_1^{>r} \le 
 \frac{\Omega_d \, C(T,d,\nu,H)}{T^{d}} \, \Gamma(d, T r)
\]
Thanks to the upper bound of Eq.~(1.5) in \citet{pinelis2020exact}, one has:
\[
\left\| f_\omega - \hat f_\omega^N \right\|_1^{>r}\le 
 \frac{\Omega_d \, C(T,d,\nu,H)}{T^{d}} \, \frac{(Tr + \kappa(d))^d  - Tr^d}{d\kappa(d)}e^{-Tr}
\]
which ends the proof.
\end{proof}

\section{HMC sampling: theoretical background and proofs}
\label{sec:hmc}
This section is devoted to some ergodicity results of the HMC algorithm. Let the state of the HMC chain be denoted by 
\((y,p) \in \mathbb{R}^d \times \mathbb{R}^d\),
where \(y\) represents the position and \(p\) the momentum. 
The Hamiltonian is defined as
\[
H(y,p) = U(y) + K(p),
\quad 
U(y) = -\log \pi(y),
\quad 
K(p) = \frac{1}{2} p^\top M^{-1} p.
\]
Denote by \(\Phi_\epsilon^{\mathrm{LF}}(y,p)\) the discrete Hamiltonian flow obtained after 
\(L\) leapfrog steps of size \(\epsilon\):
\[
(y', p') = \Phi_\epsilon^{\mathrm{LF}}(y,p).
\]
This flow can be decomposed as:
\[
\Phi_\varepsilon^{\mathrm{LF}} =  
\varphi_{\varepsilon/2}^{V}
\circ 
\varphi_{\varepsilon}^{T}
\circ 
\varphi_{\varepsilon/2}^{V},
\]
where \(\varphi_s^T\) denotes the exact flow for the kinetic part (advance \(y\) by \(sM^{-1}p\) holding \(p\) fixed) and \(\varphi_s^V\) denotes the exact flow for the potential part (advance \(p\) by \(-s\nabla U(y)\) holding \(y\) fixed). The Metropolis-Hastings acceptance probability is
\[
\alpha((y,p),(y',p')) 
= \min\bigg\{ 1, \exp\big(-H(y',p') + H(y,p)\big) \bigg\}.
\]
The transition kernel \(P\big((y,p), d(y',p')\big)\) of one HMC iteration is given by
\[
P\big((y,p), d(y',p')\big) =
\alpha((y,p),(y',p'))\, 
\delta_{\Phi_\epsilon^{\mathrm{LF}}(y,p)}\big(d(y',p')\big)
+ 
\big(1 - \alpha((y,p),(y',p'))\big)\,
\delta_{(y,-p)}\big(d(y',p')\big),
\]
where \(\delta_z\) denotes the Dirac measure at point \(z\). This kernel describes the probability of moving from the current state 
\((y,p)\) to a new state \((y',p')\):
with probability \(\alpha((y,p),(y',p'))\), the proposal \(\Phi_\epsilon^{\mathrm{LF}}(y,p)\) is accepted;
and with probability \(1 - \alpha((y,p),(y',p'))\), the proposal is rejected and the momentum is flipped for reversibility. In the continuous-time limit (exact Hamiltonian flow denoted \(\Phi^{\mathrm{LF}}(\cdot,\cdot)\)), the transition kernel is expressed as:
\[
\Prob\big((y,p), A\big) = \mathbf{1}_A\big(\Phi^{\mathrm{LF}}(y,p)\big),
\]
\\
The jump rate is given by:
\[
Q((y,p), d(y',p')) = \alpha(y,p) \, \delta_{\Phi^{\mathrm{LF}}(y,p)}(d(y',p')),
\]
We denote: 
\[
r(y,p) = \int Q((y,p), d(y',p')) = \alpha(y,p),
\]
with:
\[
\alpha(y,p) = \min\big(1, \exp(-H(\Phi^{\mathrm{LF}}(y,p)) + H(y,p))\big).
\]
\begin{prop}
    The infinitesimal generator reads:
\[
\mathcal{L} f(y,p)
= 
\{f,H\} 
+ 
\int_{\mathbb{R}^d} 
\big[f(y,p') - f(y,p)\big]\, Q((y,p), dp'),
\]
where
\(\{f,H\} = \nabla_y f \cdot \nabla_p H - \nabla_p f \cdot \nabla_y H\)
is the Poisson bracket and \(Q((y,p), dp')\) describes the momentum refreshment kernel.
\end{prop}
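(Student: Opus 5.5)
To identify $\mathcal{L}$ we compute the limit
\[
\mathcal{L}f(y,p)=\lim_{h\downarrow 0}\frac{\mathbb{E}_{(y,p)}[f(Y_h,P_h)]-f(y,p)}{h}
\]
for $f\in C^1_c(\mathbb{R}^d\times\mathbb{R}^d)$. We view the continuous-time HMC process as a piecewise-deterministic Markov process. Between jump times, the state follows the exact Hamiltonian flow $\phi_t$ generated by $H$. At the jump times of an inhomogeneous Poisson clock with intensity $r(y,p)=\alpha(y,p)$, the momentum is updated according to the kernel $Q((y,p),dp')$. The first step is to write this construction explicitly: the flow satisfies $\dot y=\nabla_pH=M^{-1}p$ and $\dot p=-\nabla_yH=-\nabla U(y)$. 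The probability of no jump on $[0,h]$ is $\exp(-\int_0^h r(\phi_s(y,p))\,ds)=1-h\,r(y,p)+o(h)$, and the probability of two or more jumps is $o(h)$. This uses boundedness of $r$, which holds since $r\le 1$, and continuity of $r$ along the flow, which follows from continuity of $H$ and of the flow map.

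The second step splits $\mathbb{E}_{(y,p)}[f(Y_h,P_h)]$ according to the number of jumps in $[0,h]$. On the no-jump event the contribution is $(1-hr(y,p))f(\phi_h(y,p))+o(h)$. A first-order Taylor expansion along the flow gives
\[
f(\phi_h(y,p))=f(y,p)+h\big(\nabla_yf\cdot\nabla_pH-\nabla_pf\cdot\nabla_yH\big)(y,p)+o(h)=f(y,p)+h\{f,H\}(y,p)+o(h).
\]
On the single-jump event, occurring at some time $s\in[0,h]$, the contribution is $h\int f(y,p')\,Q((y,p),dp')+o(h)$. Here we use that $f$ is bounded and uniformly continuous, that the post-jump motion over a time $\le h$ moves the state by $O(h)$, and that $Q$ depends weakly continuously on $(y,p)$ along the flow. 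Combining the two events, subtracting $f(y,p)$, dividing by $h$ and using $r(y,p)=\int Q((y,p),dp')$ yields the stated formula:
\[
\mathcal{L}f=\{f,H\}+\int\big[f(y,p')-f(y,p)\big]\,Q((y,p),dp').
\]

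The main obstacle is making the single-jump term rigorous, because the kernel $Q$ as written in the paper is the Dirac mass $\delta_{\Phi^{\mathrm{LF}}(y,p)}$, and it must be read as acting on the momentum coordinate only. I would first fix this interpretation, namely that the jump keeps $y$ and sends $p$ to the momentum component of the proposal, or more generally to a refreshed momentum. I would then require continuity of $(y,p)\mapsto Q((y,p),\cdot)$ in the weak topology. Under this interpretation the jump contribution reduces to a dominated-convergence argument with the bound $|f|\le\|f\|_\infty$. Finally, I would check that $C^1_c$ functions form a core. This suffices to identify the generator on that class, which is all that the subsequent Lyapunov drift argument with $V=e^{\beta H}$ needs.
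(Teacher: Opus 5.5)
Your proposal is correct and follows the same route as the paper's proof: you treat the dynamics as a Hamiltonian flow interrupted by Poisson jumps, split $\mathbb{E}_{(y,p)}[f(Y_h,P_h)]$ according to $N_h=0$, $N_h=1$, $N_h\ge 2$ with probabilities $1-hr+o(h)$, $hr+o(h)$, $o(h)$, Taylor-expand along the flow to produce $\{f,H\}$, and use $r(y,p)=\int Q((y,p),dp')$ to combine the terms. Your version is more careful than the paper's, which conflates the time-$h$ flow $\phi_h$ with the leapfrog map $\Phi^{\mathrm{LF}}$, glosses over the post-jump motion, and leaves implicit that $Q$ (defined as a Dirac mass at $\Phi^{\mathrm{LF}}(y,p)$) must be read as a kernel on the momentum coordinate only for the stated formula to make sense.
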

\begin{proof}
Let \(f\in C^1(\mathbb{R}^{2d})\). Assume the Hamiltonian flow \((y,p)\mapsto\Phi^{\mathrm{LF}}(y,p)\) is \(C^1\).
Assume the jump kernel \(Q((y,p),\cdot)\) has finite total rate
\[
r(y,p) := \int_{\mathbb{R}^d} Q((y,p),dp') < \infty.
\]
The semigroup is defined by
\[
P_t f(y,p) := \mathbb{E}[f(Y_t,P_t) \mid (Y_0,P_0)=(y,p)].
\]
Let \(N_t\) be the number of jumps in \([0,t]\). Then, one has by conditioning:
\[
P_t f(y,p)
= \mathbb{E}[f(Y_t,P_t)\mathbf{1}_{\{N_t=0\}}]
+ \mathbb{E}[f(Y_t,P_t)\mathbf{1}_{\{N_t=1\}}]
+ \mathbb{E}[f(Y_t,P_t)\mathbf{1}_{\{N_t\ge2\}}].
\]
As $N$ is an inhomogeneous Poisson process with rate:
\[
r_s := r(\phi_s(y,p)) = \int_{\mathbb{R}^d} Q(\phi_s(y,p), dp').
\]
One has for small \(t>0\):
\begin{eqnarray*}
\begin{cases}
  \Prob(N_t=0) &= 1 - r(y,p)t + o(t),\\
\Prob(N_t=1) &= r(y,p)t + o(t),\\
\Prob(N_t\ge2) &= o(t).  
\end{cases}
\end{eqnarray*}
Thus:
\begin{itemize}
\item If \(N_t=0\), the process follows the Hamiltonian flow:
\[
\mathbb{E}[f(Y_t,P_t)\mid N_t=0] = f(\Phi^{\mathrm{LF}}(y,p)).
\]
\item If \(N_t=1\), the momentum jumps once to \(p'\) distributed according to \(Q((y,p),dp')/r(y,p)\) and the position \(y\) remains essentially unchanged:
\[
\mathbb{E}[f(Y_t,P_t)\mid N_t=1] = \int_{\mathbb{R}^d} f(y,p')\,\frac{Q((y,p),dp')}{r(y,p)}.
\]
\end{itemize}
Consequently,
\[
\begin{aligned}
P_t f(y,p) 
&= \Prob(N_t=0)\, f(\Phi^{\mathrm{LF}}(y,p)) + \Prob(N_t=1)\, \mathbb{E}[f(Y_t,P_t)\mid N_t=1] + o(t)\\
&= (1-r(y,p)t + o(t))\, f(\Phi^{\mathrm{LF}}(y,p)) + t \int_{\mathbb{R}^d} f(y,p')\, Q((y,p),dp') + o(t).
\end{aligned}
\]
A Taylor expansion along the Hamiltonian flow yields:
\[
f(\Phi^{\mathrm{LF}}(y,p)) = f(y,p) + t \left. \frac{d}{dt} f(\Phi^{\mathrm{LF}}(y,p)) \right|_{t=0} + o(t)
= f(y,p) + t \{f,H\}(y,p) + o(t),
\]
where \(\{f,H\} = \nabla_y f \cdot \nabla_p H - \nabla_p f \cdot \nabla_y H\) is the Poisson bracket.
As a result,
\[
\begin{aligned}
P_t f(y,p) - f(y,p) 
&= t \{f,H\}(y,p) - t r(y,p) f(y,p) + t \int_{\mathbb{R}^d} f(y,p')\, Q((y,p),dp') + o(t) \\
&= t \{f,H\}(y,p) + t \int_{\mathbb{R}^d} [f(y,p') - f(y,p)]\, Q((y,p),dp') + o(t).
\end{aligned}
\]
This leads to:
\[
\mathcal{L} f(y,p) = \{f,H\}(y,p) + \int_{\mathbb{R}^d} [f(y,p') - f(y,p)]\, Q((y,p),dp').
\]
\end{proof}

\begin{prop}
    There exists a Lyapunov function for $\mathcal{L}$, i.e.:
  \begin{eqnarray}
      \exists( \alpha,\beta)\in [0,1] \times \R,\tab 
      \mathcal{L} V(y,p) \le -\alpha V(y,p) + \beta
  \end{eqnarray}  
\end{prop}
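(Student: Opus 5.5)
Since the generator $\mathcal{L}$ established in the previous proposition splits into the Poisson bracket $\{f,H\}$ and a jump part, the plan is to take a Lyapunov function built from the Hamiltonian itself. The natural candidate is $V(y,p)=e^{\theta H(y,p)}$ for a small $\theta>0$, or, as a simpler fallback, $V(y,p)=1+H(y,p)$ after shifting $U$ so that $V\ge 1$. The first observation is that the deterministic part of the generator annihilates any function of $H$. Indeed, $\{g(H),H\}=g'(H)\{H,H\}=0$ by antisymmetry of the Poisson bracket, so the exact Hamiltonian flow conserves energy. Hence
\[
\mathcal{L}V(y,p)=\int_{\mathbb{R}^d}\big[V(y,p')-V(y,p)\big]\,Q((y,p),dp'),
\]
and the whole drift must come from the momentum refreshment.

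Next, I will interpret the refreshment as in Algorithm~\ref{alg:hmc}: the momentum is resampled from $\mathcal{N}(0,M)$, and the rate $r(y,p)=\alpha(y,p)\in(0,1]$. This gives
\[
\mathcal{L}V(y,p)=r(y,p)\Big(\int V(y,p')\,\mathcal{N}(0,M)(dp')-V(y,p)\Big).
\]
For $V=e^{\theta H}$ with $\theta<1$, the Gaussian integral factorises as $e^{\theta U(y)}(1-\theta)^{-d/2}$. Writing $c_\theta=(1-\theta)^{-d/2}$, we obtain
\[
\mathcal{L}V\le r\big(c_\theta e^{\theta U(y)}-e^{\theta U(y)}e^{\theta K(p)}\big).
\]
The refreshment thus contracts the kinetic part, since large $|p|$ is replaced by typical $|p|$. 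It does not contract the potential part, however, because $y$ is unchanged. To handle $U$, I will modify the function to $V=e^{\theta H}+\kappa\,\langle y,p\rangle$-type corrections, or more simply add a term exploiting $\{\cdot,H\}$ acting on $\langle y,M^{-1}p\rangle$. This gives $\{\langle y,p\rangle,H\}=p^\top M^{-1}p-\langle y,\nabla U(y)\rangle$, which is negative for large $|y|$ under a tail condition of the type $\langle y,\nabla U(y)\rangle\ge c|y|^2-C$. For the truncated S-fBM spectral density, $U=-\log\hat\pi$, and I would need to verify this growth condition from the decay of $\hat f^N_\omega$. If it is too delicate, I will impose it as an assumption on the target.

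The final step combines the two pieces. I take $V=1+H+\epsilon\langle y,p\rangle$ with $\epsilon$ small, so that $V\asymp 1+H$. I then bound $\mathcal{L}V\le -aK(p)-\epsilon c|y|^2+\text{const}$ using $r\ge r_0>0$ on compact sets and the conservation of energy. Finally, I absorb the result into $-\alpha V+\beta$, shrinking $\alpha$ into $[0,1]$ as the statement requires.

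The main obstacle is the $y$-direction: the rate $r$ may degenerate, and $U$ is not coercive a priori. If the cross term fails, I would settle for the trivial case allowed by the statement ($\alpha=0$, $\beta$ a bound on the positive part of $\mathcal{L}V$ over $(y,p)$), although this is unsatisfying for the subsequent geometric ergodicity.
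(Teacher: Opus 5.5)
Your primary route does not go through, and the route that does prove the statement is left unverified.

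The cross-term construction $V=1+H+\epsilon\langle y,p\rangle$ needs three things: $\langle y,\nabla U(y)\rangle\ge c|y|^2-C$; a bound $r\ge r_0>0$ outside compact sets (a lower bound on compacts buys nothing, since bounded regions are absorbed into $\beta$ anyway); and $U(y)\gtrsim|y|^2$, so that $V\asymp 1+H$. None of this is available here. The function $\hat f^N_\omega$ is a truncated series, i.e.\ a polynomial in $\|\mathbf{x}\|^2$, so there is no decay from which to verify anything. The exact $f_\omega$ decays only polynomially: the cusp $\|\mathbf{u}\|^{2H}$ of the kernel at the origin produces a tail of order $\|\mathbf{x}\|^{-(d+2H)}$. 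Hence $U$ grows logarithmically and $\langle y,\nabla U(y)\rangle$ stays bounded. Imposing the tail condition as an assumption would therefore prove a statement about a different target. Note also that in the paper the jump part of $\mathcal{L}$ is $Q((y,p),d(y',p'))=\alpha(y,p)\,\delta_{\Phi^{\mathrm{LF}}(y,p)}(d(y',p'))$. This is a jump to the leapfrog endpoint at rate equal to the acceptance probability, not a Gaussian momentum refresh, so your Gaussian integrals compute a different generator.

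What actually proves the proposition is your fallback, and it is exactly the paper's proof. With $V=1+H$ the bracket vanishes, so $\mathcal{L}V=\alpha(y,p)\,\Delta H$ with $\Delta H=H(\Phi^{\mathrm{LF}}(y,p))-H(y,p)$. Since $\alpha(y,p)=1$ whenever $\Delta H<0$, the negative part gives $-(\Delta H)_-\le-\alpha V$ with the constant $\alpha:=\inf(\Delta H)_-/V$. The positive part is bounded by $\beta:=\sup\alpha(y,p)(\Delta H)_+$. This constant $\alpha$ vanishes as soon as $\Delta H\ge0$ at a single point, so the real content is $\sup\mathcal{L}V<\infty$. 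You are right that this is too weak for geometric ergodicity, but it is all that is claimed.

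The step you skip is the finiteness of $\beta$, and it depends on the choice of $V$. For $V=1+H$ it is one line: if $\Delta H>0$ then $\alpha(y,p)=e^{-\Delta H}$, so $\alpha(y,p)(\Delta H)_+\le\sup_{s>0}se^{-s}=e^{-1}$. In your Gaussian-refresh reading one gets instead $\mathcal{L}V=r\,(d/2-K(p))\le d/2$. For your preferred $V=e^{\theta H}$ the fallback fails: at $p=0$ your own formula gives $\mathcal{L}V=r\,(c_\theta-1)\,e^{\theta U(y)}$, which is not bounded above in general. So the proof should be written with $V=1+H$ and should include this bound.
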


\begin{proof}
Consider the Markov process with generator
\[
\mathcal{L} f(y,p) = \{f,H\}(y,p) + \int_{\mathbb{R}^d} [f(y',p') - f(y,p)] \, Q((y,p), d(y',p')),
\]
where
\[
Q((y,p), d(y',p')) = \alpha(y,p) \, \delta_{\Phi^{\mathrm{LF}}(y,p)}(d(y',p')),
\quad
\alpha(y,p) = \min\big(1, \exp(-H(\Phi^{\mathrm{LF}}(y,p)) + H(y,p))\big).
\]
Let the Lyapunov function be
\[
V(y,p) = 1 + H(y,p).
\]
Since $V$ depends only on $H$, the Poisson bracket term vanishes:
\[
\{V,H\} = 0.
\]
Thus, the generator applied to $V$ reduces to:
\[
\begin{aligned}
\mathcal{L} V(y,p) 
&= \int_{\mathbb{R}^d} [V(y',p') - V(y,p)] \, Q((y,p), d(y',p')) \\
&= \alpha(y,p) \left[ V(\Phi^{\mathrm{LF}}(y,p)) - V(y,p) \right] \\
&= \alpha(y,p) \left[ H(\Phi^{\mathrm{LF}}(y,p)) - H(y,p) \right].
\end{aligned}
\]
Define the energy difference
\[
\Delta H(y,p) = H(\Phi^{\mathrm{LF}}(y,p)) - H(y,p).
\]
Then
\[
\mathcal{L} V(y,p) = \alpha(y,p) \, \Delta H(y,p), \quad \alpha(y,p) = \min(1, e^{-\Delta H(y,p)}).
\]
Then
\[
\mathcal{L} V(y,p) = \alpha(y,p) \big[ (\Delta H(y,p))_+ - (\Delta H(y,p))_- \big] = \alpha(y,p) \Delta H(y,p)
\]
Observing that $-(\Delta H)_- \le 0$ and $\alpha(y,p) (\Delta H)_+ \le \sup \alpha(y,p) (\Delta H)_+ =: \beta$, we obtain the drift inequality
\[
\mathcal{L} V(y,p) \le -\alpha V(y,p) + \beta,
\]
with
\[
\alpha = \inf_{(y,p)} \frac{(\Delta H)_-}{V(y,p)} 
= \inf_{(y,p)} \frac{\max(0, H(y,p) - H(\Phi^{\mathrm{LF}}(y,p)))}{1 + H(y,p)},
\]
\[
\beta = \sup_{(y,p)} \alpha(y,p) (\Delta H)_+ 
= \sup_{(y,p)} \alpha(y,p) \max(0, H(\Phi^{\mathrm{LF}}(y,p)) - H(y,p)).
\]
One can easily notice that $\alpha \in [0,1]$.\\
Hence, $V(y,p) = 1 + H(y,p)$ is a Lyapunov function for the generator $\mathcal{L}$, with drift constants $\alpha$ and $\beta$ as above.
\end{proof}
\begin{lem}
\label{lem:covstatiocovHMC}
For any function $h:\mathcal{X}\to[a,b]$ where $\mathcal{X}\subset\R^d$ and reals $a\leq b$, there exist a positive constant $C>0$ and $\rho\in[0,1]$ such that:
\begin{eqnarray}
\forall k \geq 0,\tab     \big|\Cov(h(Y_0),h(Y_k))\big|
\le C\,(b-a)^2\,\rho^k
\end{eqnarray}
\end{lem}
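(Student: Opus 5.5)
The plan is to deduce the covariance decay from geometric ergodicity of the HMC chain $(Y_k)_{k\ge0}$, which I take to be started from its invariant law (the $\Cov(h(Y_0),h(Y_k))$ notation suggests the stationary regime). Geometric ergodicity should follow from the drift inequality $\mathcal{L}V\le-\alpha V+\beta$ of the preceding proposition together with a minorization condition. Write $P$ for the one-step kernel and $\pi$ for the invariant law. The goal is a bound of the form
\[
\sup_{|g|\le 1}\bigl|P^k g(y)-\pi(g)\bigr|\le C_0\,V(y)\,\rho^k ,\qquad \rho\in[0,1),
\]
with $V=1+H$ as in the previous proposition. Projecting onto positions gives the corresponding statement for $Y$ itself.

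\textbf{Step 1: drift and minorization.} First I would pass from the generator drift to a discrete drift $PV\le\lambda V+b$ with $\lambda<1$. I would do this either by discretizing the generator in time or by arguing directly on one HMC step: momentum refreshment from $\mathcal{N}(0,M)$ averages out the kinetic energy, and the leapfrog map together with the Metropolis correction controls the potential energy.

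Next I would establish that sublevel sets $\{V\le R\}$ are small sets. For $(y,p)$ in such a compact set, the Gaussian momentum refreshment gives the proposal position $\Phi^{\mathrm{LF}}(y,p^\ast)$ a density bounded below on some ball. This holds because the leapfrog map is a $C^1$ diffeomorphism in $p^\ast$ for small $\varepsilon$. The acceptance probability is bounded below on compacts. Together these give $P(x,\cdot)\ge \epsilon_0\,\nu(\cdot)$.

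With drift and minorization in hand, the Meyn--Tweedie geometric ergodic theorem (Harris' theorem) yields the displayed bound.

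\textbf{Step 2: covariance bound.} Since $\Cov$ is invariant under adding constants to $h$, replace $h$ by $g=h-\frac{a+b}{2}$, so that $|g|\le\frac{b-a}{2}$. By stationarity and the Markov property,
\[
\Cov(h(Y_0),h(Y_k))=\E\bigl[g(Y_0)\bigl(P^kg(Y_0)-\pi(g)\bigr)\bigr].
\]
Its absolute value is at most
\[
\frac{b-a}{2}\cdot\frac{b-a}{2}\,C_0\,\rho^k\,\pi(V).
\]
Setting $C=C_0\pi(V)/4$ gives the claim, and the case $k=0$ is trivial. One needs $\pi(V)<\infty$, i.e.\ $\int U\,d\pi<\infty$, which holds under the drift condition. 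Alternatively, I would use the reversibility of HMC and the spectral gap in $L^2(\pi)$ (Roberts--Rosenthal), which gives $|\Cov|\le\rho^k\Var_\pi(h)\le\rho^k(b-a)^2/4$ directly.

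\textbf{Main obstacle.} The hardest step is Step 1 for the specific target $\hat f^N_\omega$ or $f_\omega$. That density is oscillatory (a $_1F_2$ with negative argument), so $U=-\log\pi$ may fail to be smooth or coercive where the density approaches zero. For the truncated polynomial version, positivity itself may fail.

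I would first try to impose that the target is positive with $\nabla U$ Lipschitz and $\langle\nabla U(y),y\rangle\ge c|y|^2-c'$ outside a ball. Under these conditions, known HMC geometric ergodicity results (e.g.\ Durmus--Moulines--Saksman) apply. If that fails, I would fall back on the constants $\alpha$ and $\beta$ of the preceding proposition, accepting that $\rho$ may equal $1$. The lemma permits $\rho\in[0,1]$, in which case the bound degenerates to the trivial $|\Cov|\le(b-a)^2/4$.
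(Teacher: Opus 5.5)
Your proof has the same skeleton as the paper's: geometric ergodicity via Meyn--Tweedie, then stationarity to write $\Cov(h(Y_0),h(Y_k))=\E[g(Y_0)(P^kg(Y_0)-\pi(g))]$, then a total-variation estimate. It is, however, sound at the two places where the paper's argument is not.

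First, the paper derives geometric ergodicity from the generator drift inequality alone, with no small-set or minorization step. You correctly identify that such a step is needed.

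Second, the paper bounds $\E|P^kh(Y_0)-\pi(h)|$ by $(b-a)\sup_y\|P^k(y,\cdot)-\pi\|_{\mathrm{TV}}$ and then replaces this by $C\rho^k$. Meyn--Tweedie only gives $R(y)\rho^k$, with $R$ of the order of the unbounded $V$. The supremum step therefore tacitly assumes uniform ergodicity, which a drift condition with unbounded $V$ does not provide. Your choice to keep the weight $C_0V(y)\rho^k$ and integrate it against $\pi$, using $\pi(V)<\infty$ from the drift condition, is the correct way to close the argument. Your alternative is cleaner still: reversibility of the HMC position chain plus the $L^2(\pi)$ spectral gap (Roberts--Rosenthal) gives $|\Cov|\le\rho^k\Var_\pi(h)$ with no Lyapunov weight at all.

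As you observe, the statement allows $\rho=1$, and then it follows at once from Cauchy--Schwarz and Popoviciu with $C=\tfrac14$. So your proposal does prove the lemma as written. That case is useless where the lemma is applied, though: $v$ in Theorem~\ref{thm:convergenceHMCspectralS-fBM} contains $\rho/(1-\rho)$. Moreover, neither of your routes to $\rho<1$ can succeed for this target.

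\textbf{The fallback on the preceding proposition yields nothing.} Its drift constant is always $0$. For the exact flow, $\Delta H\equiv0$. For leapfrog, $\Phi^{-1}=F\circ\Phi\circ F$ with $F(y,p)=(y,-p)$ and $H\circ F=H$. This gives $\Delta H(F\Phi z)=-\Delta H(z)$, so $\Delta H$ cannot be negative everywhere.

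\textbf{The coercivity you would impose is violated.} For $d=1$, an integration by parts gives $f_\omega(x)\propto|x|^{-1-2H}\int_0^{T|x|}v^{2H-1}\sin v\,dv$. This is strictly positive, so positivity is not the issue, but it is $\asymp|x|^{-1-2H}$. More generally, the large-argument asymptotics of the ${}_1F_2$ give $f_\omega\asymp\|\mathbf{x}\|^{-(d+2H)}$ whenever $H<\frac{3-d}{4}$. Hence $U\sim(d+2H)\log\|\mathbf{x}\|$ and $\nabla U\to0$ at infinity.

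Far out, an HMC step with fixed $(\varepsilon,L)$ is then essentially a Gaussian random-walk move. By the Jarner--Tweedie necessary condition, such a chain cannot be geometrically ergodic for a target without exponential moments; this is the heavy-tail negative result of Livingstone, Betancourt, Byrne and Girolami for HMC. The truncated $\hat f^N_\omega$ is worse: it is a polynomial in $\|\mathbf{x}\|^2$ and not even integrable. A genuine $\rho<1$ would require changing the sampler, for example a heavier-tailed kinetic energy or a reparametrisation of the spectral variable, rather than more work on your Step 1.
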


\begin{proof}
   According to the previous results, $Y$ is geometrically ergodic. The Meyn–Tweedie inequality leads to the existence for any $y\in \R^n$ of 
$R(y)<\infty$ and $\rho\in[0,1]$ such that
\begin{eqnarray}
    \forall t\geq 0,\tab \| P_t(y,\cdot) - \pi \|_{TV} \le R(y)\, \rho^{ t}.\nonumber
\end{eqnarray}
Besides,
as \((Y_t)_{t\ge0}\) is a stationary Markov process and homogeneous with transition kernel \(P\) and invariant distribution \(\pi\).\\
Fix \(k\ge0\). By stationarity
\begin{eqnarray}
    \E[h(Y_k)]=\E[h(Y_0)]=\pi(h):=\int_{\R^d}h(y)\pi(dy) \nonumber
\end{eqnarray}
 so
\begin{eqnarray}
    \Cov\big(h(Y_0),h(Y_k)\big)
=\E\big[h(Y_0)h(Y_k)\big]-\pi(h)^2
=\E\big[ h(Y_0)\,P^k h(Y_0)\big]-\pi(h)^2,\nonumber
\end{eqnarray}
where \(P^k h(y)=\int h(y')\,P^k(y,dy')=\E[h(Y_k)\mid Y_0=y]\).
Introduce \(\pi(h)\) inside the expectation to obtain
\[
\Cov\big(h(Y_0),h(Y_k)\big)
=\E\big[ h(Y_0)\,(P^k h(Y_0)-\pi(h))\big].
\]
Taking absolute values and using \(|h|\le b-a\) (after centering if desired) gives
\[
\big|\Cov(h(Y_0),h(Y_k))\big|
\le (b-a)\,\E\big[\,|P^k h(Y_0)-\pi(h)|\,\big].
\]
For any $y\in\mathcal X$,
\[
|P^k h(y)-\pi(h)|
=\Big|\int h(y')\,(P^k(y,dy')-\pi(dy'))\Big|
\le (b-a)\,\|P^k(y,\cdot)-\pi\|_{\mathrm{TV}}.
\]
Taking expectation over $Y_0$ and the supremum over $y$ yields
\[
\E\big[\,|P^k h(Y_0)-\pi(h)|\,\big]\le (b-a)\,\sup_y\|P^k(y,\cdot)-\pi\|_{\mathrm{TV}}.
\]
Combining the last two displays we obtain the first bound
\[
\big|\Cov(h(Y_0),h(Y_k))\big|\le (b-a)^2\sup_y\|P^k(y,\cdot)-\pi\|_{\mathrm{TV}}.
\]
If, in addition, the chain is geometrically ergodic in total variation, then the previous inequality implies
\[
\big|\Cov(h(Y_0),h(Y_k))\big|
\le (b-a)^2\,C\,\rho^k,
\]
as claimed.
\end{proof}

\section{Error of the RFF scheme}

\subsection{Supporting results}

\begin{lem}[BDG-type inequality, \citealt{pannier2019uniqueness}]\label{lem:bdg_volterra}
Let $p>2$ and $\mathbb{T}\subset \R_+$. Let $K:\mathbb{T}^2\to\mathbb{R}$ be a kernel and let $u$ be progressively measurable such that the process $M^K(u)$ defined as:
\[
M_t^K(u) := \int_0^t K(t,s)u(s)\,dW_s.
\]
is in $L^1$. Then,
\begin{equation}\label{eq:bdg_volterra_1}
\mathbb{E}\!\left[\sup_{t\in\mathbb{T}} \left| M_t^K(u) \right|^p \right]
\;\lesssim_p\;
\mathbb{E}\!\left[\left( \int_0^t |K(t,s)u(s)|^2\,ds \right)^{p/2}\right],
\end{equation}
If moreover $K(t,\cdot)\in L^{p-2}([0,t])$ for all $t\in\mathbb{T}$, then
\begin{equation}\label{eq:bdg_volterra_2}
\mathbb{E}\!\left[\left( \int_0^t |K(t,s)u(s)|^2\,ds \right)^{p/2}\right]
\;\le\;
C_{T,p}\,
\mathbb{E}\!\left[\int_0^t |u(s)|^p\,ds\right],
\end{equation}
where
\[
C_{T,p}
:=
\sup_{t\in\mathbb{T}}
\|K(t,\cdot)\|_{L^{p-2}([0,t])}^{p/2}
<\infty.
\]
\end{lem}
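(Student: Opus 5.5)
The first bound \eqref{eq:bdg_volterra_1} will come from freezing the outer time variable. For fixed $t$, the process $r\mapsto N^t_r:=\int_0^r K(t,s)u(s)\,dW_s$, $r\in[0,t]$, is a genuine continuous local martingale. The classical Burkholder--Davis--Gundy inequality then gives $\mathbb{E}[\sup_{r\le t}|N^t_r|^p]\lesssim_p \mathbb{E}[(\int_0^t|K(t,s)u(s)|^2ds)^{p/2}]$. Since $M^K_t(u)=N^t_t$, this already controls $\mathbb{E}|M^K_t(u)|^p$ for each fixed $t$.

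The supremum over $t\in\mathbb{T}$ is the main obstacle, because $t\mapsto M^K_t(u)$ is not a martingale. I would follow \citet{pannier2019uniqueness} and read the right-hand side as evaluated at the relevant endpoint, i.e. with $t$ replaced by $\sup\mathbb{T}$ (or with a supremum over $t$ inside). Two routes are available.
\begin{itemize}
\item \emph{Compact or finite $\mathbb{T}$.} Use a Garsia--Rodemich--Rumsey / Kolmogorov-type argument, combining the fixed-$t$ BDG bound applied to the increments $M^K_t-M^K_{t'}$.
\item \emph{Factorization.} Use the Da Prato--Kwapień--Zabczyk factorization method when $K$ has convolution structure. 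This is the natural case here, since the paper's kernels are $K(t-\zeta_N(s))$ and $\hat K_M$.
\end{itemize}
In both routes, the $L^1$ assumption on $M^K(u)$ only serves to make the stochastic integral well defined, which justifies the localisation step before letting the stopping times go to infinity.

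The second bound \eqref{eq:bdg_volterra_2} is deterministic Hölder on $\int_0^t|K(t,s)|^2|u(s)|^2ds$. I would write $|K|^2|u|^2=|K|^{2}\cdot|u|^2$ and apply Hölder with exponents $q=p/2$ on the $|u|^2$ factor and $q'=p/(p-2)$ on the $|K|^2$ factor. This yields
\[
\int_0^t|K|^2|u|^2ds\le\Big(\int_0^t|K(t,s)|^{\frac{2p}{p-2}}ds\Big)^{\frac{p-2}{p}}\Big(\int_0^t|u|^pds\Big)^{2/p}.
\]
Raising to the power $p/2$ and taking expectations gives the claimed form with constant $\sup_t\|K(t,\cdot)\|_{L^{2p/(p-2)}}^{p}$.

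This constant does not match the stated $\|K(t,\cdot)\|_{L^{p-2}}^{p/2}$ in general. I would therefore check the exponent bookkeeping and, if needed, use the stated integrability as $K(t,\cdot)\in L^{2p/(p-2)}$, the exponent Hölder forces. For bounded kernels, such as the S-fBM kernel and its RFF approximation $|\hat K_M|\le K(0)$, any of these norms is finite on a compact $\mathbb{T}$, so $C_{T,p}<\infty$ holds trivially. That is all the subsequent error analysis needs.
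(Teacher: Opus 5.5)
The paper gives no proof of this lemma. It is quoted from the cited reference, and Appendix~C only ever invokes it at a single fixed time $t$, to bound $\mathbb{E}|X_n(t)|^p$ and $\mathbb{E}|X_N(t)-X_N^M(t)|^p$.

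\textbf{What is correct.} Your fixed-$t$ argument is sound: classical BDG applied to the local martingale $r\mapsto\int_0^r K(t,s)u(s)\,dW_s$ on $[0,t]$, evaluated at $r=t$, gives the first bound pointwise in $t$. Your Hölder computation for the second bound is also correct. The constant must be $\sup_t\|K(t,\cdot)\|_{L^{2p/(p-2)}}^{p}$. The stated $\|K(t,\cdot)\|_{L^{p-2}}^{p/2}$ already fails by scaling, since $K\mapsto\lambda K$ multiplies the left side by $\lambda^p$ and the right side only by $\lambda^{p/2}$.

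\textbf{The gap: the supremum over $t\in\mathbb{T}$.} You read the right-hand side at $t=\sup\mathbb{T}$ (or with the supremum inside) and assert that a Garsia--Rodemich--Rumsey or factorization argument finishes the proof. No argument can, because under either reading the inequality is false with a constant depending only on $p$.

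Take $\mathbb{T}=[0,1]$, $u\equiv1$ and the bounded kernel $K_\varepsilon(t,s)=(t\vee\varepsilon)^{-1/2}$. Then $M_t=W_t/\sqrt{t\vee\varepsilon}$ satisfies $\mathbb{E}\sup_t|M_t|\le\varepsilon^{-1/2}\,\mathbb{E}\sup_{t\le1}|W_t|<\infty$. Also $\int_0^t|K_\varepsilon(t,s)|^2ds=t/(t\vee\varepsilon)\le1$, so the right-hand side is at most $1$. Yet $\mathbb{E}\sup_{t\in[0,1]}|M_t|^p\ge\mathbb{E}\sup_{t\in[\varepsilon,1]}|W_t|^p\,t^{-p/2}\to\infty$ as $\varepsilon\downarrow0$. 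This follows from the law of the iterated logarithm at $0$ and monotone convergence.

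\textbf{Why neither route closes the gap.}
\begin{itemize}
\item GRR/Kolmogorov needs increment bounds $\mathbb{E}|M_t-M_{t'}|^p\lesssim|t-t'|^{1+\delta}$. These require a time-regularity hypothesis on the kernel, such as $\int_0^{t'}|K(t,s)-K(t',s)|^2ds+\int_{t'}^{t}|K(t,s)|^2ds\le C|t-t'|^{2\gamma}$ with $p\gamma>1$. They also require moment control like $\sup_s\mathbb{E}|u(s)|^p<\infty$.
\item Factorization needs the kernel to split as $k_1*k_2$, as for semigroup or power kernels; being of convolution type is not enough.
\end{itemize}
Both routes produce a kernel-dependent constant and a right-hand side in moments of $u$, not the endpoint quadratic variation.

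\textbf{Repair.} State the first bound pointwise in $t$, without the supremum. That is exactly what you proved and all the paper uses.

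One smaller point: your remark on the $L^1$ hypothesis is off. Fixed-$t$ BDG needs only $\int_0^t|K(t,s)u(s)|^2ds<\infty$ a.s., and it holds with both sides possibly infinite.
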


\begin{prop}[\citet{vonbahr1965inequalities} inequality]\label{prop:vBE}
Let $1 \le r \le 2$ and let $(X_i)_{i\ge1}$ be independent random variables with
$\mathbb{E}[X_i]=0$ and $\mathbb{E}|X_i|^r<\infty$. Then
\begin{eqnarray}\label{eq:vBE}
\mathbb{E}\left|\sum_{i=1}^n X_i\right|^r
\;\le\;
C_r \sum_{i=1}^n \mathbb{E}|X_i|^r,
\end{eqnarray}
where the constant $C_r>0$ depends only on $r$.
\end{prop}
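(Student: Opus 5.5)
The plan is the classical two-stage argument: first prove the inequality with constant $1$ for \emph{symmetric} summands by induction on $n$, then remove the symmetry assumption by symmetrization. This yields the statement with the explicit constant $C_r=2^r$. We may assume $\sum_i\mathbb{E}|X_i|^r<\infty$, otherwise there is nothing to prove.

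\textbf{Step 1: an elementary inequality.} We first show that for $1\le r\le 2$ and all reals $a,x$,
\begin{equation*}
|a+x|^r+|a-x|^r\le 2\big(|a|^r+|x|^r\big).
\end{equation*}
Both sides are invariant under $x\mapsto -x$ and $a\mapsto -a$, and under swapping $a$ and $x$. They are also homogeneous of degree $r$. It therefore suffices to treat $a=1$ and $x=t\in[0,1]$. Set $A=1+t$ and $B=1-t\ge 0$. The power-mean inequality, valid since $r\le 2$, gives
\begin{equation*}
\Big(\tfrac{A^r+B^r}{2}\Big)^{1/r}\le\Big(\tfrac{A^2+B^2}{2}\Big)^{1/2}=(1+t^2)^{1/2}.
\end{equation*}
Hence $A^r+B^r\le 2(1+t^2)^{r/2}$. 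Subadditivity of $u\mapsto u^{r/2}$ on $\mathbb{R}_+$, which holds because $r/2\le 1$, then yields $(1+t^2)^{r/2}\le 1+t^r$. I expect this to be the only genuinely delicate point; everything else is conditioning.

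\textbf{Step 2: the symmetric case.} Suppose the $X_i$ are independent and symmetric, and let $S_k=\sum_{i\le k}X_i$. Since $X_n$ is symmetric and independent of $S_{n-1}$, conditioning on $S_{n-1}$ gives
\begin{equation*}
\mathbb{E}|S_{n-1}+X_n|^r=\tfrac12\,\mathbb{E}\big[|S_{n-1}+X_n|^r+|S_{n-1}-X_n|^r\big].
\end{equation*}
Applying Step 1 pointwise, this is at most $\mathbb{E}|S_{n-1}|^r+\mathbb{E}|X_n|^r$. Induction on $n$ then gives $\mathbb{E}|S_n|^r\le\sum_{i=1}^n\mathbb{E}|X_i|^r$.

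\textbf{Step 3: symmetrization.} For general centred $X_i$, let $(X_i')$ be an independent copy of $(X_i)$ and set $S_n'=\sum_i X_i'$. Since $\mathbb{E}[S_n'\mid S_n]=0$, conditional Jensen's inequality (convexity of $|\cdot|^r$) gives
\begin{equation*}
\mathbb{E}|S_n|^r=\mathbb{E}\big|\mathbb{E}[S_n-S_n'\mid S_n]\big|^r\le\mathbb{E}|S_n-S_n'|^r.
\end{equation*}
The variables $X_i-X_i'$ are independent and symmetric, so Step 2 applies and yields
\begin{equation*}
\mathbb{E}|S_n|^r\le\sum_{i=1}^n\mathbb{E}|X_i-X_i'|^r.
\end{equation*}
Finally, the convexity bound $|u-v|^r\le 2^{r-1}(|u|^r+|v|^r)$ gives $\mathbb{E}|X_i-X_i'|^r\le 2^r\,\mathbb{E}|X_i|^r$. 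This proves \eqref{eq:vBE} with $C_r=2^r$.
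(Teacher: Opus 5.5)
Your proof is correct and complete. The paper itself does not prove this proposition: it imports it from \citet{vonbahr1965inequalities} and uses it as a black box in the case $1\le p<2$ of the proof of Theorem~\ref{thm:eulerXNXNMRFFerrorbound}. Your argument is therefore an independent, self-contained route. It has two steps. First, the two-point inequality $|a+x|^r+|a-x|^r\le 2(|a|^r+|x|^r)$ gives constant $1$ for independent symmetric summands by induction; your power-mean plus subadditivity derivation of it is right, including the endpoints $r=1,2$. Second, you symmetrize with an independent copy and apply conditional Jensen, which is legitimate because $r\ge 1$ makes $S_n-S_n'$ integrable. The whole argument is elementary and yields $C_r=2^r\le 4$, whereas the cited reference has the sharper constant $C_r=2$. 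You can reach that constant inside your own framework. For $0<r<2$ one has $\mathbb{E}|X|^r=K(r)\int_{\mathbb{R}}(1-\mathrm{Re}\,\varphi_X(t))\,|t|^{-r-1}\,dt$ with $K(r)>0$. The variable $X_i-X_i'$ has characteristic function $|\varphi_{X_i}|^2$, and $1-|\varphi|^2\le 2(1-\mathrm{Re}\,\varphi)$ is just $|1-\varphi|^2\ge 0$. Hence $\mathbb{E}|X_i-X_i'|^r\le 2\,\mathbb{E}|X_i|^r$, which also holds trivially for $r=2$. This replaces your final convexity step and brings the constant down to $2$. For the paper's purposes only the existence of some constant depending on $r$ matters, so your version already suffices as is.
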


\begin{prop}[\citet{rosenthal1970subspaces} inequality]\label{prop:rosenthal}
Let $p \ge 2$ and let $(X_i)_{i=1}^n$ be independent, mean-zero random variables with
$\mathbb{E}|X_i|^p < \infty$. Then there exists a constant $C_p>0$, depending only on $p$, such that
\begin{eqnarray}\label{eq:rosenthal}
\mathbb{E}\left|\sum_{i=1}^n X_i\right|^p
\;\le\;
C_p\left(
\sum_{i=1}^n \mathbb{E}|X_i|^p
+
\left(\sum_{i=1}^n \mathbb{E}|X_i|^2\right)^{p/2}
\right).
\end{eqnarray}
\end{prop}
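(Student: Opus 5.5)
The plan is to reduce the mean-zero, signed case to a statement about sums of nonnegative independent variables. That reduction uses Burkholder's square-function inequality. The nonnegative case is then handled by a self-improving (absorption) argument.

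\textbf{Step 1: reduction to the square function.} Set $S_n=\sum_{i=1}^n X_i$. Since the $X_i$ are independent and centred, $(S_k)_{k\le n}$ is a martingale with respect to its natural filtration, with increments $X_i$. The Burkholder inequality for $p\ge 2$ (the discrete analogue of the BDG inequality, cf.\ the first part of Lemma~\ref{lem:bdg_volterra}) then gives
\begin{equation*}
\mathbb{E}|S_n|^p\le c_p\,\mathbb{E}\Big(\sum_{i=1}^n X_i^2\Big)^{p/2}.
\end{equation*}
Put $Y_i:=X_i^2\ge 0$ and $q:=p/2\ge 1$. The $Y_i$ are independent with $\mathbb{E}Y_i^q=\mathbb{E}|X_i|^p$ and $\mathbb{E}Y_i=\mathbb{E}|X_i|^2$. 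It therefore suffices to prove the nonnegative Rosenthal inequality
\begin{equation*}
\mathbb{E}\Big(\sum_{i=1}^n Y_i\Big)^q\le C_q\Big(\sum_{i=1}^n\mathbb{E}Y_i^q+\Big(\sum_{i=1}^n\mathbb{E}Y_i\Big)^q\Big).
\end{equation*}

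\textbf{Step 2: the nonnegative case.} Let $T=\sum_i Y_i$ and $T^{(i)}=T-Y_i$, so that $T^{(i)}$ is independent of $Y_i$ and $T^{(i)}\le T$. We may assume the right-hand side is finite, and hence $\mathbb{E}T^q<\infty$ (finite sum of $L^q$ variables). Write
\begin{equation*}
\mathbb{E}T^q=\sum_i\mathbb{E}\big[Y_i\,T^{q-1}\big].
\end{equation*}
Using $(a+b)^{q-1}\le 2^{(q-2)_+}(a^{q-1}+b^{q-1})$, each term splits into $\mathbb{E}Y_i^q$ plus $\mathbb{E}[Y_i(T^{(i)})^{q-1}]$. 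By independence and $T^{(i)}\le T$, the second piece is bounded by $\mathbb{E}Y_i\,\mathbb{E}T^{q-1}$. This yields
\begin{equation*}
\mathbb{E}T^q\le c\sum_i\mathbb{E}Y_i^q+c\Big(\sum_i\mathbb{E}Y_i\Big)\big(\mathbb{E}T^q\big)^{(q-1)/q},
\end{equation*}
where the last factor comes from Jensen/H\"older. Young's inequality with exponents $q$ and $q/(q-1)$ gives
\begin{equation*}
c\,A\,B^{(q-1)/q}\le \tfrac12 B+c' A^q,\qquad A=\sum_i\mathbb{E}Y_i,\quad B=\mathbb{E}T^q.
\end{equation*}
Since $B<\infty$, the term $\tfrac12 B$ can be absorbed into the left-hand side, which proves the claim.

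\textbf{Step 3: conclusion and main obstacle.} Combining Steps 1 and 2 gives the inequality with $C_p=c_p C_{p/2}$, a constant depending only on $p$. The main obstacle is Step 1, since it relies on Burkholder's inequality. If a self-contained route is preferred, one can instead symmetrize, replacing $S_n$ by $\sum\varepsilon_iX_i$ at the cost of a factor $2^p$. Then, conditionally on the $X_i$, Khintchine's inequality bounds $\mathbb{E}_\varepsilon|\sum\varepsilon_iX_i|^p$ by $B_p(\sum X_i^2)^{p/2}$. This produces the same square-function bound, after which Step 2 finishes the proof.
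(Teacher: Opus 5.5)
Your proof is correct, but the comparison is with a citation rather than a proof: the paper gives no argument for this statement. It is quoted from \citet{rosenthal1970subspaces} and used as a black box in the $p\ge 2$ case of the proof of Theorem~\ref{thm:eulerXNXNMRFFerrorbound}.

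Your argument is self-contained apart from Burkholder's or Khintchine's inequality. It runs in two standard stages.
\begin{enumerate}
\item A square-function reduction: either Burkholder's inequality, or symmetrization at a cost of $2^p$ followed by conditional Khintchine.
\item The nonnegative Rosenthal inequality. This uses the identity $\mathbb{E}T^q=\sum_i\mathbb{E}[Y_iT^{q-1}]$, the split $T=Y_i+T^{(i)}$ with $T^{(i)}$ independent of $Y_i$, Jensen, and Young's inequality with absorption.
\end{enumerate}
The absorption is legitimate because $\mathbb{E}T^q\le n^{q-1}\sum_i\mathbb{E}|X_i|^p<\infty$. The remaining inequalities also hold, including the range $1\le q<2$, where $2^{(q-2)_+}=1$ follows from subadditivity of $x\mapsto x^{q-1}$.

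Two small remarks. First, Lemma~\ref{lem:bdg_volterra} concerns stochastic integrals against a Brownian motion in continuous time. It does not literally cover a discrete martingale with arbitrary independent increments. For Step~1 you should invoke Burkholder's discrete inequality directly, or use your symmetrization/Khintchine alternative, which is fully elementary. Second, your constant is explicit but crude: the factor $2^{(q-2)_+}$ is raised to the power $q$ in the Young step, so it grows much faster in $p$ than the optimal Rosenthal constant. This does not matter here, since the statement only asserts that some $C_p$ exists.

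In short, the citation gives the paper brevity, while your route gives a transparent, checkable proof with an explicit if non-sharp constant.
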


\subsection{Proof of Theorem~\ref{thm:quantRFFestimatemoment}}
\label{subsec:proof_rff_estimate_moment}

\begin{enumerate}
    \item 
Let us reason on $X_n$ as the proof is similar. It follows from H\"older's inequality that for any $t\in \mathcal{K}$ :
\[
\mathbb{E}\big[|X_n(t)|^p\big]
\le  \mathbb{E}\big[|X_0|^p\big]
+  \mathbb{E}\!\left[
  \left| \int_0^t K\big(t-\zeta_n(s)\big)\,
  \sigma\big(\zeta_n(s), X_n(\zeta_n(s))\big)\, dW_s
  \right|^p
\right].
\]
The Burkholder--Davis--Gundy inequality (Lemma \ref{lem:bdg_volterra}) together with Theorem~2.3 in \cite{pannier2019uniqueness} yield  that there exists a positive constant $C>0$ such that
\[
\mathbb{E}\big[|X_n(t)|^p\big]
\le C \mathbb{E}\big[|X_0|^p\big]
+ C \mathbb{E}\!\left[
  \left( \int_0^t
    \big|K\big(t-\zeta_n(s)\big)\,
    \sigma\big(\zeta_n(s), X_n(\zeta_n(s))\big)\big|^2 ds
  \right)^{\frac{p}{2}}
\right].
\]
Applying H\"older's inequality for any $\beta>0$, 
\begin{align*}
 & \Bigg(\int_{0}^{t}\big|K(t-\zeta_{n}(s))\,\sigma(\zeta_{n}(s),X_{n}(\zeta_{n}(s)))\big|^{2}\,ds\Bigg)^{\frac{p}{2}}\\
 & \le\Bigg(\int_{0}^{t}|K(t-\zeta_{n}(s))|^{2\beta}\,ds\Bigg)^{\frac{p}{2\beta}}\Bigg(\int_{0}^{t}|\sigma(\zeta_{n}(s),X_{n}(\zeta_{n}(s)))|^{\frac{2\beta}{\beta-1}}\,ds\Bigg)^{\frac{p(\beta-1)}{2\beta}}.
\end{align*}
By the linear growth condition on $\sigma$, there exists an arbitrary constant $C>0$ such that:
\begin{align*}
 & \Bigg(\int_{0}^{t}\big|K(t-\zeta_{n}(s))\,\sigma(\zeta_{n}(s),X_{n}(\zeta_{n}(s)))\big|^{2}\,ds\Bigg)^{\frac{p}{2}}\\
 & \le C\Bigg(\int_{0}^{t}|K(t-\zeta_{n}(s))|^{2\beta}\,ds\Bigg)^{\frac{p}{2\beta}}\Bigg(\int_{0}^{t}\big(1+|X_{n}(\zeta_{n}(s))|^{\frac{2\beta}{\beta-1}}\big)\,ds\Bigg)^{\frac{p(\beta-1)}{2\beta}},
\end{align*}
In particular,
\begin{align*}
 & \Bigg(\int_{0}^{t}\big|K(t-\zeta_{n}(s))\,\sigma(\zeta_{n}(s),X_{n}(\zeta_{n}(s)))\big|^{2}\,ds\Bigg)^{\frac{p}{2}}\\
 & \le C\Bigg(\int_{0}^{t}|K(t-\zeta_{n}(s))|^{2\beta}\,ds\Bigg)^{\frac{p}{2\beta}}\Bigg(\int_{0}^{t}\big(1+|X_{n}(\zeta_{n}(s))|^{\frac{2\beta}{\beta-1}}\big)\,ds\Bigg)^{\frac{p(\beta-1)}{2\beta}},
\end{align*}
Again, using H\"older's inequality, 
\[
\begin{aligned}
\begin{cases}
    \Bigg(\int_0^t \big(1+|X_n(\zeta_n(s))|^{\frac{2\beta}{\beta-1}}\big)\,ds\Bigg)^{\frac{p(\beta-1)}{2\beta}}
\le t^{\frac{p(\beta-1)}{2\beta}-1}\int_0^t \big(1+|X_n(\zeta_n(s))|^{p}\big)\,ds.\\

\Bigg(\int_0^t |K(t-\zeta_n(s))|^{2\beta}\,ds\Bigg)^{\frac{p}{2\beta}}
\le  t^{\frac{p}{2\beta}-1}\int_0^t |K(t-\zeta_n(s))|^{p}\,ds.
\end{cases}
\end{aligned}
\]
meaning that:
\[
\begin{aligned}
\mathbb{E}\Bigg[
\Big( \int_0^t &\big|K(t-\zeta_n(s))\,\sigma(\zeta_n(s), X_n(\zeta_n(s)))\big|^2 \, ds \Big)^{\frac{p}{2}}
\Bigg]\le C\, t^{\frac{p}{2}-1}
\mathbb{E}\Bigg[\Bigg(\int_0^t |K(t-\zeta_n(s))|^{p}\,ds.\Bigg)
\int_0^t \Big(1+|X_n(\zeta_n(s))|^{p}\Big)ds\Bigg]\,
\end{aligned}
\]
This leads to:
\[
\mathbb{E}\big[|X_n(t)|^p\big]
\le C \left(\left(\mathbb{E}\big[|X_0|^p\big]
+ t^{\frac{p}{2}}K(0)^p\right)+\int_0^t
\mathbb{E}\Bigg[|X_n(\zeta_n(s))|^{p}\Bigg]\,ds\right).
\]
Applying Gr\"onwall's lemma yields
\[
\mathbb{E}\big[|X_n(t)|^p\big]
\le C\left(\mathbb{E}\big[|X_0|^p\big] + t^{\frac{p}{2}}K(0)^p\right)e^{Ct}.
\]
which leads to the claimed result.

\item 
For any arbitrary $0\leq s \leq t$, we write
\begin{align*}
X^M_n(t) - X^M_n(s) 
&= \int_s^t \hat{K}_M(t-\zeta_n(u)) \, \sigma(\zeta_n(u), X^M_n(\zeta_n(u))) \, dW_u \\
&\quad + \int_0^s \Big[ \hat{K}_M(t-\zeta_n(u)) - \hat{K}_M(s-\zeta_n(u)) \Big] 
      \sigma(\zeta_n(u), X^M_n(\zeta_n(u))) \, dW_u \\
&=: I_1 + I_2.
\end{align*}
For $p > 0$, according to the Burkholder--Davis--Gundy inequality (Lemma \ref{lem:bdg_volterra}) together with Theorem~2.3 in \cite{pannier2019uniqueness}, there exists $C_p>0$ such that
\begin{eqnarray*}
\begin{cases}
    \mathbb{E}[|I_1|^p] 
    & \le C_p \, \mathbb{E}\Bigg[\Big(\int_s^t |\hat{K}_M(t- \zeta_n(u))|^2 \, 
      |\sigma(\zeta_n(u), X^M_n(\zeta_n(u)))|^2 \, du \Big)^{p/2}\Bigg], \\
      \mathbb{E}[|I_2|^p] 
&\le C_p \, \mathbb{E}\Bigg[\Big(\int_0^s |\hat{K}_M(t-\zeta_n(u)) - \hat{K}_M(s-\zeta_n(u))|^2 \,
      |\sigma(\zeta_n(u), X^M_n(\zeta_n(u)))|^2 \, du \Big)^{p/2}\Bigg].
\end{cases}
\end{eqnarray*}
Thanks to the linear growth assumption, there exists, after some algebra, a positive constant $C_p>0$ such that:
\begin{align*}
\mathbb{E}[|I_1|^p] 
\le C_p|t-s|^{p/2}.
\end{align*}
Besides, by direct computations, one has for any $s\le t$:
\begin{eqnarray*}
    \left|\hat{K}_M(t-\zeta_n(u)) - \hat{K}_M(s-\zeta_n(u))\right|=2\mathbb{E}\left[\left| \sin\!\left(\eta \left(\frac{t + s}{2}-\zeta_n(u) - s\right)\right)
      \sin\!\left(\eta \left(\frac{t - s}{2}\right)\right)\right| \right] \le 2
\end{eqnarray*}
and similarly using the linear growth assumption, there exists a positive constant $C_p>0$ such that:
\begin{align*}
\mathbb{E}[|I_2|^p] 
\le C_p.
\end{align*}
On the other hand, using Minkowski's inequality:
\[
\mathbb{E}[|X^M_n(t) - X^M_n(s)|^p] 
\le C_p \Big( |t-s|^{p/2} +1 \Big)
\]
\end{enumerate}
which ends the proof.\\
\cqfd

\subsection{Proof of Theorem~\ref{thm:eulerXNXNMRFFerrorbound}}
\label{subsec:proof_euler_xnxnm_rff_error_bound}

For simplicity assume \(X_0^M=X_0\) (the deterministic initial difference is handled similarly).\\
By some algebra, the following decomposition holds:
\[
\begin{aligned}
X_N(t)-X_N^M(t)
&= \int_0^t K(t-\zeta_N(s))\Big(\sigma(\zeta_N(s),X_N(\zeta_N(s)))-\sigma(\zeta_N(s),X_N^M(\zeta_N(s)))\Big)\,dW_s\\[4pt]
&\qquad
+ \int_0^t \big(K(t-\zeta_N(s))-\widehat K_M(t-\zeta_N(s))\big)\,
\sigma(\zeta_N(s),X_N^M(\zeta_N(s)))\,dW_s\\[2pt]
&=: I_t^{(1)}+I_t^{(2)}.
\end{aligned}
\]
Let \(p\ge2\). By the Burkholder--Davis--Gundy inequality (Lemma \ref{lem:bdg_volterra}) together with Theorem 2.3 in \cite{pannier2019uniqueness}, there exists \(C>0\) such that
\[
\mathbb{E}[|X_N(t)-X_N^M(t)|^p]
\le C\,\mathbb{E}\big[\langle I^{(1)}\rangle_t^{p/2}\big]
+ C\,\mathbb{E}\big[\langle I^{(2)}\rangle_t^{p/2}\big].
\]
Using the Lipschitz property of \(\sigma\) in space, we have:
\[
\begin{aligned}
\mathbb{E}\big[\langle I^{(1)}\rangle_t^{p/2}\big]
\le L\,\mathbb{E}\Bigg[\Big( \int_0^t |K(t-\zeta_N(s))|^2 |X_N(\zeta_N(s))-X_N^M(\zeta_N(s))|^2 ds \Big)^{\frac{p}{2}}\Bigg].
\end{aligned}
\]
Apply H\"older's inequality in time for any exponent arbitrary \(\beta>1\) gives:
\begin{eqnarray*}
    \mathbb{E}\Bigg[\Big( \int_0^t |K(t-\zeta_N(s))|^2 |X_N(\zeta_N(s))-X_N^M(\zeta_N(s))|^2 ds \Big)^{\frac{p}{2}}\Bigg]
\le \Big(\int_0^t |K(t-\zeta_N(s))|^{2\beta}\,ds\Big)^{\frac{p}{2\beta}}\\
\mathbb{E}\Bigg[\Big(\int_0^t |X_N(\zeta_N(s))-X_N^M(\zeta_N(s))|^{\frac{2\beta}{\beta-1}}\,ds\Big)^{\frac{p(\beta-1)}{2\beta}}\Bigg].
\end{eqnarray*}
As the random Fourier features representation of \(K\) is bounded:
\[
\mathbb{E}\big[\langle I^{(1)}\rangle_t^{p/2}\big]
\le \int_0^t \mathbb{E}\big[|X_N(\zeta_N(s))-X_N^M(\zeta_N(s))|^p\big]\,ds,
\]
hence
\[
\mathbb{E}\big[|I_t^{(1)}|^p\big]\le C\int_0^t \mathbb{E}\big[|X_N(\zeta_N(s))-X_N^M(\zeta_N(s))|^p\big]\,ds.
\]
in particular:
\[
\mathbb{E}\big[|I_t^{(1)}|^p\big]\le C\int_0^t \sup_{0\leq u\leq s}\mathbb{E}\big[|X_N(u)-X_N^M(u)|^p\big]\,ds.
\]
On the other hand, we estimate the quadratic variation:
\[
\begin{aligned}
\mathbb{E}\big[\langle I^{(2)}\rangle_t^{p/2}\big]
&= \mathbb{E}\Bigg[\Big( \int_0^t |K(t-\zeta_N(s))-\widehat K_M(t-\zeta_N(s))|^2
|\sigma(\zeta_N(s),X_N^M(\zeta_N(s)))|^2 ds \Big)^{\frac{p}{2}}\Bigg].
\end{aligned}
\]
Applying H\"older's inequality with exponent \(\beta>1\) gives:
\[
\begin{aligned}
\mathbb{E}\big[\langle I^{(2)}\rangle_t^{p/2}\big]
&\le
\mathbb{E}\big[\Big(\int_0^t |K-\widehat K_M|^{2\beta}(t-\zeta_N(s))\,ds\Big)^{\frac{p}{2\beta}}
\Big(\int_0^t |\sigma(\zeta_N(s),X_N^M(\zeta_N(s)))|^{\frac{2\beta}{\beta-1}}\,ds\Big)^{\frac{p(\beta-1)}{2\beta}}\big].
\end{aligned}
\]
Using the linear growth property together with the time-averaging inequality (i.e., for any continuous function $a$ and $t\geq 0$, \((\int_0^t a(x) dx)^{r}\le t^{r-1}\int_0^t a(x)^r dx\) with
\(r=\dfrac{p(\beta-1)}{2\beta}\ge1\), one has:
\[
\Big(\int_0^t |\sigma(\zeta_N(s),X_N^M(\zeta_N(s)))|^{\frac{2\beta}{\beta-1}} ds\Big)^{\frac{p(\beta-1)}{2\beta}}
\le C \int_0^t \big(1+|X_N^M(\zeta_N(s))|^p\big)\,ds.
\]
Thus as the random features are independent from the driving Brownian motion $W$, one has:
\[
\begin{aligned}
\mathbb{E}\big[\langle I^{(2)}\rangle_t^{p/2}\big]
&\le C_t\mathbb{E}\big[\Big(\int_0^t |K-\widehat K_M|^{2\beta}(t-\zeta_N(s))\,ds\Big)^{\frac{p}{2\beta}}\big]
\int_0^t \mathbb{E}\big[1+|X_N^M(\zeta_N(s))|^p\big]\,ds.
\end{aligned}
\]
Again apply the time-averaging trick to the first factor to get the convenient bound:
\[
\Big(\int_0^t |K-\widehat K_M|^{2\beta}(t-\zeta_N(s))\,ds\Big)^{\frac{p}{2\beta}}
\le C_t \int_0^t |K(t-\zeta_N(s))-\widehat K_M(t-\zeta_N(s))|^{p}\,ds.
\]
Consequently
\[
\mathbb{E}\big[|I_t^{(2)}|^p\big]
\le C\mathbb{E}\Bigg(\int_0^t |K(t-\zeta_N(s))-\widehat K_M(t-\zeta_N(s))|^{p}\,ds\Bigg)
\Bigg(1+\sup_{u\in[0,t]}\mathbb{E}\big[|X_N^M(u)|^p\big]\Bigg).
\]
Using the moment estimate, there exists a positive constant $C(\mathcal{K})>0$ for any arbitrary compact $\mathcal{K}\subset \R_+$ such that:
\[
\mathbb{E}\big[|I_t^{(2)}|^p\big]
\le C(\mathcal{K})
\,\Big(\mathbb{E}\!\left[\,|X^M_0|^p\,\right] + |K(0)|^p\Big)\mathbb{E}\Bigg(\int_0^t |K(t-\zeta_N(s))-\widehat K_M(t-\zeta_N(s))|^{p}\,ds\Bigg).
\]
Using Fubini's theorem, 
\[
\mathbb{E}\big[|I_t^{(2)}|^p\big]
\le C(\mathcal{K})
\,\Big(\mathbb{E}\!\left[\,|X^M_0|^p\,\right] + |K(0)|^p\Big)\int_0^t\mathbb{E}\Bigg( |K(t-\zeta_N(s))-\widehat K_M(t-\zeta_N(s))|^{p}\Bigg)\,ds.
\]
\begin{enumerate}[label=$\blacktriangleright$]
    \item For $1\leq p<2$, the \textit{Von Bahr-Essen inequality} (Proposition~\ref{prop:vBE})  gives for any $0\leq s \leq t$, after some algebra:
\[
\mathbb{E}\Bigg( |K(t-\zeta_N(s))-\widehat K_M(t-\zeta_N(s))|^{p}\Bigg)
\le C_p M^{-(p - 1)}.
\]
where $C_p$ is a positive constant. Consequently:
\[
\sup_{s\in[0, t]}\mathbb{E}[|X_N(s)-X_N^M(s)|^p]
\le C_\mathcal{K}^p\left(\int_0^t \sup_{0\leq u\leq s}\mathbb{E}\big[|X_N(u)-X_N^M(u)|^p\big]\,ds+\frac{1}{M^{p-1}}\right).
\]
where $C_\mathcal{K}^p=4C_pC(\mathcal{K})
\,\Big(\mathbb{E}\!\left[\,|X^M_0|^p\,\right] + |K(0)|^p\Big)\vee C_t$.
Hence, we obtain thanks to Gr\"onwall's lemma:
\[
\sup_{s \in [0, t]} \mathbb{E}\big[|X_N(s) - X_N^M(s)|^p\big]
\le \frac{C_\mathcal{K}^p}{M^{p-1}} e^{C_\mathcal{K}^p t}.
\]
\item For $p\geq 2$, \textit{Rosenthal's inequality} (Proposition~\ref{prop:rosenthal}) states that there exists a constant $C_p$ such that for any $0\leq s \leq t$:
\begin{align*}
\mathbb{E}\left(\left|
  \sum_{m=1}^M K(t-\zeta_N(s)) - \cos(\eta_{m}(t-\zeta_N(s)))\right|^{p}
\right)
&\le C_pK(0) \Bigg(
    \sum_{m=1}^M 
      \mathbb{E}\big[
        |K(t-\zeta_N(s)) -  \cos(\eta_{m}(t-\zeta_N(s)))|^p
      \big] \\
&\qquad\quad
    + \Bigg(
        \sum_{m=1}^M 
          \mathbb{E}\big[
            |K(t-\zeta_N(s)) - \cos(\eta_{m}(t-\zeta_N(s)))|^2
          \big]
      \Bigg)^{p/2}
  \Bigg).
\end{align*}
thanks to the random Fourier features representation, one can have the upper bound after some algebra:
\begin{align*}
\mathbb{E}\left(\left|
  \sum_{m=1}^M K(t-\zeta_N(s)) - \cos(\eta_{m}(t-\zeta_N(s)))\right|^{p}
\right)
&\le C_p \Bigg(
    2M+\left(
        4M
      \right)^{p/2}
  \Bigg).
\end{align*}
Consequently,
\begin{align*}
\mathbb{E}\Bigg( |K(t-\zeta_N(s))-\widehat K_M(t-\zeta_N(s))|^{p}\Bigg)
&\le C_p \Bigg(
    \frac{2}{M^{p-1}}+\left(\frac{4^p}{M}\right)^{\frac{1}{2}}
  \Bigg).
\end{align*}
This leads to:
\[
\sup_{s\in[0, t]}\mathbb{E}[|X_N(s)-X_N^M(s)|^p]
\le C_\mathcal{K}^p\left(\int_0^t \sup_{0\leq u\leq s}\mathbb{E}\big[|X_N(u)-X_N^M(u)|^p\big]\,ds+\left(\frac{2}{M^{p-1}}+\left(\frac{4^p}{M}\right)^{\frac{1}{2}}\right)\right).
\]
where $C_\mathcal{K}^p=C(\mathcal{K})
\,\Big(\mathbb{E}\!\left[\,|X^M_0|^p\,\right] + |K(0)|^p\Big)\vee C_p$.\\
Applying Gr\"onwall's lemma yields:
\begin{align*}
\sup_{s\in[0, t]}\mathbb{E}[|X_N(s)-X_N^M(s)|^p]\le  C_\mathcal{K}^p \Bigg( \frac{2}{M^{p-1}} + \Big(\frac{4^p}{M}\Big)^{1/2} \Bigg) e^{C_\mathcal{K}^p t}.
\end{align*}
\end{enumerate}
which ends the proof.
\\
\cqfd

\subsection{Proof of Proposition~\ref{prop:eulerXXNMRFFerrorbound}}

Applying Minkowski's inequality:
\[
\Bigg(\sup_{s\in[0,t]} \mathbb{E}|X_s-X_N^M(s)|^p \Bigg)^{1/p}
\le 
\Bigg(\sup_{s\in[0,t]} \mathbb{E}|X_s-X_N(s)|^p \Bigg)^{1/p}
+
\Bigg(\sup_{s\in[0,t]} \mathbb{E}|X_N(s)-X_N^M(s)|^p \Bigg)^{1/p}.
\]
Using Jensen's inequality,
\[
\sup_{s\in[0,t]} \mathbb{E}|X_s-X_N^M(s)|^p
\le 2^{p-1} \Bigg(
\sup_{s\in[0,t]} \mathbb{E}|X_s-X_N(s)|^p
+
\sup_{s\in[0,t]} \mathbb{E}|X_N(s)-X_N^M(s)|^p
\Bigg).
\]
The error bound of Theorem~\ref{thm:eulerXNXNMRFFerrorbound}
as well as Theorem~2.2 (ii) in \citet{richard2021discrete}
claim the existence of positive constants $K^1_p$ and $K^2_p$ such that
\begin{eqnarray*}
\sup_{s\in[0,t]} \mathbb{E}|X_s-X_N^M(s)|^p
\le 
    \begin{cases}
        K^1_p\Bigg(
 \,\Big(1 + \mathbb{E}\left[|X_0|^p\right] \Big)\, \delta_N^{\,p(\alpha \wedge 1) - \varepsilon}
+
\frac{e^{C_K^p t}}{M^{p-1}} 
\Bigg), & \text{if $1 \leq p<2$}, \\[6pt]
K^2_p \Bigg(
\,\Big(1 + \mathbb{E}\left[|X_0|^p\right] \Big)\, \delta_N^{\,p(\alpha \wedge 1) - \varepsilon}
+ \Bigg( \frac{2}{M^{p-1}} + \Big(\frac{4^p}{M}\Big)^{\frac{1}{2}} \Bigg) e^{C_K^p t}, 
& \text{if $p \geq 2$}.
    \end{cases}
\end{eqnarray*}
which ends the proof.\\
\cqfd

\section{Error of the RFF scheme: the S-fBM kernel}
\subsection{Proof of Theorem~\ref{thm:convergenceHMCspectralS-fBM}}
\label{subsec:proof_convergence_hmc_spectral_sfbm}

By stationarity one has from Lemma~\ref{lem:covstatiocovHMC} that:
\[
\forall (i,j)\in \mathbb{N}, \quad 
\big|\Cov(h(\hat{X}_i),h(\hat{X}_j))\big|
\le C\,(b-a)^2\,\rho^{\big|i-j\big|}.
\]
($C$ is a positive constant).\\
Let \( Z_i = h(\hat{X}_i) - \mathbb{E}[h(\hat{X}_i)] \) so that \( S_M = \sum_{i=1}^M Z_i \). Each \( Z_i \) satisfies \( |Z_i| \le  b-a \) and by assumption
\[
|\Cov(h(\hat{X}_i), h(\hat{X}_j))| \le (b-a)^2\,C\,\rho^{|i-j|}, \qquad 0<\rho<1.
\]
We first bound the variance of \(S_M\):
\[
\Var(S_M)
= \sum_{i=1}^M \Var(Z_i)
+ 2\sum_{1 \le i < j \le M} \Cov(Z_i, Z_j).
\]
By the covariance bound,
\[
\Var(S_M)
\le M \sup_i \Var(Z_i)
+ 2 (b-a)^2 C \sum_{i=1}^{M} \sum_{k=1}^{M-i} \rho^{k}
= M \sup_i \Var(Z_i)
+ 2 (b-a)^2 C \sum_{k=1}^{M-1} (M-k)\rho^{k}.
\]
Since
\[
\sum_{k=1}^{M-1} (M-k)\rho^{k}
\le M \sum_{k=1}^{\infty} \rho^{k}
= M \frac{\rho}{1-\rho},
\]
we obtain
\[
\Var(S_M)
\le M \left(
\sup_{1\leq i\leq M} \Var(Z_i)
+ 2(b-a)^2 C \frac{\rho}{1-\rho}
\right).
\]
By Popoviciu’s inequality on variances, 
\[
\Var(S_M)
\le M \left(
 \frac{(b-a)^2}{4}
+ 2(b-a)^2 C \frac{\rho}{1-\rho}
\right).
\]
We define
\[
v := \frac{(b-a)^2}{4} \left(1
+\frac{8\rho}{1-\rho} \right),
\]
in order to introduce compactly the positive constant $C$ such that \( \Var(S_M) \le C M  v \) . Next, recall a standard Bernstein-type moment generating function inequality for \(S_M\):
\[
\log \mathbb{E} e^{u S_M}
\le C u^2 \Var(S_M).
\]
Thus, we obtain
\[
\log \mathbb{E} e^{u S_M}
\le C u^2 M v.
\]
Consider now the normalized sum
\[
T_M := \frac{1}{M}\sum_{i=1}^M \big(h(Y_i) - \mathbb{E}[h(Y_i)]\big)
= \frac{S_M}{M}.
\]
One straightforwardly has following the same previous computations:
\[
\Var(T_M)
\le \frac{v}{M},
\]
Applying the same Bernstein-type moment generating function inequality as before to \(T_M\) leads to:
\[
\log \mathbb{E} e^{u T_M}
\le C u^2 \frac{v}{M}.
\]
Hence,
\[
\log\!\left [\mathbb{E} \!\left( \exp\!\left(
\,\frac{u}{M}\sum_{i=1}^M \big(h(Y_i) - \mathbb{E}\left[h(\eta)\right]\big)
\right)\right)\right]
\le
C\, u^2 \frac{v}{M}.
\]
Similarly,
\[
\log\!\left [\mathbb{E} \!\left( \exp\!\left(
\,\frac{u}{M}\sum_{i=1}^M \big(h(\hat{X}_i) - \mathbb{E}\left [ h(\hat{X}_0)\right]\big)
\right)\right)\right]
\le
C\, u^2  \frac{v}{M}.
\]
By the propagation of errors, one can write:
\[
\log\!\left [\mathbb{E} \!\left( \exp\!\left(
u\,\frac{1}{M}\sum_{i=1}^M \big(h(\hat{X}_i) - \mathbb{E}\left[h(\eta)\right]\big)
\right)\right)\right]
\le
C\, u^2  \frac{v}{M}+\left|\mathbb{E}\left [ h(\hat{X}_0)\right]-\mathbb{E}\left[h(\eta)\right]\right|.
\]
As \(h\) is bounded, by fixing an arbitrary positive real \(R\) one can write using the triangle inequality:
\[
\left|\mathbb{E}\left [ h(\hat{X}_0)\right]-\mathbb{E}\left[h(\eta)\right]\right|\leq  
C\!\left( \int_{B(0,R)} \!\!\left|f_\omega(\mathbf{x}) - \hat f_\omega^N(\mathbf{x}) \right| d\mathbf{x}
+ \| f_\omega - \hat f_\omega^N \|_1^{>r} \right).
\]
Thanks to Theorem~\ref{thm:convergenceHMCspectralS-fBM} as well as Proposition~\ref{prop:queuelonenormerror}, one has:
\[
\left| \mathbb{E}\left [ h(\hat{X}_0)\right] - \mathbb{E}[h(\eta)] \right|
\le
C \Psi^R_N(T,\nu,d,H,C_{\Psi}),
\]
where 
\begin{align*}
\Psi^R_N(T,\nu,d,H,C_{\Psi}):=\Bigg[
&\left( \frac{T^2}{2} \right)^{\!\frac{d}{2}}
\frac{\nu^2}{\Gamma\!\left( \tfrac{d}{2} \right)}
\Bigg(
    \frac{1}{d}
    \left( \frac{e T^2 R^2}{4N} \right)^{\!N}
    \frac{1}{e \, \bigl( \tfrac{d}{2} + 1 \bigr)_N}
\\[0.5em]
&
    +\;
    \frac{1}{d + 2H}
    \frac{C_{\Psi}}{
        e \, \bigl( 2 C_{\Psi} R e (N-1) - 1 \bigr)
        (N-1)^{2 C_{\Psi} R e (N-1) - 1}
    }
\Bigg)
\\[0.5em]
&
+\;
\frac{\Omega_d \, C_\omega}{T^{d}}
\frac{(T R + \kappa(d))^{d} - T R^{d}}{d \, \kappa(d)} \,
e^{-T R}
\Bigg].
\end{align*}
This is true for any $R>0$ in particular $R=\sqrt{N}$, which means that:
\[
\left| \mathbb{E}\left [ h(\hat{X}_0)\right] - \mathbb{E}\left[h(\eta)\right] \right|
\le
C \Psi_N(T,\nu,d,H,C_{\Psi}),
\]
where $\Psi_N(T,\nu,d,H,C_{\Psi})=\Psi^{\sqrt{N}}_N(T,\nu,d,H,C_{\Psi})$
Putting the terms together, one has:
\[
\log\!\left [\mathbb{E} \!\left( \exp\!\left(
\,\frac{u}{M}\sum_{i=1}^M \big(h(\hat{X}_i) - \mathbb{E}\left[h(\eta)\right]
\right)\right)\right]
\le
C \!\left(  \frac{2vu^2}{M}+\Psi_N(T,\nu,d,H,C_{\Psi})\right),
\]
Using Markov's inequality, for any $t\geq 0$:
\[
\mathbb{P}\!\left(
|\hat{\mu}_M - \mu| \geq t
\right)
\leq
\,\exp\!\left(
- u t
+ C \!\left(  \frac{vu^2}{M} + \Psi_N(T,\nu,d,H,C_{\Psi})\right)
\right),
\]
Similarly to Hoeffding's inequality (by minimizing the exponent), this holds in particular for:
\[
u^\star = \frac{M t}{4 C v}.
\]
which leads to:
\[
\mathbb{P}(|\hat{\mu}_M - \mu| \ge t)
\le  \exp\!\Big( - \frac{M t^2}{8 C v} + C \Psi_N(T,\nu,d,H,C_{\Psi}) \Big).
\]\\
\cqfd

\subsection{Proof of Corollary~\ref{cor:lperrorHMC}}
\label{subsec:proof_cor_lp_error_hmc}
  
For the sake of compactness, define:
\[
\alpha := \frac{1}{8 C v},\qquad \beta := C \Psi_N(T,\nu,d,H,C_{\Psi}),
\]
so that the tail bound is
\[
\mathbb{P}(|\hat{\mu}_M - \mu| \ge t) \le e^{\beta} e^{-\alpha M t^2},\qquad t\ge0.
\]
For any real \(p>0\) we use the identity
\[
\mathbb{E}\big[|\hat{\mu}_M-\mu|^p\big] = \int_{0}^{\infty} p\,t^{\,p-1}\,\mathbb{P}(|\hat{\mu}_M-\mu| \ge t)\,dt,
\]
Hence
\begin{align*}
\mathbb{E}\big[|\hat{\mu}_M-\mu|^p\big]
&= \int_{0}^{\infty} p\,t^{p-1}\,\mathbb{P}(|\hat{\mu}_M-\mu|\ge t)\,dt \\
&\le e^{\beta}\, p \int_{0}^{\infty} t^{p-1} e^{-\alpha M t^2}\,dt.
\end{align*}
The remaining integral is standard: for \(b>0\),

\[
\int_{0}^{\infty} t^{p-1} e^{-b t^2}\,dt
= \tfrac12 b^{-p/2}\, \Gamma\!\big(\tfrac{p}{2}\big).
\]
We obtain
\[
\mathbb{E}\big[|\hat{\mu}_M-\mu|^p\big]
\le e^{\beta}\, p \cdot \tfrac12 (\alpha M)^{-p/2} \Gamma\!\big(\tfrac{p}{2}\big).
\]
Using the identity \(\Gamma(1+s)=s\Gamma(s)\), this can be written compactly as
\[
\mathbb{E}\big[|\hat{\mu}_M-\mu|^p\big]
\le e^{\beta}\,\Gamma\!\Big(1+\frac{p}{2}\Big)\,(\alpha M)^{-p/2},
\]
meaning that:
\[
\mathbb{E}\big[|\hat{\mu}_M-\mu|^p\big]
\le e^{C \Psi_N(T,\nu,d,H,C_{\Psi})}\,\Gamma\!\Big(1+\frac{p}{2}\Big)\,
\Big(\frac{1}{8 C v}\,M\Big)^{-p/2}.
\]
which leads to the claimed result.
\\
\cqfd

\subsection{Proof of Theorem~\ref{thm:eulerXNXNMRFFerrorboundSfBM}}
\label{subsec:proof_euler_xnxnm_rff_error_bound_sfbm}

We use the same notations as in the proof of Theorem~\ref{thm:eulerXNXNMRFFerrorbound} (Appendix~\ref{subsec:proof_euler_xnxnm_rff_error_bound}).
We recall that:
\[
\begin{aligned}
X_N(t)-X_N^M(t)
&=I_t^{(1)}+I_t^{(2)}.
\end{aligned}
\]
where:
\begin{eqnarray}
    \begin{cases}
        I_t^{(1)}=\int_0^t K(t-\zeta_N(s))\Big(\sigma(\zeta_N(s),X_N(\zeta_N(s)))-\sigma(\zeta_N(s),X_N^M(\zeta_N(s)))\Big)\,dW_s\\
        I_t^{(2)}=\int_0^t \big(K(t-\zeta_N(s))-\widehat K_M(t-\zeta_N(s))\big)\,
\sigma(\zeta_N(s),X_N^M(\zeta_N(s)))\,dW_s
    \end{cases}
\end{eqnarray}
Following the same arguments as in Appendix~\ref{subsec:proof_euler_xnxnm_rff_error_bound}, one has the bound:
\begin{eqnarray}
\begin{cases}
     \mathbb{E}\big[|I_t^{(1)}|^p\big]\le C\int_0^t \sup_{0\leq u\leq s}\mathbb{E}\big[|X_N(u)-X_N^M(u)|^p\big]\,ds\\
     \mathbb{E}\big[|I_t^{(2)}|^p\big]
\le C(\mathcal{K})
\,\Big(\mathbb{E}\!\left[\,|X^M_0|^p\,\right] + |K(0)|^p\Big)\int_0^t\mathbb{E}\Bigg( |K(t-\zeta_N(s))-\widehat K_M(t-\zeta_N(s))|^{p}\Bigg)\,ds
\end{cases}\nonumber
\end{eqnarray}
Using the result of Corollary \ref{cor:lperrorHMC}, one has:
\begin{eqnarray}
    \mathbb{E}\Bigg( |K(t-\zeta_N(s))-\widehat K_M(t-\zeta_N(s))|^{p}\Bigg)
\le  e^{C \Psi_N(T,\nu,d,H,C_{\Psi})}\,\Gamma\!\Big(1+\frac{p}{2}\Big)\,
\Big(\frac{1}{8 C v}\,M\Big)^{-p/2}.\nonumber
\end{eqnarray}
As a result, 
\[
\sup_{s\in[0, t]}\mathbb{E}[|X_N(s)-X_N^M(s)|^p]
\le C_\mathcal{K}^p\left(\int_0^t \sup_{0\leq u\leq s}\mathbb{E}\big[|X_N(u)-X_N^M(u)|^p\big]\,ds+e^{C\Psi_N(T,\nu,d,H,C_{\Psi})}\,\Gamma\!\Big(1+\frac{p}{2}\Big)\,
\Big(\frac{1}{8 C v}\,M\Big)^{-p/2}\right).
\]
where $C_\mathcal{K}^p=C(\mathcal{K})
\,\Big(\mathbb{E}\!\left[\,|X^M_0|^p\,\right] + |K(0)|^p\Big)\vee C$
Hence, we obtain thanks to Gr\"onwall's lemma:
\[
\sup_{s \in [0, t]} \mathbb{E}\big[|X_N(s) - X_N^M(s)|^p\big]
\le C_\mathcal{K}^p e^{C \Psi_N(T,\nu,d,H,C_{\Psi})+C_\mathcal{K}^p t}\,\Gamma\!\Big(1+\frac{p}{2}\Big)\,
\Big(\frac{1}{8 C v}\,M\Big)^{-p/2} .
\]
Applying Minkowski's inequality:
\[
\Bigg(\sup_{s\in[0,t]} \mathbb{E}\big[|X_s-X_N^M(s)|^p\big] \Bigg)^{1/p}
\le 
\Bigg(\sup_{s\in[0,t]} \mathbb{E}\big[|X_s-X_N(s)|^p\big] \Bigg)^{1/p}
+
\Bigg(\sup_{s\in[0,t]} \mathbb{E}\big[|X_N(s)-X_N^M(s)|^p\big] \Bigg)^{1/p}.
\]
Using Jensen's inequality,
\[
\sup_{s\in[0,t]} \mathbb{E}|X_s-X_N^M(s)|^p
\le 2^{p-1} \Bigg(
\sup_{s\in[0,t]} \mathbb{E}|X_s-X_N(s)|^p
+
\sup_{s\in[0,t]} \mathbb{E}|X_N(s)-X_N^M(s)|^p
\Bigg).
\]
Using the error bound of the classical Euler scheme of  Theorem~2.2 (ii) in \citet{richard2021discrete}, 
there exist positive constants $K^1_p$ and $C_p$ such that
\begin{eqnarray*}
\sup_{s\in[0,t]} \mathbb{E}\big[|X_s-X_N^M(s)|^p\big]
\le  K^1_p\Bigg(
 \,\Big(1 + \mathbb{E}\left[|X_0|^p\right] \Big)\, \delta_N^{\,p(\alpha \wedge 1) - \varepsilon}
+
e^{C \left(\Psi_N(T,\nu,d,H,C_{\Psi})+t\right)}\,\Gamma\!\Big(1+\frac{p}{2}\Big)\,
\Big(\frac{1}{8 C v}\,M\Big)^{-p/2}
\Bigg)
\end{eqnarray*}
without loss of generality, we denote $C^1_p$ as $C_p$ which ends the proof.\\
\cqfd

\section{Numerical appendix}

\subsection{S-fBM spectral density representation}
\label{subsec:numerics_sfbm_spectral_density}

\begin{figure}[H]
    \centering
    % --- Top row ---
    \begin{subfigure}[t]{0.49\textwidth}
        \centering
        \includegraphics[width=\textwidth]{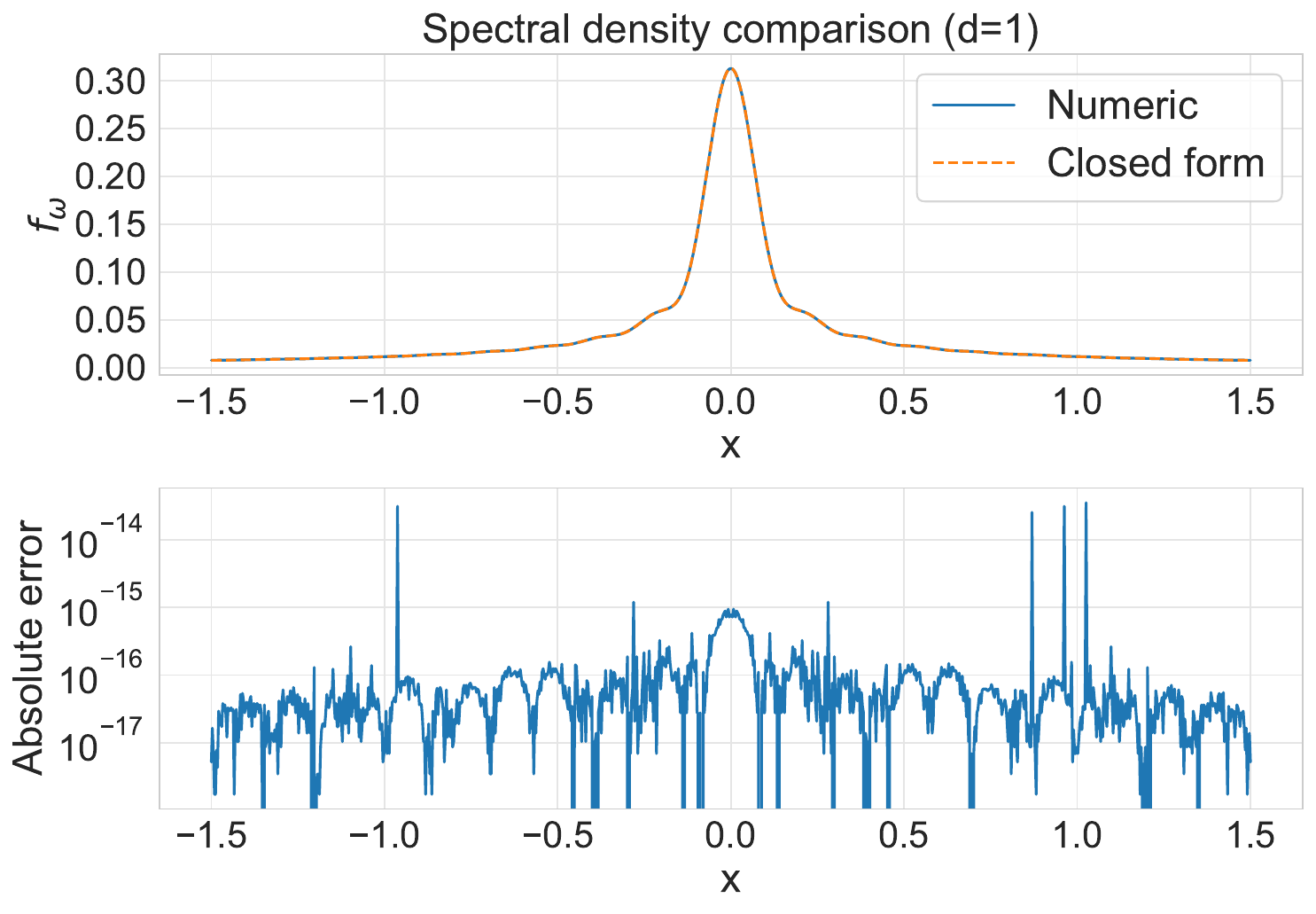}
        \caption{\(T = 40\)}
    \end{subfigure}
    \hfill
    \begin{subfigure}[t]{0.49\textwidth}
        \centering
        \includegraphics[width=\textwidth]{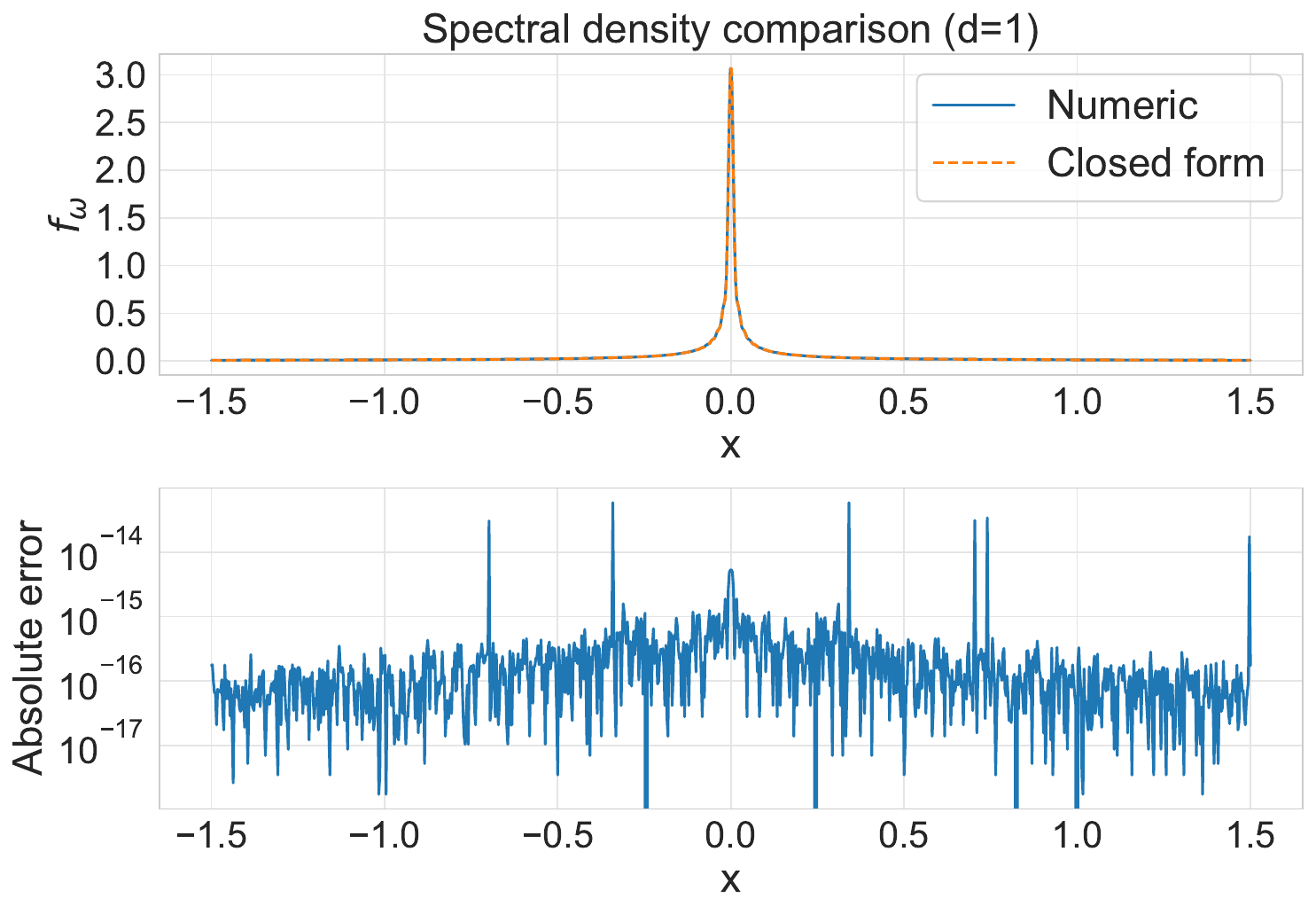}
        \caption{\(T = 400\)}
    \end{subfigure}

    \vspace{0.4cm}

    % --- Bottom row (centered) ---
    \begin{subfigure}[t]{0.6\textwidth}
        \centering
        \includegraphics[width=\textwidth]{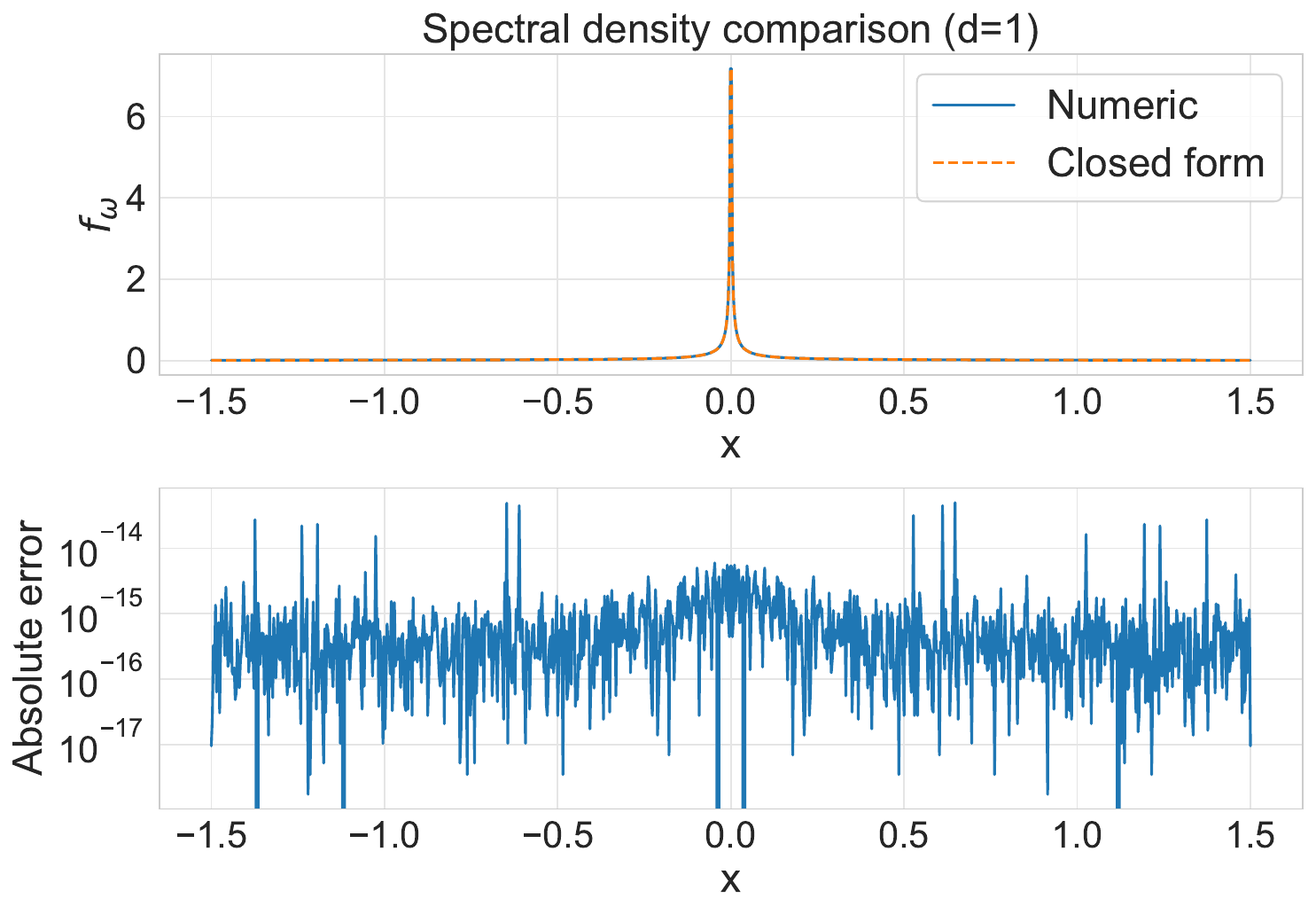}
        \caption{\(T = 4000\)}
    \end{subfigure}

    \caption{Illustration of the estimated quantities for increasing correlation limits \(T\) with fixed parameters \(H = 0.01\) and \(\nu^2 = 1\).
    The top panels correspond to \(T = 40\) and \(T = 400\), while the bottom panel shows the results for \(T = 4000\).}
    \label{fig:spectraldensityhypergeo}
\end{figure}

\begin{figure}[H]
    \centering
    % --- Top: surface plot ---
    \begin{subfigure}[t]{\textwidth}
        \centering
        \includegraphics[width=\textwidth]{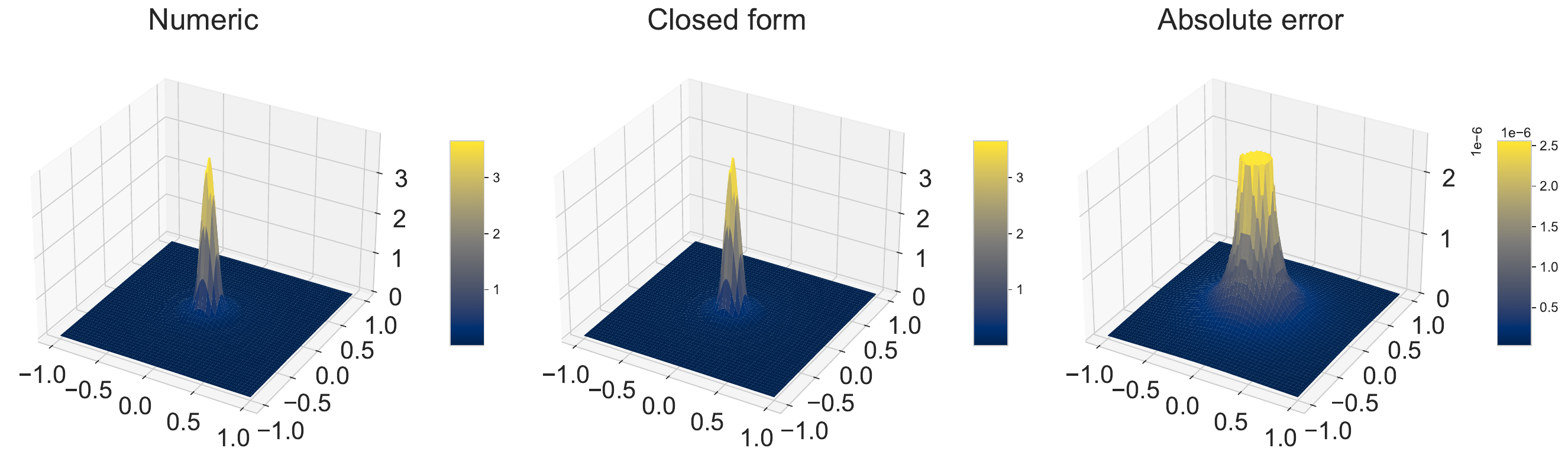}
        \caption{Pointwise error surface}
    \end{subfigure}

    \vspace{0.4cm}

    % --- Bottom row: colormaps ---
    \begin{subfigure}[t]{\textwidth}
        \centering
        \includegraphics[width=\textwidth]{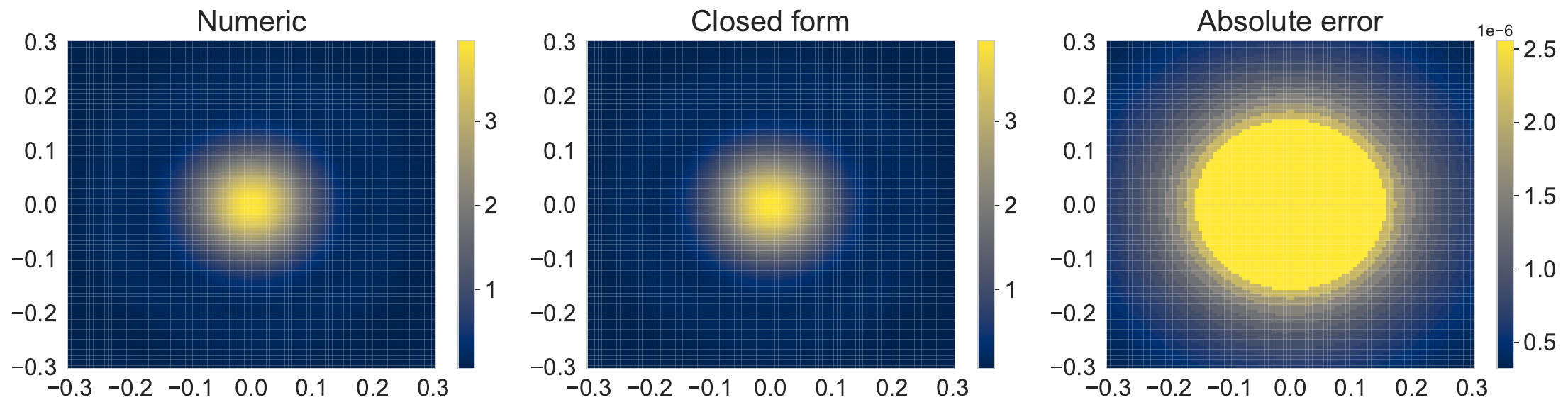}
        \caption{Colormap of the absolute error}
    \end{subfigure}

    \caption{Two dimensional validation of the spectral density representation for \(H = 0.01\) and \(\nu^2 = 2\).
    The top panel shows the surface of the pointwise difference between the analytical spectral density and its numerical Fourier integration reference for \(T = 40\).
    The bottom panels display the corresponding colormap representations, highlighting that the approximation error remains at the level of machine precision across the frequency domain.}
    \label{fig:d2_spectraldensityhypergeo}
\end{figure}

\subsection{Proof of Theorem~\ref{thm:GMMRFF}}
\label{subsec:proof_gmm_rff}

Consider the GMM objective functions associated with the true and approximated discrepancies:
\begin{eqnarray}
    \forall x\in \Xi,\quad  
    \begin{cases}
        \Tau(\bm{x}):=\bm{h}(\bm{x})^{\!\top} W \bm{h}(\bm{x}), \\
        \Hat{\Tau}(\bm{x}):=\Hat{\bm{h}}(\bm{x})^{\!\top} W \Hat{\bm{h}}(\bm{x}),
    \end{cases} \nonumber
\end{eqnarray}
where $\nabla$ and $\nabla^2$ denote the gradient and Hessian operators, respectively.\\ 
By definition of the GMM estimator, $\Hat{\theta}_M$ satisfies the first-order condition:
\begin{eqnarray}
\nabla \Hat{\Tau}\left(\Hat{\theta}_M\right) = 0. \nonumber
\end{eqnarray}
Using a first-order Taylor expansion, there exists $\tilde{\theta}\in \Xi$ such that:
\begin{eqnarray}
\nabla  \Hat{\Tau}\left(\Hat{\theta}_M\right)  
= \nabla  \Hat{\Tau}\left(\theta^{*}\right) + \nabla^2  \Hat{\Tau}\left(\tilde{\theta}\right) \left(\Hat{\theta}_M - \theta^{*}\right). \nonumber
\end{eqnarray}
Hence,
\begin{eqnarray}
\Hat{\theta}_M - \theta^{*} = - \left(\nabla^2  \Hat{\Tau}\left(\tilde{\theta}\right)\right)^{-1} \nabla  \Hat{\Tau}\left(\theta^{*}\right). \nonumber
\end{eqnarray}
Since $\Hat{\Tau}(.)$ is differentiable, we have for any $x\in\Xi$:
\begin{eqnarray}
\begin{cases}
    \nabla  \Hat{\Tau}\left(\bm{x}\right) = 2 \Hat{J}\left(\bm{x}\right)^\top W \Hat{\bm{h}}\left(\bm{x}\right),\\[4pt]
    \nabla^2  \Hat{\Tau}\left(\bm{x}\right) = 2 \Hat{J}\left(\bm{x}\right)^\top W\Hat{J}\left(\bm{x}\right) + 2\left(\nabla^2\Hat{\bm{h}}\left(\bm{x}\right)\right) W \Hat{\bm{h}}\left(\bm{x}\right),
\end{cases} \nonumber
\end{eqnarray}
where $\Hat{J}$ denotes the Jacobian of $\Hat{\bm{h}}$ and the contraction $\left(\nabla^2\Hat{\bm{h}}\left(\bm{x}\right)\right) W \Hat{\bm{h}}\left(\bm{x}\right)$ is defined componentwise as:
\begin{eqnarray}
\left(\nabla^2\Hat{\bm{h}}\left(\bm{x}\right)\right) W \Hat{\bm{h}}\left(\bm{x}\right)
= \sum_{k=1}^Q \left(W\Hat{\bm{h}}\left(\bm{x}\right)\right)_k \nabla^2 \left(\Hat{\bm{h}}\right)_k\left(\bm{x}\right), \nonumber
\end{eqnarray}
where $\left(\Hat{\bm{h}}\right)_k$ is the $k^{th}$ component of $\Hat{\bm{h}}$. One has straightforwardly that:
\begin{eqnarray}
\left\|\Hat{\theta}_M - \theta^{*}  \right\|_{2}^2 \leq\frac{\left\| \nabla  \Hat{\Tau}\left(\theta^{*}\right) \right\|_{2}^2}{\left|\left|\left|\nabla^2  \Hat{\Tau}\left(\tilde{\theta}\right) \right|\right|\right|^2}.\nonumber
\end{eqnarray}
Let us lower bound the denominator. Applying the inverse triangle inequality gives:
\[
\left|\left|\left| \nabla^2 \Hat{\Tau}(\mathbf{x}) \right|\right|\right|\ge 2 \left|\left( \left|\left|\left|\Hat{J}(\mathbf{x})^\top W \Hat{J}(\mathbf{x}) \right|\right|\right|-\left|\left|\left| (\nabla^2 \Hat{\mathbf{h}}(\mathbf{x})) W \Hat{\mathbf{h}}(\mathbf{x}) \right|\right|\right|\right)\right|.
\]
As $W$ is symmetric positive definite, one has:
\[
\left|\left|\left| \Hat{J}(\mathbf{x})^\top W \Hat{J}(\mathbf{x}) \right|\right|\right|\ge \sigma_{\min}(\Hat{J}(\mathbf{x})^\top W \Hat{J}(\mathbf{x})) = \sigma_{\min}(W) \, \sigma_{\min}(\Hat{J}(\mathbf{x}))^2.
\]
On the other hand:
\[
\left|\left|\left| (\nabla^2 \Hat{\mathbf{h}}(\mathbf{x})) W \Hat{\mathbf{h}}(\mathbf{x}) \right|\right|\right|
\le \left\|W\right\|_{\infty}\left\|\Hat{\bm{h}}\left(\bm{x}\right)\right\|_{\infty}\sum_{k=1}^Q    \left|\left|\left|\nabla^2 \left(\Hat{\bm{h}}\right)_k\left(\bm{x}\right) \right|\right|\right|.
\]
This can be rewritten as:
\[
\left|\left|\left| (\nabla^2 \Hat{\mathbf{h}}(\mathbf{x})) W \Hat{\mathbf{h}}(\mathbf{x}) \right|\right|\right|
\le \left\|W\right\|_{\infty}\left\|\Hat{\bm{h}}\left(\bm{x}\right)\right\|_{\infty}\Hat{H}_{Q}(\bm{x}),
\]
where $\Hat{H}_{Q}(\bm{x})=\sum_{k=1}^Q    \left|\left|\left|\nabla^2 \left(\Hat{\bm{h}}\right)_k\left(\bm{x}\right) \right|\right|\right|$. This leads to the following lower bound for any $\bm{x}\in \Xi$:

\[
\begin{aligned}
\left|\left|\left| \nabla^2 \Hat{\Tau}(\mathbf{x}) \right|\right|\right|
&\ge 2 \Big[ \sigma_{\min}(W) \, \sigma_{\min}(\Hat{J}(\mathbf{x}))^2 - \left\|W\right\|_{\infty}\left\|\Hat{\bm{h}}\left(\bm{x}\right)\right\|_{\infty}\Hat{H}_{Q}(\bm{x}) \Big]_+.
\end{aligned}
\]
where $\sigma_{\min}(.)$ is the smallest singular value and $[.]_{+}$ is the positive part.\\
On the other hand, after some algebra one has:
\begin{eqnarray}
\left\| \nabla  \Hat{\Tau}\left(\bm{x}\right) \right\|_{2} \leq 2 \left|\left|\left|\Hat{J}\left(\bm{x}\right)\right|\right|\right| \left|\left|\left|W\right|\right|\right| \left\|\Hat{\bm{h}}\left(\bm{x}\right) \right\|_2\nonumber
\end{eqnarray}
This leads to:
\begin{eqnarray}
\left\|\Hat{\theta}_M - \theta^{*}  \right\|_{2} \leq\frac{\left|\left|\left|\Hat{J}\left(\theta^{*}\right)\right|\right|\right| \left|\left|\left|W\right|\right|\right| \left\|\Hat{\bm{h}}\left(\theta^{*}\right) \right\|_{2}}{\Big[ \sigma_{\min}(W) \, \sigma_{\min}(\Hat{J}(\tilde{\theta}))^2 - \left\|W\right\|_{\infty}\left\|\Hat{\bm{h}}\left(\tilde{\theta}\right)\right\|_{\infty}\Hat{H}_{Q}(\tilde{\theta}) \Big]_+}.\nonumber
\end{eqnarray}
By direct computations:
\begin{eqnarray}
    \E\left[\left\|\Hat{\bm{h}}\left(\theta^{*}\right) \right\|_{2}^2 \right]= \frac{1}{M}\sum_{k=1}^Q\Var\left(\cos\left(\eta \tau_k\right) \right)\leq \frac{Q}{M}.\nonumber
\end{eqnarray}
As a result, there exists an explicit positive constant:
\begin{eqnarray}
    K_{W}:=\frac{\left|\left|\left|\Hat{J}\left(\theta^{*}\right)\right|\right|\right| }{\Big[ \sigma_{\min}(W) \, \sigma_{\min}(\Hat{J}(\tilde{\theta}))^2 - \left\|W\right\|_{\infty}\left\|\Hat{\bm{h}}\left(\tilde{\theta}\right)\right\|_{\infty}\Hat{H}_{Q}(\tilde{\theta}) \Big]_+}\nonumber
\end{eqnarray}
such that:
\begin{eqnarray}
\left\|\Hat{\theta}_M - \theta^{*}  \right\|_{\mathcal{L}^2} \leq K_{W} \left|\left|\left|W \right|\right|\right|\sqrt{\frac{Q}{M}}.\nonumber
\end{eqnarray}\\
\cqfd

\end{document}